    \newcommand{\href}[2]{#2}
\theoremstyle{definition}
\newtheorem{theorem}{Theorem}[section]
\newtheorem{lemma}[theorem]{Lemma}
\begin{document}

\begin{titlepage}
\title{Fast Arithmetic in Algorithmic Self-Assembly}
\author{
Alexandra Keenan\thanks{Department of Computer Science, University of Texas - Pan American,
      \protect\url{{abkeenan, rtschweller, mjsherman, zhongxingsi}@utpa.edu} This author's research was supported in part by National Science Foundation Grant CCF-1117672.}
\and
Robert Schweller\footnotemark[1]
\and
Michael Sherman\footnotemark[1]
\and
Xingsi Zhong\footnotemark[1]
}
\date{}

\maketitle
\thispagestyle{empty}

\begin{abstract}
In this paper we consider the time complexity of adding two $n$-bit numbers together within the tile self-assembly model.  The (abstract) tile assembly model is a mathematical model of self-assembly in which system components are square tiles with different glue types assigned to tile edges.  Assembly is driven by the attachment of singleton tiles to a growing seed assembly when the net force of glue attraction for a tile exceeds some fixed threshold.  Within this frame work, we examine the time complexity of computing the sum or product of 2 $n$-bit numbers, where the input numbers are encoded in an initial seed assembly, and the output is encoded in the final, terminal assembly of the system.  We show that the problems of addition and multiplication have worst case lower bounds of $\Omega( \sqrt{n} )$ in 2D assembly, and $\Omega(\sqrt[3]{n})$ in 3D assembly.  In the case of addition, we design algorithms for both 2D and 3D that meet this bound with worst case run times of $O(\sqrt{n})$ and $O(\sqrt[3]{n})$ respectively, which beats the previous best known upper bound of $O(n)$. Further, we consider average case complexity of addition over uniformly distributed $n$-bit strings and show how to achieve $O(\log n)$ average case time with a simultaneous $O(\sqrt{n})$ worst case run time in 2D.  For multiplication, we present an $O(n^{5/6})$ time multiplication algorithm which works in 3D, which beats the previous best known upper bound of $O(n)$.  As additional evidence for the speed of our algorithms, we implement our addition algorithms, along with the simpler $O(n)$ time addition algorithm, into a probabilistic run-time simulator and compare the timing results.

\end{abstract}
\end{titlepage}

\pagestyle{plain}


\section{Introduction.}

\begin{table}[t] \label{tab:results}\caption{Summary of results.}
	\centering
	\begin{tabular}{|l||c c|c|}
		\hline
		~						& \multicolumn{2}{|c|}{Worst Case}				                         & Average Case				\\
& UB & LB & \\\hline\hline
		Addition(2D)			& \multicolumn{2}{|c|}{$\Theta(\sqrt{n})$} 	 & $O(\log{n})$ 				\\
&(Thm.\ref{thm:combinedAddition}) & (Thm.\ref{thm:addition} ) & (Thm.\ref{thm:combinedAddition}) \\ \hline
		Addition(3D)			& \multicolumn{2}{|c|}{$\Theta(\sqrt[3]{n})$} 	 & $O(\log{n})$ 				\\
& (Thm.\ref{thm:combined3d} ) & (Thm. \ref{thm:addition} ) & (Thm.\ref{thm:combined3d})\\ \hline
        Multiplication (3D) & $O(n^{5/6})$ & $\Omega(n^{1/3})$ & - \\
        & (Thm.~\ref{thm:56multiplication}) & (Thm.~\ref{thm:multiplication}) & \\ \hline
		Previous Best Addition(2D)		& $O(n)$ (See \cite{BRUN2007})	& \textbf{-} &	\textbf{-}			\\
		Previous Best Multiplication(2D)	& $O(n)$ (See \cite{BRUN2007}) 	& \textbf{-}	&	\textbf{-}		\\
		\hline
	\end{tabular}
\end{table}



Self-assembly is the process by which systems of simple objects autonomously organize themselves through local interactions into larger, more complex objects.  Self-assembly processes are abundant in nature and serve as the basis for biological growth and replication.  Understanding how to design and efficiently program molecular self-assembly systems promises to be fundamental for the future of nanotechnology.  One particular direction of interest is the design of molecular computing systems for the efficient solution of fundamental computational problems.  In this paper we study the complexity of computing arithmetic primitives within a well studied model of algorithmic self-assembly, the abstract tile assembly model.

The abstract tile assembly model (aTAM) models system monomers with four sided Wang tiles with glue types assigned to each edge.  Assembly proceeds by tiles attaching, one by one, to a growing initial seed assembly whenever the net glue strength of attachment exceeds some fixed temperature threshold.  The aTAM has been shown to be capable of universal computation~\cite{Winf98}, and research leveraging this computational power has lead to efficient assembly of complex geometric shapes and patterns with a number of recent results in FOCS, SODA, and ICALP
\cite{RNaseSODA2010, BryChiDotKarSek10,CookFuSch11, USAreal, SFTSAFT, SS2013FEC, fu2012SAGT, ChaGopRei09, Dot10, KaoSchS08,DPR2013ICALP}. This universality also allows the model to serve directly as a model for computation in which an input bit string is encoded into an initial assembly. The process of self-assembly and the final produced terminal assembly represent the computation of a function on the given input.  Given this framework, it is natural to ask how fast a given function can be computed in this model.  Tile assembly systems can be designed to take advantage of massive parallelism when multiple tiles attach at distinct positions in parallel, opening the possibility for faster algorithms than what can be achieved in more traditional computational models.  On the other hand, tile assembly algorithms must use up geometric space to perform computation, and must pay substantial time costs when communicating information between two physically distant bits.  This creates a host of challenges unique to this physically motivated computational model that warrant careful study.

In this paper we consider the time complexity of adding or multiplying two $n$-bit numbers within the abstract tile assembly model.  We show that addition and multiplication have worst-case lower bounds of $\Omega(\sqrt{n})$ time in 2D and $\Omega(\sqrt[3]{n})$ time in 3D.  These lower bounds are derived by a reduction from a simple problem we term the \emph{communication} problem in which two distant bits must compute the AND function between themselves.  This general reduction technique can likely be applied to a number of problems and yields key insights into how one might design a sub-linear time solution to such problems.  We in turn show that for the problem of addition these lower bounds are matched by corresponding worst case $O(\sqrt{n})$ and $O(\sqrt[3]{n})$ run time algorithms, respectively, which improves upon the previous best known result of $O(n)$~\cite{BRUN2007}.  We then consider the average case complexity of addition given two uniformly generated random $n$-bit numbers and construct a $O(\log n)$ average case time algorithm that achieves simultaneous worst case run time $O(\sqrt{n})$ in 2D.  To the best of our knowledge this is the first tile assembly algorithm proposed for efficient average case adding.  Finally, we present a 3D algorithm that achieves $O(n^{5/6})$ time for multiplication which beats the previous fastest algorithm of time $O(n)$~\cite{BRUN2007} (which works in 2D).  We analyze our algorithms under two established timing models described in Section~\ref{sec:runTimeModels}.

Our results are summarized in Table~\ref{tab:results}. In addition to our analytical results, tile self-assembly software simulations were conducted to visualize the diverse approaches to fast arithmetic presented in this paper, as well as to compare them to previous work. The adder tile constructions described in Sections \ref{sec:avgcase}, \ref{sec:worstcase} and \ref{sec:combinedcase}, and the previous best known algorithm from~\cite{BRUN2007} were simulated using the two timing models described in Section \ref{sec:runTimeModels}.  These results can be seen in the graphs in Section \ref{sec:sim}.

\section{Definitions}

\subsection{Basic Notation.}
Let $\mathbb{N}_n$ denote the set $\{ 1, \ldots , n\}$ and let $\mathbb{Z}_n$ denote the set $\{0, \ldots ,n-1\}$.  Consider two points $p,q \in \mathbb{Z}^d$, $p = (p_1, \ldots p_d)$, $q = (q_1,\ldots , q_d)$. Define $\Delta_{p,q} \triangleq \max_{1\leq i \leq d} \{  |p_i - q_i| \}$.

\subsection{Abstract Tile Assembly Model.}
\begin{figure}
	\centering
	\subfigure[Incorrect binding.]{
		~~
		\includegraphics[scale=1.0]{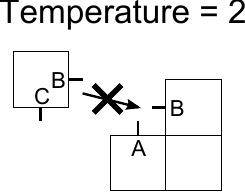}
		~~
	}
	\hspace{0.5 in}
	\subfigure[Correct binding.]{
		\includegraphics[scale=1.0]{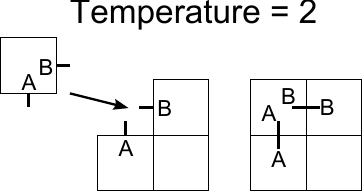}
	}
	\caption{Cooperative tile binding in the aTAM.}
	\label{fig:atam}
\end{figure}


\paragraph{Tiles.}
Consider some alphabet of symbols $\Pi$ called the \emph{glue types}.  A tile is a finite edge polygon (polyhedron in the case of a 3D generalization) with some finite subset of border points each assigned some glue type from $\Pi$.  Further, each glue type $g \in \Pi$ has some non-negative integer strength $str(g)$.  For each tile $t$ we also associate a finite string \emph{label} (typically ``0", or ``1", or the empty label in this paper), denoted by label($t$), which allows the classification of tiles by their labels.  In this paper we consider a special class of tiles that are unit squares (or unit cubes in 3D) of the same orientation with at most one glue type per face, with each glue being placed exactly in the center of the tile's face.  We denote the \emph{location} of a tile to be the point at the center of the square or cube tile.  In this paper we focus on tiles at integer locations.
%

\paragraph{Assemblies.}
An assembly is a finite set of tiles whose interiors do not overlap.  Further, to simplify formalization in this paper, we further require the center of each tile in an assembly to be an integer coordinate (or integer triplet in 3D).  If each tile in $A$ is a translation of some tile in a set of tiles $T$, we say that $A$ is an assembly over tile set $T$.   For a given assembly $\Upsilon$, define the \emph{bond graph} $G_\Upsilon$ to be the weighted graph in which each element of $\Upsilon$ is a vertex, and the weight of an edge between two tiles is the strength of the overlapping matching glue points between the two tiles.  Note that only overlapping glues that are the same type contribute a non-zero weight, whereas overlapping, non-equal glues always contribute zero weight to the bond graph.  The property that only equal glue types interact with each other is referred to as the \emph{diagonal glue function} property and is perhaps more feasible than more general glue functions for experimental implementation.  An assembly $\Upsilon$ is said to be \emph{$\tau$-stable} for an integer $\tau$ if the min-cut of $G_\Upsilon$ is at least $\tau$.

\paragraph{Tile Attachment.}
Given a tile $t$, an integer $\tau$, and a $\tau$-stable assembly $A$, we say that $t$ may attach to $A$ at temperature $\tau$ to form $A'$ if there exists a translation $t'$ of $t$ such that $A' = A \bigcup \{t'\}$, and $A'$ is $\tau$-stable. For a tile set $T$ we use notation $A \rightarrow_{T,\tau} A'$ to denote that there exists some $t\in T$ that may attach to $A$ to form $A'$ at temperature $\tau$.  When $T$ and $\tau$ are implied, we simply say $A \rightarrow A'$.  Further, we say that $A \rightarrow^*A'$ if there exists a finite sequence of assemblies $\langle A_1 \ldots A_k\rangle$ such that $A \rightarrow A_1 \rightarrow \ldots \rightarrow A_k \rightarrow A'$.


\paragraph{Tile Systems.}
A tile system $\Gamma = (T,S, \tau)$ is an ordered triplet consisting of a set of tiles $T$ referred to as the system's \emph{tile set}, a $\tau$-stable assembly $S$ referred to as the system's \emph{seed} assembly, and a positive integer $\tau$ referred to as the system's \emph{temperature}. A tile system $\Gamma = (T,S,\tau)$ has an associated set of \emph{producible} assemblies, $\texttt{PROD}_\Gamma$, which define what assemblies can grow from the initial seed $S$ by any sequence of temperature $\tau$ tile attachments from $T$.  Formally, $S \in \texttt{PROD}_\Gamma$ as a base case producible assembly.  Further, for any $A\in \texttt{PROD}_\Gamma$, if $A \rightarrow_{T,\tau} A'$, then $A' \in \texttt{PROD}_\Gamma$.  That is, assembly $S$ is producible, and for any producible assembly $A$, if $A$ can grow into $A'$, then $A'$ is also producible.  
We further define the set of \emph{terminal} assemblies $\texttt{TERM}_\Gamma$ to be the subset of $\texttt{PROD}_\Gamma$ containing all producible assemblies that have no attachable tile from $T$ at temperature $\tau$.  Conceptually, $\texttt{TERM}_\Gamma$ represents the final collection of output assemblies that are built from $\Gamma$ given enough time for all assemblies to reach a final, terminal state.  General tile systems may have terminal assembly sets containing 0, finite, or infinitely many distinct assemblies.  Systems with exactly 1 terminal assembly are said to be \emph{deterministic}.  For a deterministic tile system $\Gamma$, we say $\Gamma$ \emph{uniquely assembles} assembly $A$ if $\texttt{TERM}_\Gamma = \{ A \}$.  In this paper, we focus exclusively on deterministic systems.  For recent consideration of non-determinism in tile self-assembly see~\cite{ChaGopRei09,BryChiDotKarSek10,CookFuSch11,KaoSchS08,Dot10}.

\subsection{Problem Description.}

We now formalize what we mean for a tile self-assembly system to compute a function.  To do this we present the concept of a \emph{tile assembly computer} (TAC) which consists of a tile set and temperature parameter, along with input and output \emph{templates}.  The input template serves as a seed structure with a sequence of \emph{wildcard positions} for which tiles of label ``0" and ``1" may be placed to construct an initial seed assembly.  An output template is a sequence of points denoting locations for which the TAC, when grown from a filled in template, will place tiles with ``0" and ``1" labels that denote the output bit string.  A TAC then is said to compute a function $f$ if for any seed assembly derived by plugging in a bitstring $b$, the terminal assembly of the system with tile set $T$ and temperature $\tau$ will be such that the value of $f(b)$ is encoded in the sequence of tiles placed according to the locations of the output template.  We now develop the formal definition of the TAC concept.  We note that the formality in the input template is of substantial importance.  Simpler definitions which map seeds to input bit strings, and terminal assemblies to output bitstrings, are problematic in that they allow for the possibility of encoding the computation of function $f$ in the seed structure.  Even something as innocuous sounding as allowing more than a single type of ``0'' or ``1" tile as an input bit has the subtle issue of allowing pre-computing of $f$\footnote{This subtle issue seems to exist with some previous formulations of tile assembly computation.}.

\paragraph{Input Template.}
Consider a tile set $T$ containing exactly one tile $t_0$ with label ``0", and one tile $t_1$ with label ``1".  An $n$-bit input template over tile set $T$ is an ordered pair $U = (R, B(i))$, where $R$ is an assembly over $T-\{t_0, t_1\}$, $B: \mathbb{N}_n \rightarrow \mathbb{Z}^2$, and $B(i)$ is not the position of any tile in $R$ for any $i$ from 1 to $n$.  The sequence of $n$ coordinates denoted by $B$ conceptually denotes ``wildcard" tile positions for which copies of $t_0$ and $t_1$ will be filled in for any instance of the template.  
For notation we define assembly $U_{b}$ over $T$, for bit string $b = b_1 ,\ldots b_n$,  to be the assembly consisting of assembly $R$ unioned with a set of $n$ tiles $t^i$ for $i$ from 1 to $n$, where $t^i$ is equal a translation of tile $t_{b(i)}$ to position $B(i)$.  That is, $U_{b}$ is the assembly $R$ with each position $B(i)$ tiled with either $t_0$ or $t_1$ according to the value of $b_i$.


\paragraph{Output Template.}
A $k$-bit output template is simply a sequence of $k$ coordinates denoted by function $C: N_k \rightarrow \mathbb{Z}^2$.  For an output template $V$, an assembly $A$ over $T$ is said to represent binary string $c=c_1 , \ldots , c_k$ over template $V$ if the tile at position $C(i)$ in $A$ has label $c_i$ for all $i$ from 1 to $k$.  Note that output template solutions are much looser than input templates in that there may be multiple tiles with labels ``1" and ``0", and there are no restrictions on the assembly outside of the $k$ specified wildcard positions.  The strictness for the input template stems from the fact that the input must ``look the same" in all ways except for the explicit input bit patterns.  If this were not the case, it would likely be possible to encode the solution to the computational problem into the input template, resulting is a trivial solution.

\paragraph{Function Computing Problem.}
 A \emph{tile assembly computer} (TAC) is an ordered quadruple $\Im=(T, U, V, \tau)$ where $T$ is a tile set, $U$ is an $n$-bit input template, and $V$ is a $k$-bit output template.  A TAC is said to compute function $f: \mathbb{Z}_{2}^n \rightarrow \mathbb{Z}_{2}^k$ if for any $b \in \mathbb{Z}_{2}^n$ and $c \in \mathbb{Z}_{2}^k$ such that $f(b)=c$, then the tile system $\Gamma_{\Im,b} = (T, U_b , \tau)$ uniquely assembles an assembly $A$ which represents $c$ over template $V$.  For a TAC $\Im$ that computes the function $f: Z_{2}^{2n} \rightarrow Z_{2}^{n+1}$ where $f(r_1 \ldots r_{2n}) = r_1\ldots r_n + r_{n+1}\ldots r_{2n}$, we say that $\Im$ is an $n$-bit \emph{adder} TAC with inputs $a=r_1 \ldots r_n$ and $b=r_{n+1}\ldots r_{2n}$.  An $n$-bit \emph{multiplier} TAC is defined similarly.
 %
%
%
%


\subsection{Run Time Models}\label{sec:runTimeModels}
We analyze the complexity of self-assembly arithmetic under two established run time models for tile self-asembly: the \emph{parallel} time model~\cite{BeckerRR06,BRUN2007} and the \emph{continuous} time model~\cite{AdChGoHu01,ACGHKMR02,CGM04,BeckerRR06}.  Informally, the parallel time model simply adds, in parallel, all singleton tiles that are attachable to a given assembly within a single time step.  The continuous time model, in contrast, models the time taken for a single tile to attach as an exponentially distributed random variable.  The parallelism of the continuous time models stems from the fact that if an assembly has a large number of attachable positions, then the \emph{first} tile to attach will be an exponentially distributed random variable with rate proportional to the number of attachment sites, implying that a larger number of open positions will speed up the expected time for the next attachment.  Technical definitions for each run time model are provided in Section~\ref{app:runTimeModels}.  For a deterministic tile system $\Gamma$, we use notation $\rho_\Gamma$ and $\varsigma_\Gamma$ to denote the parallel and continuous run time of $\Gamma$ respectively.  When not otherwise specified, we use the term \emph{run time} to refer to parallel run time by default. 

\section{Lower Bound for Long Distance Communication}
%
To prove lower bounds for addition and multiplication in 2D and 3D, we do the following.  First, we consider two identical tile systems with the exception of their respective seed assemblies which differ in exactly one tile location.  We show in Lemma~\ref{lemma:diffProp} that after $\Delta$ time steps, all positions more than $\Delta$ distance from the point of initial difference of the assemblies must be identical among the two systems.  We then consider the \emph{communication} problem, formally defined in Section~\ref{sec:communicationProblem}, in which we compute the AND function of two input bits under the assumption that the input template for the problem separates the two bits by distance $\Delta$.  For such a problem, we know that the output position of the solution bit must be at least distance $\frac{1}{2} \Delta$ from one of the two input bits. As the correct output for the AND function must be a function of both bits, Lemma~\ref{lemma:diffProp} implies that at least $\frac{1}{2} \Delta$ steps are required to guarantee a correct solution as argued in Theorem~\ref{thm:communicationLower}.

With the lower bound of $\frac{1}{2}\Delta$ established for the communication problem, we move on to the problems of addition and multiplication of $n$-bit numbers.  We show how the communication problem can be reduced to these problems, thereby yielding corresponding lower bounds.  In particular, consider the addition problem in 2D.  As the input template must contain positions for $2n$ bits, in 2D it must be the case that some pair of bits are separated by at least $\Omega(\sqrt{n})$ distance according to Lemma~\ref{lemma:pointCrowding}.  Focusing on this pair of bit positions in the addition template, we can create a corresponding communication problem template with the same two positions as input.  To guarantee the correct output, we hard code the remaining bit positions of the addition template such that the addition algorithm is guaranteed to place the AND of the desired bit pair in a specific position in the output template, thereby constituting a solution to the $\Delta = \Omega(\sqrt{n})$ communication problem, which implies the addition solution cannot finish faster than $\Omega(\sqrt{n})$ in the worst case.  A similar reduction can be applied to multiplication.  The precise reductions are detailed in Theorems~\ref{thm:addition} and~\ref{thm:multiplication}.  The result statements are as follows.

\begin{theorem} \label{cor:addition}
Any $d$-dimension $n$-bit adder TAC has worst case run-time $\Omega(\sqrt[d]{n})$.
\end{theorem}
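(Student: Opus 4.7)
The plan is to reduce the long-distance communication problem to addition. Given any $d$-dimensional $n$-bit adder TAC $\Im = (T, U, V, \tau)$, I would first apply Lemma~\ref{lemma:pointCrowding} to the $2n$ wildcard positions $B(1), \ldots, B(2n)$ of the input template $U$: since these points lie in $\mathbb{Z}^d$, some pair $B(i), B(j)$ must be separated by distance $\Delta = \Omega(\sqrt[d]{n})$.

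Next, fix the values of the other $2n - 2$ input bits so that, for every assignment of the two wildcards at $B(i)$ and $B(j)$, some designated output position $C(k)$ of $V$ ends up with a tile whose label is the AND of those two wildcard bits. The resulting restricted system is then a TAC for the distance-$\Delta$ communication problem on inputs $B(i), B(j)$ with output at $C(k)$. By the triangle inequality for Chebyshev distance, $C(k)$ cannot lie within $\Delta/2$ of both input positions, so it is at distance at least $\Delta/2$ from one of them; Theorem~\ref{thm:communicationLower} then forces a worst-case parallel run time of $\Omega(\Delta) = \Omega(\sqrt[d]{n})$ on this hardcoded input. Since the adder realizes exactly this computation on the corresponding filled-in seed, the same lower bound applies to the adder's worst case.

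The main obstacle is exhibiting the hardcoding, and it splits into sub-cases based on whether the two wildcards sit in the same summand or in different summands, and on their relative bit significances. The cleanest case is when the two bits occupy the same significance $p$, one in $a$ and one in $b$: zeroing every other input bit makes the carry out of bit $p$ equal to the AND, so the output bit at significance $p+1$ holds the answer. When both wildcards are bits of $a$ at significances $p < q$, I would exploit the generate/propagate structure of ripple-carry addition: hardcode $b_p = 1$, $a_r = 1$ for each $p < r < q$, and every other input bit to $0$. A direct bit-level check then shows that bit $p$ is a carry generator emitting carry $a_p$, bits $p+1, \ldots, q-1$ are pure propagators so bit $q$ receives carry $a_p$, and bit $q$ itself emits carry $a_p \wedge a_q$, which appears as the output at significance $q+1$. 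The remaining sub-cases (both bits in $b$, or one bit from each summand at unequal significance) are handled by symmetric hardcodings, so some output bit always encodes the AND, completing the reduction and hence the proof.
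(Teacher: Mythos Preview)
Your reduction is correct and follows the same overall strategy as the paper: locate two far-apart input bit positions via Lemma~\ref{lemma:pointCrowding}, hardcode the remaining bits so that some output position realizes the AND of the two wildcards, and invoke Theorem~\ref{thm:communicationLower}. Your sub-case hardcodings all check out.

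The one place you diverge from the paper is the case analysis. Lemma~\ref{lemma:pointCrowding} as stated takes two disjoint $n$-point sets $A$ and $B$ and returns a far pair with one point in each set. The paper exploits this by taking $A$ to be the $n$ template positions for operand~$a$ and $B$ the $n$ positions for operand~$b$; the lemma then hands back bit positions $a_i$ and $b_j$ already lying in different summands, so only the single ``propagate a carry from significance $\min(i,j)$ up to $\max(i,j)$'' hardcoding is needed (with an implicit symmetry swap if $j>i$). By contrast, you treat the lemma as merely asserting that some two of the $2n$ positions are far apart, which forces you into the ``both in $a$'', ``both in $b$'', and mixed-significance sub-cases. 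Your route is perfectly valid and arguably more self-contained, but applying the lemma with the operand partition collapses everything to one case. Also, the triangle-inequality observation about the output position being at least $\Delta/2$ from one input is already baked into the proof of Theorem~\ref{thm:communicationLower}, so you need not re-derive it.
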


\begin{theorem}\label{cor:multiplication}
Any $d$-dimension $n$-bit multiplier TAC has worst case run-time $\Omega(\sqrt[d]{n})$.
\end{theorem}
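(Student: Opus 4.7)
The plan is to adapt the reduction from the communication problem that was used for addition. Given any $d$-dimensional $n$-bit multiplier TAC $\Im=(T,U,V,\tau)$, the strategy is to construct a derived TAC $\Im'$ that solves the communication problem between two input bits separated by distance $\Omega(n^{1/d})$; by Theorem~\ref{thm:communicationLower} this forces run-time $\Omega(n^{1/d})$ on $\Im'$, and since every run of $\Im'$ is a run of $\Im$ with certain input bits hard-coded, the same bound transfers to $\Im$.

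First apply Lemma~\ref{lemma:pointCrowding} to the $n$ wildcard positions of $U$ that encode the bits of the first factor $a$; this yields two positions $p, q$ corresponding to bits of $a$ with place values $2^i$ and $2^j$ (with $i < j$), whose distance is $\Delta = \Omega(n^{1/d})$. The crucial algebraic step is to hard-code the remainder of the input so that a single known output bit of $a\cdot b$ computes the AND of these two bits. Set $b := 1 + 2^{j-i}$, hard-coding every wildcard of $b$, and hard-code every bit of $a$ other than the two chosen ones to $0$. Writing the two chosen bit values as $x$ and $y$, one then has $a = x\,2^i + y\,2^j$ and
\[
a\cdot b \;=\; x\,2^i \;+\; (x+y)\,2^j \;+\; y\,2^{2j-i}.
\]
Since $j-i=\Omega(n^{1/d})\geq 2$ for $n$ sufficiently large, the integers $i$, $j+1$, and $2j-i$ are pairwise distinct, and a case-check over $(x,y)\in\{0,1\}^2$ shows that bit $j+1$ of $a\cdot b$ equals exactly $x\wedge y$ (the only nonzero contribution at position $j+1$ comes from the carry when $x=y=1$).

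Now assemble $\Im'$ with the same tile set $T$ and temperature $\tau$: its input template consists of the $R$-assembly of $U$ augmented with the unique label-``0'' or label-``1'' tile at every wildcard of $U$ except $p$ and $q$, reflecting the hard-coded bit values above, and its output template is the single coordinate $C(j+1)$ drawn from $V$. For any input bits $x,y\in\{0,1\}$ placed at $p,q$, the resulting seed is exactly the seed of $\Im$ on a specific bitstring, so by the algebra above the unique terminal assembly of $\Im$ places a tile with label $x\wedge y$ at coordinate $C(j+1)$. Hence $\Im'$ is a valid communication TAC on inputs at distance $\Delta$ whose run-time is bounded above by that of $\Im$, and Theorem~\ref{thm:communicationLower} yields the desired $\Omega(n^{1/d})$ bound. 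The main obstacle is the arithmetic verification --- ensuring that no carry or coincident power of two contaminates bit $j+1$ --- which fails only when $j-i=1$; this case is excluded by the distance guarantee for all but finitely many $n$, and those finitely many values are absorbed into the asymptotic bound. A symmetric construction (swapping the roles of $a$ and $b$) handles the case in which the far pair of wildcards comes from $b$ rather than $a$, though restricting pointCrowding to a single factor already suffices.
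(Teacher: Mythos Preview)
Your reduction strategy is right, but there is a genuine gap in the arithmetic step. You write ``Since $j-i=\Omega(n^{1/d})\geq 2$ for $n$ sufficiently large,'' but this conflates two unrelated quantities: $\Delta$ is the \emph{geometric} distance between the wildcard positions $p$ and $q$ in $\mathbb{Z}^d$, while $j-i$ is the difference of the \emph{bit indices} those positions happen to encode. Lemma~\ref{lemma:pointCrowding} guarantees only that $\Delta$ is large; nothing prevents an input template from placing bit $a_i$ and bit $a_{i+1}$ at opposite corners of a $\sqrt{n}\times\sqrt{n}$ square. In that situation $j-i=1$, your chosen $b=1+2^{j-i}=3$ gives $2j-i=j+1$, and bit $j+1$ of the product no longer computes $x\wedge y$ (for $x=0,y=1$ it is $1$, for $x=y=1$ it is $0$). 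So the case you dismiss as ``excluded by the distance guarantee'' is not excluded at all, and your hard-coding can fail for infinitely many $n$.

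The paper sidesteps this entirely by applying Lemma~\ref{lemma:pointCrowding} across the two factors rather than within one: it takes $A$ to be the $n$ wildcard positions of $a$ and $B$ to be those of $b$, obtaining a far pair $W(i)$, $W(n+j)$ with one bit in each factor. Hard-coding every other bit to $0$ then gives $a=a_i2^{i-1}$, $b=b_j2^{j-1}$, and $ab=a_ib_j2^{i+j-2}$, so a single output bit is exactly $a_i\wedge b_j$ with no carries to worry about, regardless of the indices $i,j$. Your framework is sound; replacing the within-$a$ pair by a cross-factor pair fixes the argument and in fact makes it shorter.
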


\section{Addition In Average Case Logarithmic Time} \label{sec:avgcase}
\begin{figure}[htp]
	\centering
	\includegraphics[scale=1.5]{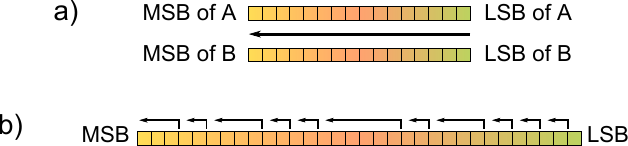}
	\caption{Arrows represent carry origination and propagation direction. a) This schematic represents the previously described $O(n)$ worst case addition for addends $A$ and $B$~\cite{BRUN2007}. The least significant and most significant bits of $A$ and $B$ are denoted by LSB and MSB, respectively. b) The average case $O(\log{n})$ construction described in this paper is shown here. Addends $A$ and $B$ populate the linear assembly with bit $A_i$ immediatly adjacent to $B_i$. Carry propagation is done in parallel along the length of the assembly.}
	\label{fig:coolfig_avgcase}
\end{figure}
We construct an adder TAC that resembles an electronic carry-skip adder in that the carry-out bit for addend pairs where each addend in the pair has the same bit value is generated in a constant number of steps and immediately propagated. When each addend in a pair of addends does not have the same bit value, a carry-out cannot be deduced until the value of the carry-in to the pair of addends is known. When such addends combinations occur in a contiguous sequence, the carry must ripple through the sequence from right-to-left, one step at a time as each position is evaluated. Within these worst-case sequences, our construction resembles an electronic ripple-carry adder. We show that using this approach it is possible to construct an $n$-bit adder TAC that can perform addition with an average runtime of $O(\log{n})$ and a worst-case runtime of $O(n)$.

\begin{lemma} \label{lem:longest_substring}
Consider a non-negative integer $N$ generated uniformly at random from the set $\{0,1,...,2^{n}-1\}$.  The expected length of the longest substring of contiguous ones in the binary expansion of $N$ is $O(\log{n})$.
\end{lemma}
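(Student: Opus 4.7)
Let $L$ denote the length of the longest run of contiguous ones in the binary expansion of $N$, where $N$ is drawn uniformly from $\{0,1,\ldots,2^n-1\}$. Equivalently, the $n$ bits $b_1,\ldots,b_n$ of $N$ are i.i.d.\ uniform in $\{0,1\}$, and we want to bound $E[L]$. The plan is the standard first-moment / union-bound argument for the longest run, split at the threshold $k \approx \log_2 n$.

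First I would express the expectation via the tail-sum identity
\[
E[L] \;=\; \sum_{k=1}^{n} \Pr[L \geq k].
\]
To control the tail, I would use a union bound: the event $\{L \geq k\}$ occurs only if some length-$k$ window $(b_i, b_{i+1}, \ldots, b_{i+k-1})$ is all ones, and there are at most $n-k+1 \leq n$ such windows, each all-ones with probability exactly $2^{-k}$. This yields the clean bound
\[
\Pr[L \geq k] \;\leq\; n \cdot 2^{-k}.
\]

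Next I would split the tail sum at the natural threshold $k_0 = \lceil \log_2 n \rceil$. For $k \leq k_0$ I bound each term trivially by $1$, contributing $k_0 \leq \log_2 n + 1$ in total. For $k > k_0$ I apply the union bound estimate and evaluate a geometric series:
\[
\sum_{k = k_0 + 1}^{n} \Pr[L \geq k] \;\leq\; \sum_{k = k_0 + 1}^{\infty} n \cdot 2^{-k} \;=\; n \cdot 2^{-k_0} \;\leq\; 1.
\]
Combining the two pieces gives $E[L] \leq \log_2 n + 2 = O(\log n)$, which is the claim.

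There is essentially no obstacle here: the argument is the textbook longest-run calculation, and the only minor care needed is the choice of splitting point $k_0$, which is forced by balancing $n \cdot 2^{-k}$ against $1$. No independence properties beyond those of uniform bits are used, and no structure of the self-assembly model enters — the lemma is purely a statement about random binary strings that we invoke in the subsequent runtime analysis.
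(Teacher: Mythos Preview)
Your argument is correct: the tail-sum identity, the union bound $\Pr[L\geq k]\leq n\cdot 2^{-k}$, and the split at $k_0=\lceil\log_2 n\rceil$ combine exactly as you state to give $E[L]\leq \log_2 n + 2$. There is nothing to compare against, since the paper does not actually prove this lemma but simply defers to Schilling~\cite{SCHILL1990}; you have supplied a self-contained proof where the paper gives only a citation.
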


For a detailed discussion of Lemma \ref{lem:longest_substring} please see Schilling\cite{SCHILL1990}.

\begin{theorem} \label{thm:avg_case_adder}
For any positive integer $n$, there exists an $n$-bit adder TAC (tile assembly computer) that has worst case run time $O(n)$ and an average case run time of $O(\log{n})$.
\end{theorem}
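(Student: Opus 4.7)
My plan is to build an adder TAC that behaves like an electronic carry-skip adder: in one parallel step each bit pair locally decides whether it \emph{generates}, \emph{absorbs}, or \emph{propagates} a carry, and then all maximal runs of propagate positions process their carries in parallel. Concretely, the input template $U$ arranges the $2n$ wildcard positions in a row so that for each $i$ the positions $B(2i-1)$ and $B(2i)$, which receive the two tiles representing $A_i$ and $B_i$, are horizontally adjacent; the surrounding frame $R$ hard-codes a carry-in $c_0 = 0$ on the least-significant end. Directly above each pair I place a \emph{classifier} row using four tile types distinguished by $\tau = 2$ cooperation between the top glues of $t_0, t_1$: for $(A_i, B_i) = (0,0)$ or $(1,1)$ the classifier exposes a ``generate-0'' or ``generate-1'' glue on top, and for the two mismatched cases it exposes a common ``propagate'' glue, together with the bit $A_i \oplus B_i$ on another edge for later use.

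Above the classifiers sits a \emph{carry row}. Over a generate classifier a carry tile attaches using only the strength-$\tau$ glue below, so it requires no cooperation and immediately publishes its hard-coded carry-out to the more-significant side. Over a propagate classifier, two possible carry tiles (one per carry-in value) may attach via $\tau = 2$ cooperation between the classifier-glue below and the carry-out glue arriving from the less-significant side; each simply forwards the carry-in to the more-significant side. A final sum-row tile combines the XOR glue on the classifier with the carry glue on the carry tile to place a labelled $0/1$ tile at the output-template coordinate $C(i)$, and a dedicated tile at the most-significant end reads the last carry to write $C(n+1)$. The classifier and sum rows complete in a constant number of parallel steps; in the worst case all $n$ positions are propagate, so the carries form a single chain of length $n$ and each link is a constant-size attachment, giving $\rho_\Gamma = O(n)$.

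For the average case, let $p_i = 1$ iff position $i$ is propagate. Since $A$ and $B$ are independent uniform $n$-bit strings, the bits $p_i = A_i \oplus B_i$ are i.i.d.\ uniform over $\{0,1\}$, so $p$ is distributed as the binary expansion of a uniform element of $\{0, \ldots, 2^n - 1\}$. Every maximal run of $1$s in $p$ is preceded on its less-significant side either by a generate classifier, whose carry-out is already exposed after $O(1)$ parallel steps, or by the frame-supplied $c_0 = 0$. In either case the initial carry-in glue reaches the run in $O(1)$ steps, after which the run fills one tile per parallel step. Since all maximal runs evolve independently in parallel, the parallel runtime is $O(1) + O(L)$, where $L$ is the longest contiguous run of $1$s in $p$. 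Applying Lemma~\ref{lem:longest_substring} to $p$ yields $E[L] = O(\log n)$, giving the claimed average-case bound.

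The main obstacle I expect is not the timing argument itself but the tile-level bookkeeping needed to respect the input-template restriction that only a single tile $t_0$ and a single tile $t_1$ represent input bits, together with ensuring determinism. This forces classifier cooperation to use exactly the top glues of $t_0$ and $t_1$, and on the carry row it forces propagate attachments to be cooperative (so no carry tile attaches before its carry-in is present) while generate attachments must be non-cooperative (so they fire immediately). Once the four classifier tile types, the two generate and two propagate carry tile types, and the sum-row tile types are fixed, verifying that $\Gamma_{\Im, b}$ uniquely assembles the assembly representing $A + B$ over $V$ reduces to routine case analysis over the four local patterns $(A_i, B_i)$ and the two possible incoming carries.
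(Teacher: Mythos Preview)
Your proposal is correct and follows essentially the same approach as the paper: a carry-skip adder in which $(0,0)$ and $(1,1)$ pairs generate their carry-out in $O(1)$ parallel steps while $(0,1)/(1,0)$ pairs propagate, so that the parallel run time is governed by the longest maximal run of propagate positions, which is $O(n)$ in the worst case and $O(\log n)$ on average by Lemma~\ref{lem:longest_substring}. The paper's construction differs only in layout details (three-tile blocks with an $S$ spacer per bit pair rather than your adjacent-pair-plus-classifier arrangement), and one minor indexing slip on your end: under the paper's template convention the bit $a_i$ lands at $B(i)$ and $b_i$ at $B(n+i)$, so it is those two positions you want adjacent, not $B(2i-1)$ and $B(2i)$.
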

The proof of Theorem \ref{thm:avg_case_adder} follows from the construction of the adder in Appendices \ref{sec:avgcase_construction}, \ref{sec:avgcase_time}, and \ref{sec:avgcase_correct}.

\section{Optimal $O(\sqrt{n})$ Addition}\label{sec:worstcase}
We show how to construct an adder TAC that achieves a run time of $O(\sqrt{n})$, which matches the lower bound proved in Theorem \ref{thm:addition}.  This adder TAC closely resembles an electronic carry-select adder in that the addends are divided into sections of size $\sqrt{n}$ and the sum of the addends comprising each is computed for both possible carry-in values. The correct result for the subsection is then selected after a carry-out has been propagated from the previous subsection.  Within each subsection, the addition scheme resembles a ripple-carry adder. This construction works well with massive parallelism and allows us to construct an optimal $O(\sqrt{n})$ adder TAC in two dimensions.

\begin{theorem}\label{thm:worstcase2d}
There exists a 2D $n$-bit adder TAC with a worst case run-time of $O(\sqrt{n})$.
\end{theorem}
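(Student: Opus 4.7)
The plan is to realize the carry-select adder architecture as a tile assembly system operating on a two-dimensional seed whose aspect ratio is roughly $\sqrt{n} \times \sqrt{n}$. First, I would design the input template $U=(R,B)$ so that the $2n$ wildcard positions $B(1),\ldots,B(2n)$ are laid out in $\sqrt{n}$ parallel horizontal rows, with row $j$ holding the $\sqrt{n}$ bit-pairs $(a_{(j-1)\sqrt{n}+i}, b_{(j-1)\sqrt{n}+i})$ for $1 \le i \le \sqrt{n}$. The scaffold $R$ is a fixed frame that (i) exposes a start glue adjacent to every bit-pair so computation can begin simultaneously on every row, (ii) provides a vertical channel along the right-hand side for inter-block carry propagation, and (iii) reserves a horizontal output channel where the $V$-template positions for the final $n+1$ output bits lie.

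Next, I would construct the tile set $T$ in three functional groups. The first group is a ripple-carry gadget, replicated twice within each row: for each row, two parallel strips of tiles grow immediately above and below the input bits, one assuming carry-in $0$ and the other assuming carry-in $1$. Each strip produces all $\sqrt{n}$ candidate sum bits for its row together with the corresponding candidate carry-out of the block. The second group is a carry-select chain occupying the right-hand vertical channel: upon receiving the true carry-in of block $j$ from block $j-1$, its tiles use the two pre-computed candidate carry-outs of block $j$ to emit the true carry-in for block $j+1$ in $O(1)$ attachments. The third group is a selector gadget that, once the true carry-in for a row is known, sweeps across that row choosing between the two candidate sum bits and depositing the final bits at the positions dictated by the output template $V$.

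The parallel run-time analysis then proceeds in three phases. Phase one runs the two ripple-carry computations inside each block in parallel across all $\sqrt{n}$ blocks; the longest dependency chain has length $O(\sqrt{n})$, so by the parallel time model this phase takes $O(\sqrt{n})$ steps. Phase two is the carry-select chain, of length $\sqrt{n}$ with constant latency per block, for another $O(\sqrt{n})$ steps. Phase three is the per-row selection sweep of length $\sqrt{n}$, again $O(\sqrt{n})$ steps and fully parallel across rows. Summing the three phases yields total parallel time $O(\sqrt{n})$. Correctness follows from a straightforward induction on the block index showing that the carry propagated by the right-hand channel is exactly the true carry into each block, so the selected bits form the binary sum of the two inputs; determinism is enforced by making every glue cooperative with a scaffold glue, so no tile attachment is ever ambiguous.

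The main obstacle I anticipate is the joint geometric design of $R$ and the three gadget families rather than any deep algorithmic insight. Because the template $R$ must be identical across all $2^{2n}$ possible inputs and the bit tiles $t_0,t_1$ are the only labeled tiles allowed by the TAC formalism, every start glue, channel, and routing tile must be pre-placed without giving any path along which the sum could be pre-computed. Ensuring that phase two waits on the candidate carry-outs produced by phase one while phase three is not stalled by phase two, and that the two parallel ripple strips within one row never collide with those of the neighboring rows, will require a careful block layout with buffer columns and a fixed orientation convention on glue types. Once the gadgets interlock cleanly at the block boundaries the correctness and timing statements reduce to routine checks, which I would defer to an appendix analogous to the one supporting Theorem~\ref{thm:avg_case_adder}.
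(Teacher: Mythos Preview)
Your proposal is correct and follows essentially the same carry-select strategy as the paper: a $\sqrt{n}\times\sqrt{n}$ layout, per-block computation of both carry-in hypotheses in $O(\sqrt{n})$ time, a length-$\sqrt{n}$ inter-block carry chain, and a final $O(\sqrt{n})$ selection sweep. The only cosmetic difference is that the paper computes a single ripple sum with carry-in $0$ and then runs an \emph{increment} layer to obtain the carry-in $1$ case, whereas you run two independent ripple strips; the paper also routes the inter-block carry along the west edge rather than the east, but neither variation affects the argument or the $O(\sqrt{n})$ bound.
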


The proof of Theorem \ref{thm:worstcase2d} follows from the construction of the tile assembly adder in Section \ref{sec:worst_construction_overview} and Appendices \ref{sec:worst_construction}, \ref{sec:worst_time}, and \ref{sec:worst_correct}.

\subsection{Construction Overview} \label{sec:worst_construction_overview}
Due to space limitations, an overview of the construction is included in this section. A more detailed description of the adder construction (along with the full tile set) can be found in Appendix \ref{sec:worst_construction}. Figures \ref{fig:addition_input_template_o}-\ref{fig:addition_output_template_o} are examples of I/O templates for a $9$-bit adder TAC. The inputs to the addition problem in this instance are two 9-bit binary numbers $A$ and $B$ with the least significant bit of $A$ and $B$ represented by $A_{0}$ and $B_{0}$, respectively. Upon inclusion of the seed assembly (Figure \ref{fig:worstcase_addition_o}a) to the tile set (Figure \ref{fig:worsttileset}), each pair of bits from $A$ and $B$ are summed (for example, $A_{0}+B_{0}$) (Figure \ref{fig:worstcase_addition_o}b).  Each row computes this addition step independently and outputs a carry or no-carry west face glue on the westernmost tile of each row (Figure \ref{fig:worstcase_addition_o}c). As the addition tiles bind to the seed, tiles from the incrementation tile set (Figure \ref{fig:worsttileset}c) may also begin to attach. The purpose of the incrementation tiles is to determine the sum for each $A$ and $B$ bit pair in the event of a no-carry from the row below and in the event of a carry from the row below (Figure \ref{fig:worstcase_increment_o}). The final step of the addition mechanism presented here propagates carry or no-carry information northwards from the southernmost row of the assembly. When the carry propagation column reaches the top of the assembly, the most significant bit of the sum may be determined and the calculation is complete (Figure \ref{fig:worstcase_carryprop_o}a-f).
\begin{figure}[htp]
	\centering
	\subfigure[Addition input template.]{
		\includegraphics[scale=0.80]{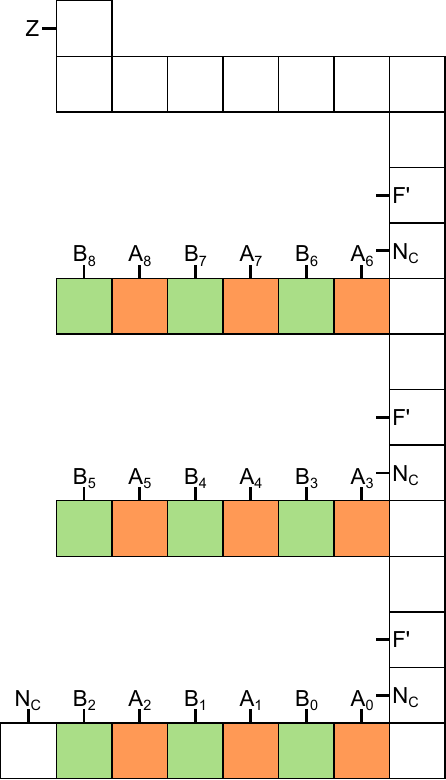}
		\label{fig:addition_input_template_o}
	}
	\subfigure[Addition output template.]{
		\includegraphics[scale=0.80]{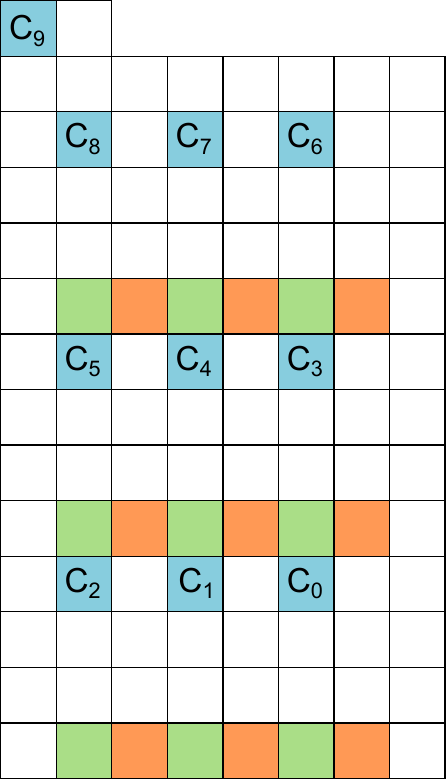}
		\label{fig:addition_output_template_o}
	}
\caption{These are example I/O templates for the worst case $O(\sqrt{n})$ time addition introduced in Section \ref{sec:worstcase}.}
\end{figure}

\begin{figure}[htp]
	\centering
	\includegraphics[scale=0.80]{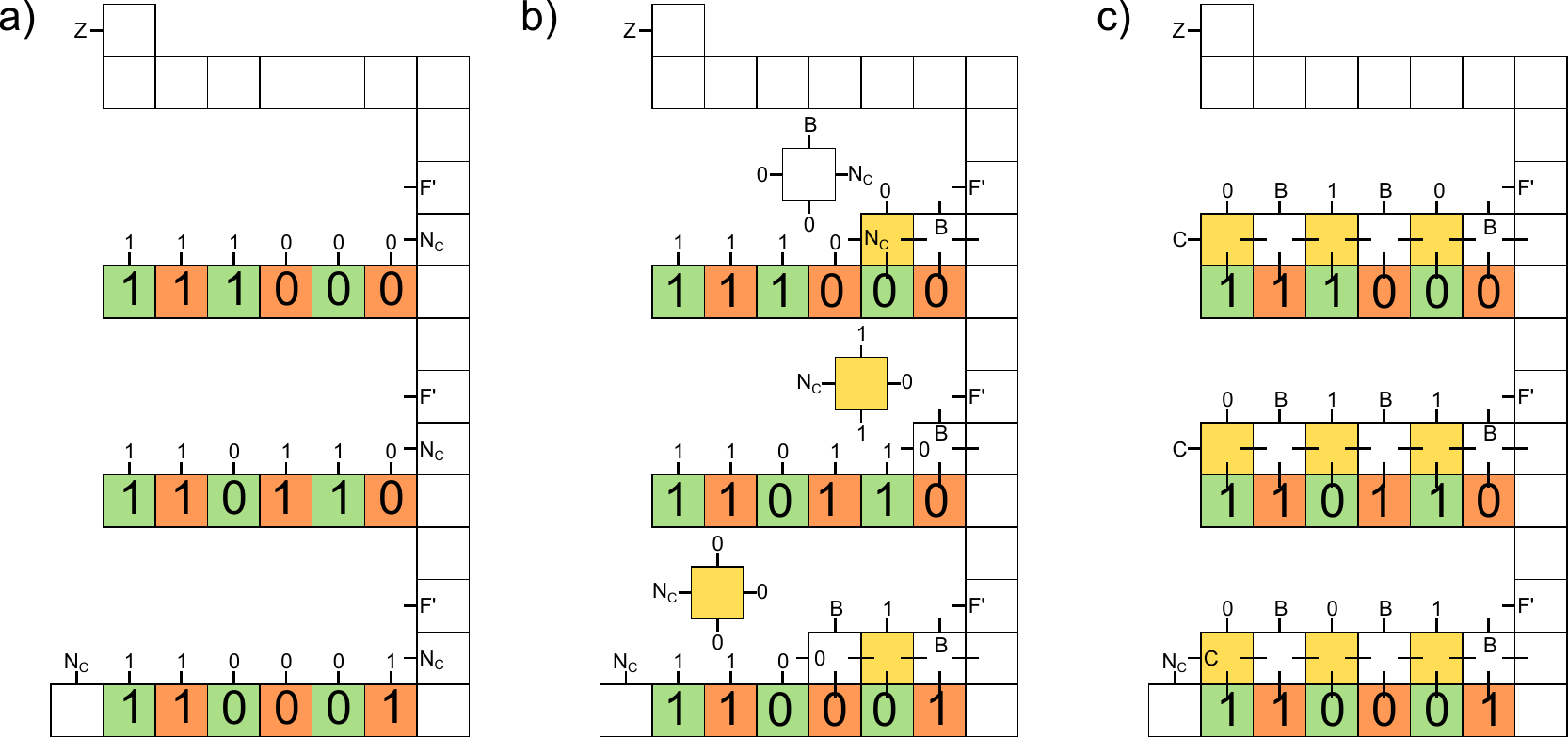}
	\caption{Step 1: Addition.}
	\label{fig:worstcase_addition_o}
\end{figure}

\begin{figure}[htp]
	\centering
	\includegraphics[scale=0.80]{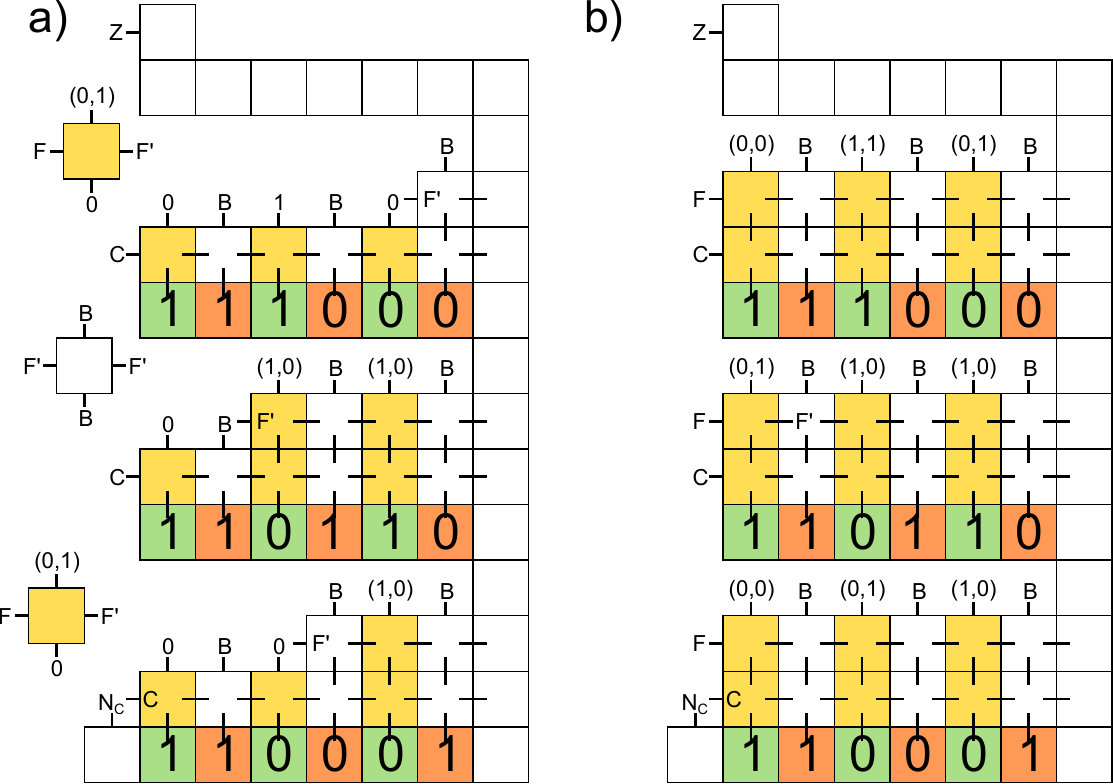}
	\caption{Step 2: Increment.}
	\label{fig:worstcase_increment_o}
\end{figure}

\begin{figure}[htp]
	\centering
	\includegraphics[scale=0.80]{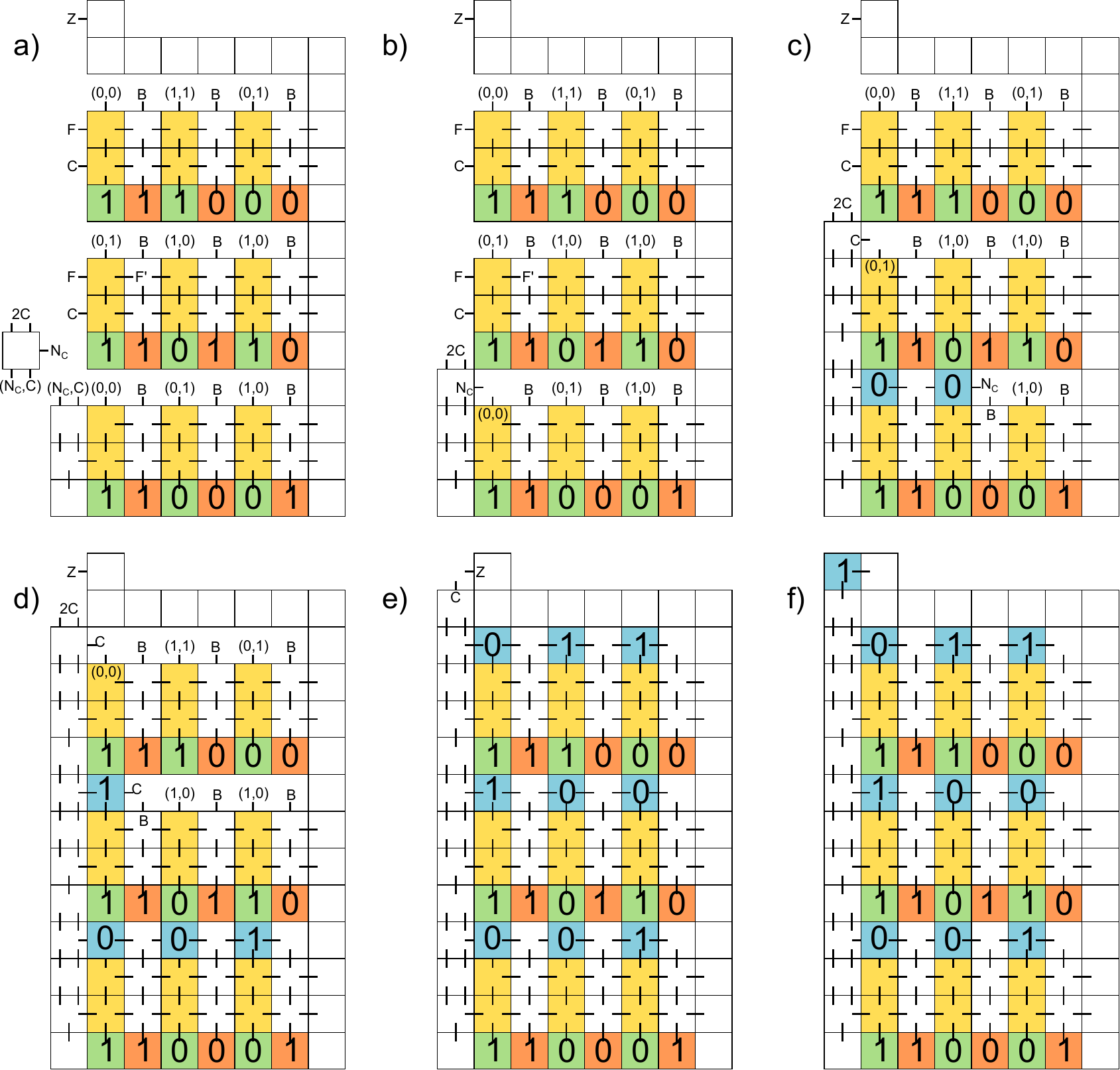}
	\caption{Step 3: Carry Propagation and Output.}
	\label{fig:worstcase_carryprop_o}
\end{figure}
\section{Towards Faster Addition} \label{sec:combinedcase}
\begin{figure}[htp]
	\centering
	\subfigure[]{
		~~~~~~
		\includegraphics[scale=1.3]{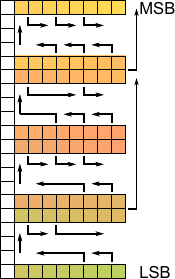}
		\label{fig:coolfig_combinedcase}
		~~~~~~
	}
	~~~~~~~~
	\subfigure[]{
		~~~~~~
		\includegraphics[scale=1.3]{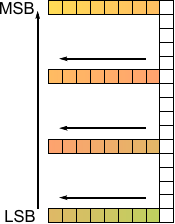}
		\label{fig:coolfig_worstcase}
		~~~~~~
	}
	\caption{Arrows represent carry origination and direction of propagation for a) the $O(\log{n})$ average case, the $O(\sqrt{n})$ worst case combined construction, and b) the $O(\sqrt{n})$ worst case construction.}
\end{figure}
In this section we combine the approaches described in Sections \ref{sec:avgcase} and \ref{sec:worstcase} in order to achieve both $O(\log{n})$ average case addition and $O(\sqrt{n})$ worst case addition. This construction resembles the construction described in Section \ref{sec:worstcase} in that the numbers to be added are divided into sections and values are computed for both possible carry-in bit values. Additionally, the construction described here lowers the average case run time by utilizing the carry-skip mechanism described in Section \ref{sec:avgcase} within each section and between sections.

\begin{theorem} \label{thm:combinedAddition}
There exists a 2-dimensional $n$-bit adder TAC with an average run-time of $O(\log{n})$ and a worst case run-time of $O(\sqrt{n})$.
\end{theorem}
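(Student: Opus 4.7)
The plan is to combine the sectioned speculative structure from Section~\ref{sec:worstcase} with the carry-skip mechanism from Section~\ref{sec:avgcase}, applying the skip idea both within and between sections. I would partition the $n$ bit positions into $\sqrt{n}$ contiguous sections of $\sqrt{n}$ bit pairs each and, within each section, speculatively compute both conditional sums (one assuming carry-in $0$ and one assuming carry-in $1$) as in Section~\ref{sec:worstcase}. The key modification is that within each section I would replace the ripple used there with the carry-skip tiles underlying Theorem~\ref{thm:avg_case_adder}: a bit pair with equal values immediately resolves into a generate or kill signal, while unequal (propagate) pairs must wait for a ripple through the contiguous propagate substring they lie in. The same pattern is then lifted to the section level, with each section acting as a ``super-bit'' that is in the propagate state precisely when all $\sqrt{n}$ of its bit pairs are propagate; an inter-section skip track then lets the carry bypass such a super-propagating section in $O(1)$.

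Correctness is inherited from the constructions underlying Theorems~\ref{thm:avg_case_adder} and~\ref{thm:worstcase2d}, since each conditional sum inside a section is computed as in the Section~\ref{sec:avgcase} adder and the final sum-bit selection happens as in Section~\ref{sec:worstcase}. For the worst-case bound, the intra-section computation in each section is bounded by its width, $O(\sqrt{n})$, and the inter-section skip signal traverses at most $\sqrt{n}$ section boundaries at $O(1)$ each, so the total is $O(\sqrt{n})$, matching Theorem~\ref{thm:worstcase2d}.

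For the average-case bound, the intra-section time inside section $i$ is dominated by the longest run of propagate positions in that section; since any such intra-section run is contained in a run of propagate positions in the whole $n$-bit input, the maximum over all sections is bounded above by the global longest propagate run, which by Lemma~\ref{lem:longest_substring} has expected value $O(\log n)$. For the inter-section time, the probability that a specific section is super-propagating is $2^{-\sqrt{n}}$, so the expected number of super-propagating sections is $\sqrt{n}\cdot 2^{-\sqrt{n}} = o(1)$; with high probability there are no super-propagating sections at all, and the expected length of the longest super-propagating run is $O(1)$, so the inter-section skip track resolves every section's carry-in in constant expected additional time. Combining these contributions yields the $O(\log n)$ average-case bound.

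The main obstacle is engineering a deterministic tile set that cleanly interleaves the intra-section and inter-section carry-skip paths: each section's super-bit summary (generate/kill/propagate) must become available in time for the inter-section skip track to consume it, and the skip track's commit of each carry-in must precede the final selection of sum-bit tiles in that section. Verifying that no race between these two paths causes non-deterministic attachment, and that the stated bounds carry over to the continuous timing model, will require the kind of careful dependency argument used for Theorem~\ref{thm:worstcase2d}.
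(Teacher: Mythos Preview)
Your proposal is correct and follows essentially the same plan as the paper: partition the input into $\Theta(\sqrt{n})$ sections, run the carry-skip mechanism of Section~\ref{sec:avgcase} inside each section (with a speculative carry-in), and layer a second carry-skip across the section boundaries so that the intra-section phase is governed by the global longest propagate run and the inter-section phase by the longest run of ``propagating'' sections. The one minor divergence is in the inter-section analysis: the paper bounds that phase by the longest $(0,1)/(1,0)$ run among the \emph{MSB addend-pairs} of the sections (giving $O(\log n)$ expected via Lemma~\ref{lem:longest_substring}), whereas your stricter ``all pairs propagate'' criterion for a super-propagating section gives an $O(1)$ expected inter-section contribution---a sharper estimate, but both land on the same $O(\log n)$ total.
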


The proof follows from the construction of the adder in Appendices \ref{sec:combined_construction}, \ref{sec:combined_time}, and \ref{sec:combined_correct}.

\section{Addition in Three Dimensions}\label{sec:3Dcombinedcase}
In this section we extend our adder construction into the third dimension to achieve a $O(\sqrt[3]{n})$ upper bound, which meets the $\Omega(\sqrt[3]{n})$ lower bound from Theorem \ref{thm:addition}.  Due to space and time constraints a general overview is given.  We begin by creating $\sqrt[3]{n}$ total $\sqrt[3]{n} \times \sqrt[3]{n}$ constructions exactly as per Section \ref{sec:combinedcase} stacked one atop the other in an alternating fashion such that every odd plate, beginning with the first, has its MSB in the $NW$ corner while the even plate has its LSB in that same corner. We continue by applying the same addition algorithm that was presented in Section \ref{sec:combinedcase} to all plates where every lower plate passes its carry-out to the upper plate.  This is very similar to how carry out bits are passed between sections in the construction described in Section \ref{sec:combinedcase}.  Finally, every lower plate will pass the appropriate carry to the next lower plate. Please see Figure \ref{fig:3dcombined} for a visual overview of this process.

\begin{figure}[htp]
	\centering
	\includegraphics[scale=1.0]{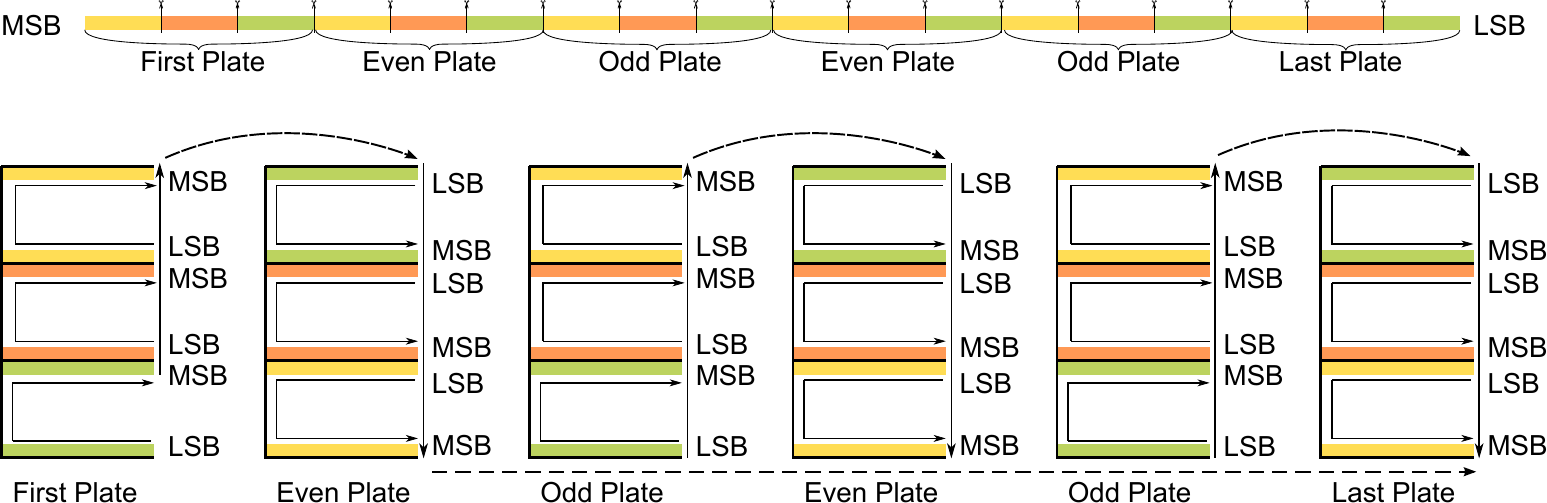}
	\caption{An overview of the process to add two number in optimal time in $3D$.}
	\label{fig:3dcombined}
\end{figure} 

\begin{theorem}\label{thm:combined3d}
There exists a 3-dimensional $n$-bit adder TAC with an average run-time of $O(\log{n})$ and a worst case run-time of $O(\sqrt[3]{n})$.
\end{theorem}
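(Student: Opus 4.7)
The plan is to adapt the two-dimensional combined construction of Theorem \ref{thm:combinedAddition} into three dimensions by stacking $\sqrt[3]{n}$ horizontal ``plates,'' each of which is a scaled copy of the combined 2D adder applied to an $n^{2/3}$-bit subproblem. The seed partitions the addends $A$ and $B$ into $\sqrt[3]{n}$ contiguous sections of $n^{2/3}$ bits each; section $k$ of $A$ paired with section $k$ of $B$ populates the input template of plate $k$. Each plate sits on a $\Theta(\sqrt[3]{n}) \times \Theta(\sqrt[3]{n})$ footprint, and consecutive plates are stacked in alternating orientation (odd plates place their MSB in the NW corner, even plates place their LSB in the NW corner) so that the carry-out column of plate $k$ is geometrically aligned with the carry-in position of plate $k+1$ across their shared face. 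This alignment lets carry information move between plates by $O(1)$ tile attachments without any horizontal routing.

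The runtime analysis splits into intra-plate and inter-plate contributions. Within a single plate, the combined 2D adder of Theorem \ref{thm:combinedAddition} on an $n^{2/3}$-bit instance finishes in worst case $O(\sqrt{n^{2/3}}) = O(\sqrt[3]{n})$ time and average case $O(\log(n^{2/3})) = O(\log n)$ time, measured from the moment its carry-in is known. To avoid serializing the plates, I would equip each plate with a carry-select layer that, in parallel and independent of any carry-in, computes both the carry-in-$0$ and carry-in-$1$ candidate sums; this is the same carry-select duplication already used inside the 2D combined adder, promoted by one dimension. Once the carry from plate $k$ arrives at the bottom face of plate $k+1$, a thin selector sub-assembly chooses the correct candidate and emits a new carry-out on plate $k+1$'s top face in $O(1)$ steps. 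The vertical ripple across $\sqrt[3]{n}$ plates therefore contributes $O(\sqrt[3]{n})$ worst case. For the average case, I would apply the same skip mechanism as in Section \ref{sec:avgcase} at the inter-plate level: each plate advertises a ``propagate'' bit indicating whether its carry-in equals its carry-out, and the longest chain of consecutive propagating plates on a uniformly random input has expected length $O(\log(\sqrt[3]{n})) = O(\log n)$ by Lemma \ref{lem:longest_substring}. Combined with the $O(\log n)$ intra-plate average, the total average runtime is $O(\log n)$.

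Correctness follows by induction on $k$: assuming plate $k-1$ has emitted the correct carry bit on its top face, the selector layer at the bottom of plate $k$ collapses its two precomputed candidates into the correct sum and emits the correct carry-out in turn; the base case $k=1$ is immediate since plate $1$ has no carry-in. The MSB of the final sum is produced as the carry-out of the topmost plate and can be placed into the output template by a constant-size cap assembly.

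The main obstacle I anticipate is the geometric bookkeeping needed to make the inter-plate interface work cleanly. Specifically, one must verify that the alternating-orientation stacking really does place the carry-out glue of plate $k$ adjacent to the carry-in glue of plate $k+1$ on every other face, that the duplicated carry-select candidate region fits inside a $\Theta(\sqrt[3]{n})$-height plate without inflating the footprint asymptotically, and that the selector tiles between plates do not race with the intra-plate growth in a way that creates a non-deterministic or deadlocked frontier. All three issues are resolved by carefully labeling each 2D combined tile with a third-axis glue that only matches the corresponding selector, and by using distinct glue alphabets on each plate so that the plates cannot interfere with one another during parallel growth; this is a labeling argument rather than a new algorithmic idea, but it is where the proof needs to be most explicit.
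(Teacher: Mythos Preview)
Your proposal is correct and mirrors the paper's own approach almost exactly: stack $\sqrt[3]{n}$ copies of the combined 2D adder as $\sqrt[3]{n}\times\sqrt[3]{n}$ plates in alternating orientation so that carry-out and carry-in positions abut, then lift the carry-select/carry-skip machinery one level up to handle inter-plate carries, giving $O(\sqrt[3]{n})$ worst case and $O(\log n)$ average case. One minor caution: your suggestion to use ``distinct glue alphabets on each plate'' would make $|T|$ grow with $\sqrt[3]{n}$, which the paper avoids (and which would hurt the continuous-time bound via Lemma~\ref{lemma:runTimeConnection}); the plates can share a single $O(1)$ tile set because the seed scaffold already provides geometric separation, just as the rows of the 2D construction do.
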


We present a high-level sketch of the proof of Theorem \ref{thm:combined3d} in Appendix \ref{proof:combined3d}

\subsection{Proof Sketch of Theorem \ref{thm:combined3d}} \label{proof:combined3d}

We present a high-level sketch of the proof of Theorem \ref{thm:combined3d} here. Similar with the $O(\log{n})$ average case $O(\sqrt{n})$ worst case combined addition presented in Section \ref{sec:combinedcase}, the binary numbers to be added are separated into sections each with length $O(\sqrt[3]{n})$. We arrange these numbers on $\sqrt[3]{n}$ scafolds of size $\sqrt[3]{n} \times \sqrt[3]{n}$ $2D$. The dashed lines in Figure \ref{fig:3dcombined} are carries transmitted between each plate in the third dimension. The lines on the north are carries transmitted from each odd plate to its next even plate. The dashed lines on the southernmost part of Figure \ref{fig:3dcombined} are the carries transmitted from the first grouped plates to the last plate.  Since the numbers are all fit compactly with only a constant amount space between each plate and a constant amount of space between each section, any one side of the cube is at most $O(\sqrt[3]{n})$.  Therefore, this size constraint along with the algorithms previously presented allow us to have an optimal $O(\sqrt[3]{n})$ time complexity with an average case complexity of $O(\log{n})$.

\section{Multiplication} \label{sec:multiplication}
In this section we sketch our $O(n^{\frac{5}{6}})$ time multiplication algorithm.  Let $a=a_{n-1}a_{n-2}\ldots a_0$ and $b=b_{n-1}b_{n-2}\ldots b_0$ be two $n$-bit numbers to be multiplied.  The key idea of our algorithm is to compute the product $ab$ by computing the sum $ab = \sum_{i=0}^{n-1} a\cdot b_i\cdot 2^i$ by repeated application of our $O(\sqrt{n})$ time addition algorithm in a parallelized fashion.  The statement of our result is as follows:

\begin{theorem}\label{thm:56multiplication}
There exists a 3D $n$-bit multiplier TAC with a worst case run-time of $O(n^{\frac{5}{6}})$.
\end{theorem}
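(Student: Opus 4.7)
The plan is to realize the identity $ab = \sum_{i=0}^{n-1} a \cdot b_i \cdot 2^i$ in three dimensions, balancing parallelism across the $z$-axis against a sequential use of the 2D $O(\sqrt{n})$ adder from Theorem \ref{thm:worstcase2d}. The parameter choice that produces the exponent $5/6$ is to group the partial products into $k=n^{2/3}$ batches of $n^{1/3}$ each, run each batch on its own 2D plane in parallel, and then reduce the batch totals along the third dimension.

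First I would construct a tile subroutine that, given the seed encoding of the $n$-bit inputs $a$ and $b$, produces all $n$ partial products $p_i = a \cdot b_i \cdot 2^i$. Each $p_i$ is either $0$ or a shifted copy of $a$, so its production is purely a broadcast-and-mask operation. The bits of $a$ and $b$ are piped along the vertical ($z$) axis to each of $n^{2/3}$ horizontal planes, and within each plane the relevant $n^{1/3}$ partial products are laid out in a staircase so that their bit positions are pre-aligned for addition. Because the cube has side $O(n^{2/3})$, the broadcast phase costs only $O(n^{2/3})$.

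Next, on each of the $n^{2/3}$ planes I would chain $n^{1/3}$ invocations of the 2D adder of Theorem \ref{thm:worstcase2d}, so that plane $j$ produces the partial sum $S_j = \sum_{i \in I_j} p_i$. Each individual addition takes $O(\sqrt{n})$ time and the $n^{1/3}$ additions on a plane are sequentially composed, yielding $O(n^{1/3}\cdot\sqrt{n}) = O(n^{5/6})$ per plane. Since the planes occupy disjoint slabs in 3D, all $n^{2/3}$ of them run in parallel, and the entire second phase completes in $O(n^{5/6})$.

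Finally I would combine the $n^{2/3}$ plane sums $S_1,\dots,S_{n^{2/3}}$ via a balanced binary tree of additions that walks along the $z$-axis: at each level pairs of adjacent slabs are summed either with the 3D $O(\sqrt[3]{n})$ adder of Theorem \ref{thm:combined3d} or, if more convenient geometrically, with the 2D adder lifted into a slab. With $\log n$ levels each costing $O(\sqrt[3]{n})$, the reduction phase contributes only $O(n^{1/3}\log n)$, which is asymptotically absorbed by Step~2. Summing the three phases gives the claimed $O(n^{5/6})$ bound.

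The main obstacle will be the geometric engineering rather than the asymptotics: one must pack $n$ partial products of length $O(n)$ into a cube of side $O(n^{2/3})$ while keeping bit alignments consistent across the staircase, and must schedule tile attachments so that the seed of each 2D adder is fully formed before addition begins — otherwise non-determinism or premature attachment can spoil correctness. A further subtlety is that the broadcast of $a$ and $b$ must propagate fast enough that it does not become the bottleneck, and the shifts by $i$ places used to form $a\cdot b_i\cdot 2^i$ must be implemented locally by the staircase geometry rather than by information that needs to travel the full diameter of the cube. Once these layout and timing issues are handled, the balance $n^{1/3}\cdot\sqrt{n} = n^{5/6}$ between sequential cost per plane and parallel cost across planes is exactly what the construction is tuned to achieve.
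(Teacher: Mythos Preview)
Your high-level plan --- parallelize the summation $ab=\sum_i a b_i 2^i$ across slabs in the third dimension, run sequential chains of the 2D $O(\sqrt{n})$ adder within each slab, then combine --- is the same as the paper's, and the balance $n^{1/3}\cdot n^{1/2}=n^{5/6}$ is exactly the right calculation. Where you and the paper diverge is in the hierarchy of the partition, and that difference exposes a genuine gap in your final phase.

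You partition into $n^{2/3}$ planes of $n^{1/3}$ additions each. The paper instead uses a \emph{three}-level $n^{1/3}\times n^{1/3}\times n^{1/3}$ array of $O(\sqrt{n})$-sided ``number cubes'': $n^{1/3}$ layers, each containing $n^{1/3}$ columns, each column a sequential chain of $n^{1/3}$ additions. The point of this organization is that \emph{every} reduction --- within a column, across columns in a layer, and across layers --- is a sequential chain of only $n^{1/3}$ additions, so each costs $O(n^{1/3}\cdot\sqrt{n})=O(n^{5/6})$ and no parallel tree is needed.

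Your two-level split leaves $n^{2/3}$ plane-sums to combine at the end, which is too many for a sequential chain (that would cost $n^{2/3}\cdot\sqrt{n}=n^{7/6}$, or $n^{2/3}\cdot n^{1/3}=n$ even with the 3D adder). You propose a balanced binary tree to fix this, but the claim that each of the $\log n$ levels costs $O(\sqrt[3]{n})$ is not right: in the aTAM the two operands at level $k$ of the tree are physically separated by $\Theta(2^k)$ slabs along the $z$-axis, and information cannot cross that gap in time $O(n^{1/3})$. A correct accounting of data movement gives $\sum_k(2^k+n^{1/3})=\Theta(n^{2/3})$ for the reduction, which is still within budget, but actually laying out such a tree of 3D adders in disjoint space, with operands routed to the right locations, is a substantial piece of geometric engineering that you have not supplied and that none of the cited constructions provides. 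One smaller point: a plane running $n^{1/3}$ copies of the $\sqrt{n}\times\sqrt{n}$ adder has extent $\Theta(n^{5/6})$ in one direction, so your bounding box is not a cube of side $O(n^{2/3})$; this does not hurt the time bound, but it means your broadcast-cost estimate is off by the same factor. The cleanest repair is simply to adopt the paper's $n^{1/3}\times n^{1/3}\times n^{1/3}$ organization, which removes the need for the tree entirely.
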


The seed assembly for our construction consists of a 2D $O(\sqrt{n})\times O(\sqrt{n})$ pad encoding inputs $a$ and $b$ in a ``snaked" pattern as is done with the seed in Appendix~\ref{sec:combined_construction}.  Our algorithm expands this seed into a $O(n^{5/6})\times O(n^{5/6}) \times O(n^{5/6})$ \emph{mega-cube} consisting of $n$ \emph{number blocks} each of dimension $O(n^{1/2}) \times O(n^{1/2}) \times O(n^{1/2})$.  Each number cube conceptually is associated with a distinct term from the numbers $a\cdot b_i \cdot 2^i$, and the mega-cube is organized into a $O(n^{1/3}) \times O(n^{1/3}) \times O(n^{1/3})$ array of number cubes (yielding the total dimension of $O(n^{5/6})$).  

Assembly of the mega-cube proceeds by initializing the growth of $O(n^{1/3})$ rows of number cubes as shown in Figure~\ref{fig:Mult_HighLevel} (b).  The growth that initializes each row extends from the seed assembly at a $O(1)$ rate, thereby initializing all rows in time $O(n^{5/6})$.  Each row proceeds by computing the appropriate value $a\cdot b_i\cdot 2^i$ for the given number cube and adding it via our 2D addition algorithm to a running total in $O(\sqrt{n})$ time per number cube.  Note that computing a given value $a\cdot b_i \cdot 2^i$ is simply a bit shift rather than a full fledged multiplication and can therefore be derived quickly as shown in Appendix~\ref{app:multiplcation}.  Therefore, each row finishes in $O(n^{1/2} \cdot n^{1/3}) = O(n^{5/6})$ time once initialized, yielding a $O(n^{5/6})$ finish time for the entire first level of the mega-cube shown in Figure~\ref{fig:Mult_HighLevel} (b).  Further, in addition to initializing the growth of the first plane of the mega-cube, a column is initialized to grow up into the third dimension to seed each of the $O(n^{1/3})$ levels of the mega-cube, implying that all levels finish in time $O(n^{5/6})$.  Finally, each sum from the rows of a given level are computed into subtotals for each level as depicted in Figure~\ref{fig:Mult_HighLevel} (c), followed by the summation of these subtotals into a final solution as shown in Figure~\ref{fig:Mult_HighLevel} (d).


\begin{figure}[htp]
	\centering
	\includegraphics[width=4in]{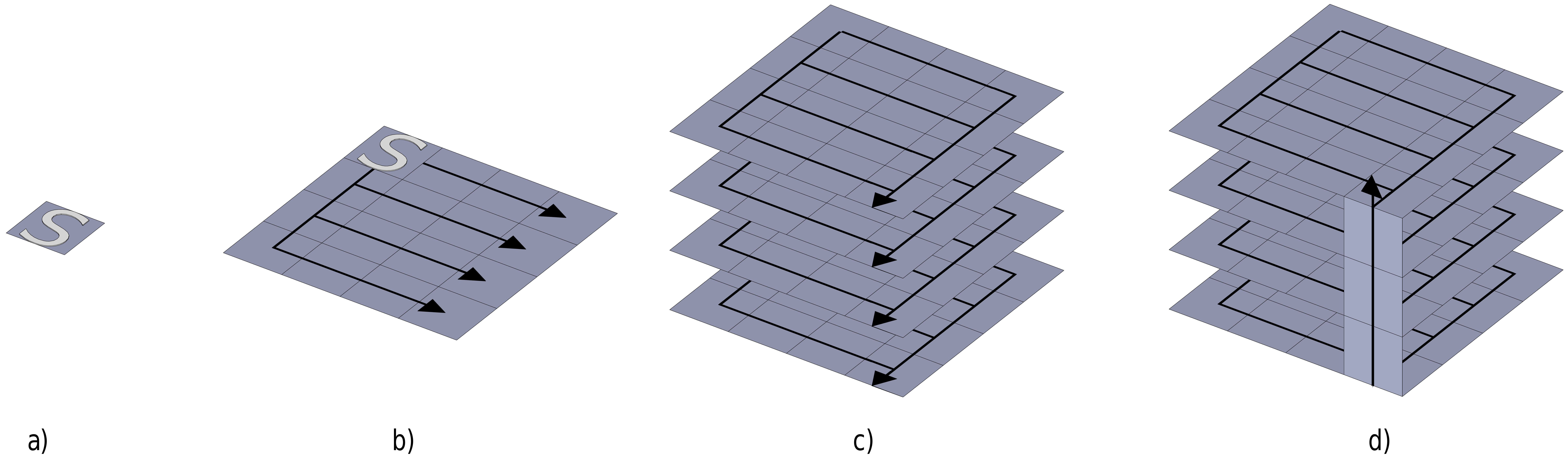}
	\caption{High level sketch of multiplication}
	\label{fig:Mult_HighLevel}
\end{figure}

The details for our multiplication system are provided in Appendix  \ref{sec:multiplication_construction}
The full details of our construction are complex and our description in this abstract may be difficult to follow for a general audience.  As ongoing work we are developing an expanded write-up and analysis that will be more approachable and we will present it in the final published version of this paper.  Further, our run time bound of $O(n^{5/6})$ applies to the parallel time model.  We therefore get a $O^*(n^{5/6})$ continuous time bound from Lemma~\ref{lemma:runTimeConnection}.  We believe a tighter analysis should yield an equal asymptotic bound for both run time models.

\section{Simulation} \label{sec:sim}
Tile self-assembly software simulations were conducted to visualize the diverse approaches to fast arithmetic presented in this paper, as well as to compare them to previous work. The adder tile constructions described in Sections  \ref{sec:avgcase}-\ref{sec:combinedcase}, and the previous best\cite{BRUN2007} were simulated using the two timing models described in Sections \ref{sec:runtime}.

Figure \ref{fig:graphs} demonstrates the power of the approaches described in this paper compared to previous work. 

\begin{figure}[htp]
	\centering
	\includegraphics[width=5in]{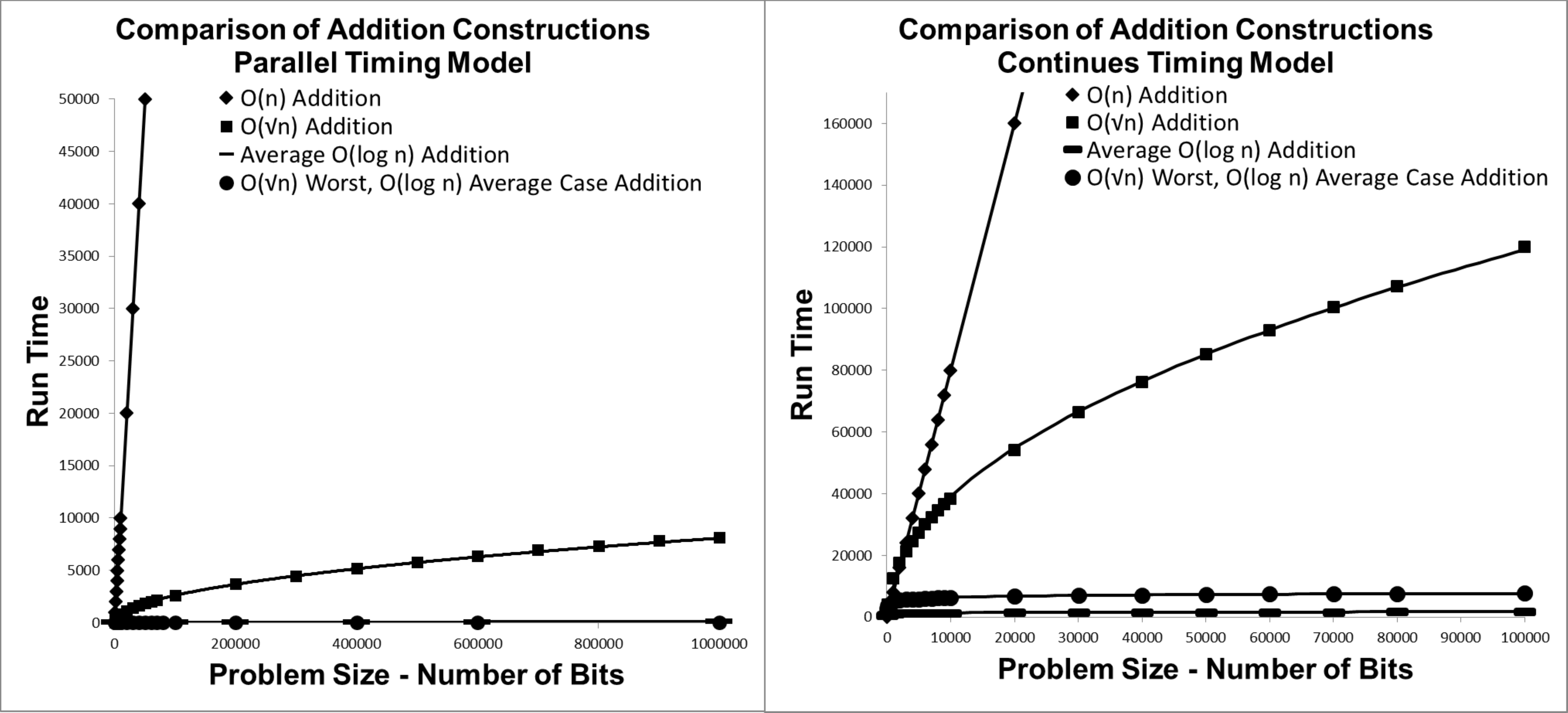}
	\caption{A comparison of adder constructions.}
	\label{fig:graphs}
\end{figure}
\section{Future Work}


The results of this paper are just a jumping off point and provide numerous directions for future work. One promising direction is further exploration of the complexity of multiplication.  Can $O(n^{1/3})$ or $O(n^{1/2})$ be achieved in 3D?  Is sublinear multiplication possible in 2D, or is there a $\Omega
(n)$ lower bound?  What is the average case complexity of multiplication?

Another direction for future work is the consideration of metrics other than run time.  One potentially important metric is the geometric space taken up by the computation.  Our intent is that fast function computing systems such as those presented in this paper will be used as building blocks for larger and more complex self-assembly algorithms.  For such applications, the area and volume taken up by the computation is clearly an important constraint.  Exploring tradeoffs between run time and imposed space limitations for computation may be a promising direction with connections to resource bounded computation theory.  Along these lines, another important direction is the general development of design methodologies for creating black box self-assembly algorithms that can be plugged into larger systems with little or no ``tweaking".


A final direction focusses on the consideration of non-deterministic tile assembly systems to improve expected run times even for maniacally designed worst case input strings.  Is it possible to achieve $O(\log n)$ expected run time for the addition problem regardless of the input bits?  If not, are there other problems for which there is a provable gap in achievable assembly time between deterministic and non-deterministic systems?




\section*{Acknowledgements}
We would like to thank Ho-Lin Chen and Damien Woods for helpful discussions regarding Lemma~\ref{lemma:maxExpSums} and Florent Becker for discussions regarding timing models in self-assembly.  We would also like to thank Matt Patitz for helpful discussions of tile assembly simulation.

\newpage
\bibliographystyle{amsplain}
\bibliography{tam, tamUSA}

\appendix

\newpage

\section{Run Time Model Definitions}\label{app:runTimeModels}
\paragraph{Parallel Run-time.} \label{sec:runtime}
For a deterministic tile system $\Gamma = (T,S,\tau)$ and assembly $A \in \texttt{PROD}_\Gamma$, the 1-\emph{step transition} set of assemblies for $A$ is defined to be $ STEP_{\Gamma, A} = \{ B \in \texttt{PROD}_\Gamma | A \rightarrow_{T,\tau} B \} $.  For a given $A \in \texttt{PROD}_\Gamma$, let $\texttt{PARALLEL}_{\Gamma, A} = \bigcup_{B\in STEP_{\Gamma, A}}  B$, ie, $\texttt{PARALLEL}_{\Gamma, A}$ is the result of attaching all singleton tiles that can attach directly to $A$.  Note that since $\Gamma$ is deterministic, $\texttt{PARALLEL}_{\Gamma, A}$ is guaranteed to not contain overlapping tiles and is therefore an assembly.  For an assembly $A$, we say $A \rightrightarrows_\Gamma A'$ if $A' = \texttt{PARALLEL}_{\Gamma, A}$.  We define the \emph{parallel run-time} of a deterministic tile system $\Gamma = (T,S,\tau)$ to be the non-negative integer $k$ such that $A_1 \rightrightarrows_\Gamma A_2 \rightrightarrows_\Gamma \ldots \rightrightarrows_\Gamma A_k $ where $A_1 = S$ and $\{A_k \} = \texttt{TERM}_\Gamma$.  As notation we denote this value for tile system $\Gamma$ as $\rho_\Gamma$.  For any assemblies $A$ and $B$ in $\texttt{PROD}_\Gamma$  such that $A_1  \rightrightarrows_\Gamma A_2 \rightrightarrows_\Gamma \ldots \rightrightarrows_\Gamma A_{k}$ with $A=A_1$ and $B=A_k$, we say that $A \rightrightarrows^k_\Gamma B$. Alternately, we denote $B$ with notation $A \rightrightarrows^k_\Gamma$.  For a TAC $\Im=(T,U,V,\tau)$ that computes function $f$, the run time of $\Im$ on input $b$ is defined to be the parallel run-time of tile system $\Gamma_{\Im,b} = (T,U_b, \tau)$.  Worst case and average case run time are then defined in terms of the largest run time inducing $b$ and the average run time for a uniformly generated random $b$.

\paragraph{Continuous Run-time.}\label{sec:continuousRuntime}
In the \emph{continuous} run time model the assembly process is modeled as a continuous time Markov chain and the time spent in each assembly is a random variable with an exponential distribution.  The states of the Markov chain are the elements of $PROD_\Gamma$ and the transition rate from $A$ to $A'$ is $\frac{1}{|T|}$ if $A \rightarrow A'$ and 0 otherwise.  For a given tile assembly system $\Gamma$, we use notation $\Gamma_{MC}$ to denote the corresponding continuous time Markov chain.  Given a deterministic tile system $\Gamma$ with unique assembly $U$, the run time of $\Gamma$ is defined to be the expected time of $\Gamma_{MC}$ to transition from the seed assembly state to the unique sink state $U$.  As notation we denote this value for tile system $\Gamma$ as $\varsigma_\Gamma$ .  One immediate useful fact that follows from this modeling is that for a producible assembly $A$ of a given deterministic tile system, if $A \rightarrow A'$, then the time taken for $\Gamma_{MC}$ to transition from $A$ to the first $A''$ that is a superset of $A'$ (through 1 or more tile attachments) is an exponential random variable with rate $1/|T|$.  For a more general modeling in which different tile types are assigned different rate parameters see~\cite{AdChGoHu01,ACGHKMR02,CGM04}.  Our use of rate $1/|T|$ for all tile types corresponds to the assumption that all tile types occur at equal concentration and the expected time for 1 tile (of any type) to hit a given attachable position is normalized to 1 time unit.  Note that as our tile sets are constant in size in this paper, the distinction between equal or non-equal tile type concentrations does not affect asymptotic run time.  For a TAC $\Im=(T,U,V,\tau)$ that computes function $f$, the run time of $\Im$ on input $b$ is defined to be the continuous run-time of tile system $\Gamma_{\Im,b} = (T,U_b, \tau)$.  Worst case and average case run time are then defined in terms of the largest run time inducing $b$ and the average run time for a uniformly generated random $b$.

\paragraph{Relating Parallel Time and Continuous Time.}
The following Lemma states an upper and lower bound on continuous time with respect to parallel time.  Both bounds are straightforward to derive with the lower bound appearing in~\cite{BeckerRR06} and the upper bound being derivable in a fashion similar to the proof of Theorem 5.2 in~\cite{AdChGoHu01}.  The most dramatic distinction between parallel and continuous time occurs when the number of tile types, $|T|$, is large, as this slows down assembly in the continuous model but does not affect parallel time. When $|T| = O(1)$, which we adhere to in this paper, the lemma implies that the timing models are very close.

\begin{lemma}\label{lemma:runTimeConnection}
Consider a deterministic aTAM system $\Gamma = (T,S,\tau)$ that uniquely assembles (finite) assembly $A$.  Let $\rho_\Gamma$ denote the parallel time for $\Gamma$ to assemble $A$ and let $\varsigma_\Gamma$ denote the continuous time for $\Gamma$ to assemble $A$.  Then,
\begin{itemize}
    \item $\varsigma_\Gamma \geq |T|\rho_\Gamma$
    \item $\varsigma_\Gamma = O(|T|\rho_\Gamma\log(\rho_\Gamma + |S|))$
\end{itemize}
\end{lemma}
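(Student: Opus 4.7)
The plan is to prove the two bounds separately, both by coupling between the parallel schedule of the paper and the continuous-time Markov chain $\Gamma_{MC}$.

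For the lower bound $\varsigma_\Gamma \geq |T|\rho_\Gamma$, I will extract a ``witness chain'' $t^{(1)}, t^{(2)}, \ldots, t^{(\rho_\Gamma)}$ of tiles where $t^{(i)}$ is attached in parallel round $i$ and $t^{(i-1)}$ is a necessary prerequisite for $t^{(i)}$ being attachable. Such a chain is built by starting from any tile attached in round $\rho_\Gamma$ and, inductively, choosing a tile attached in round $i-1$ whose removal makes $t^{(i)}$ no longer $\tau$-stably attachable; at least one such tile must exist, because by the definition of the parallel schedule $t^{(i)}$ was not attachable to the round-$(i-2)$ assembly. In every realization of $\Gamma_{MC}$, the tiles of the chain are attached in their chain order, and the transition that attaches $t^{(i)}$ has rate exactly $1/|T|$ in every state in which it is enabled. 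Since this transition only becomes enabled at or after the attachment of $t^{(i-1)}$, memorylessness implies that the waiting time between the attachments of $t^{(i-1)}$ and $t^{(i)}$ stochastically dominates an exponential of rate $1/|T|$, with expectation $|T|$. Summing over the chain gives $\varsigma_\Gamma \geq \rho_\Gamma \cdot |T|$.

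For the upper bound, I will couple $\Gamma_{MC}$ to the parallel schedule round by round. Let $N_i = A_i \setminus A_{i-1}$ be the set of tiles attached in parallel round $i$. Every position in $N_i$ is attachable from any reachable state whose tile set contains $A_{i-1}$, since all prerequisites of each $t \in N_i$ lie in $A_{i-1}$. Starting from such a state and letting $k$ be the number of unfilled positions of $N_i$, the minimum-of-exponentials property gives that the next $N_i$-attachment fires in expected time at most $|T|/k$, regardless of what other out-of-round attachments may be interleaved. A coupon-collector summation $\sum_{k=1}^{|N_i|} |T|/k = O(|T| \log |N_i|)$ bounds the expected time to finish round $i$ in the continuous process. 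Summing over the $\rho_\Gamma$ rounds and using that every producible assembly is contained in a bounding box whose side length is $O(|S| + \rho_\Gamma)$, so $|N_i| \leq |A_{\rho_\Gamma}| = O((|S|+\rho_\Gamma)^d)$ for ambient dimension $d$, yields $\varsigma_\Gamma = O(|T| \rho_\Gamma \log(\rho_\Gamma + |S|))$.

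The main obstacle is the careful use of the strong Markov property and thinning in both arguments. For the lower bound, we need the clock that governs the ``attach $t^{(i)}$'' transition to remain memoryless as $\Gamma_{MC}$ meanders through states that may include off-chain tiles from later parallel rounds; this follows because adding tiles only enables new transitions and never disables the specific $t^{(i)}$-transition, so we may restart the Exp$(1/|T|)$ clock at each attachment event. For the upper bound, the analogous step is the claim that the combined $N_i$-attachment rate stays at least $k/|T|$ throughout the simulation of round $i$; this again follows from thinning once we verify that each unfilled $N_i$-position remains attachable in every reachable state before round $i$ completes, which holds because attachment conditions in the aTAM are monotone under adding tiles.
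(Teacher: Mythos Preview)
The paper does not give its own proof of this lemma; it defers the lower bound to Becker et al.\ and says the upper bound follows the method of Theorem~5.2 in Adleman et al. Your upper-bound argument via round-by-round coupon collecting, together with the bounding-box estimate $|N_i|=O((|S|+\rho_\Gamma)^d)$, is correct and is the standard approach.

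Your lower-bound argument has a genuine gap in the construction of the witness chain. The claim ``at least one such tile must exist, because $t^{(i)}$ was not attachable to the round-$(i-2)$ assembly'' is false in general: a round-$i$ tile can have several disjoint $\tau$-strength supporting sets inside $A_{i-1}$, with no single round-$(i-1)$ tile common to all of them. Concretely, take $s$ at the origin, $a_1$ to its east and $a_2$ to its north (both round~1), and $b$ at the northeast corner with a strength-$\tau$ glue to each of $a_1$ and $a_2$. Then $b$ is round~2, but deleting either $a_1$ or $a_2$ alone from $A_1$ still leaves $b$ attachable, so no single round-1 neighbor is ``necessary'' in your sense, and the follow-up claim that the chain order is respected in every realization of $\Gamma_{MC}$ fails for the same reason. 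This is not a mere technicality: extend the example with a tile $c$ attachable only to $b$, so that $\rho_\Gamma=3$, and a direct first-step computation gives $\varsigma_\Gamma=\tfrac{11}{4}\,|T|<3|T|$. Thus the exact inequality $\varsigma_\Gamma\ge|T|\rho_\Gamma$ does not hold for arbitrary deterministic systems, and no fixed-chain argument can repair it.
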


It is not clear if the $\log(\rho_\Gamma)$ factor in the upper bound for continuous time is necessary, and in fact for the algorithms in this paper the parallel and continuous run times are equal up to constant factors. However, the $\log(|S|)$ factor for general systems is important.  Consider a $1\times |S|$ line of tiles for a seed assembly to which $|S|$ copies of a single tile type may attach to form a terminal $2\times |S|$ assembly.  This system has a parallel run time of just 1, but $\Theta(\log |S|)$ continuous run time, implying that the $\log |S|$ factor in our bound is tight in the general case.  In~\cite{BeckerRR06} a comparison between these two run time models is considered in the extension in which tile types may have different concentrations.

\section{Lower Bounds}
\subsection{Communication Problem.}\label{sec:communicationProblem}  The $\Delta$-communication problem is the problem of computing the function $f( b_1, b_2 )$ $ = b_1 \wedge b_2$ for bits $b_1$ and $b_2$ in the 3D aTAM under the additional constraint that the input template for the solution $U = (R, B(i))$ be such that $\Delta = \max ( | B(1)_x - B(2)_x |, |B(1)_y - B(2)_y | , | B(1)_z - B(2)_z | )$.

We first establish a lemma which intuitively states that for any 2 seed assemblies that differ in only a single tile position, all points of distance greater than $r$ from the point of difference will be identically tiled (or empty) after $r$ time steps of parallelized tile attachments:

\begin{lemma}\label{lemma:diffProp}
Let $S_{p, t}$ and $S_{p, t'}$ denote two assemblies that are identical except for a single tile $t$ versus $t'$ at position $p=(p_x,p_y,p_z)$ in each assembly.  Further, let $\Gamma = (T, S_{p , t}, \tau)$ and $\Gamma' = (T, S_{p, t'}, \tau)$ be two deterministic tile assembly systems such that $S_{p,t} \rightrightarrows^r_{\Gamma'} R$ and $S_{p,t'} \rightrightarrows^r_{\Gamma'} R'$ for non-negative integer $r$.  Then for any point $q=(q_x, q_y, q_z)$ such that $r < \max( |p_x - q_x|, |p_y-q_y|, |p_z-q_z|)$,  it must be that $R_{q} = R'_{q}$, ie, $R$ and $R'$ contain the same tile at point $q$.
\end{lemma}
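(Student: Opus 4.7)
The plan is to prove the lemma by induction on the number of parallel steps $r$, using the fact that a parallel step can only propagate influence by Chebyshev distance $1$ per step, since tile attachment is determined purely by the glues on the (at most six) face-adjacent neighbors.

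For the base case $r = 0$, we have $R = S_{p,t}$ and $R' = S_{p,t'}$, which agree at every position other than $p$ by hypothesis. Any point $q$ with $\Delta_{p,q} > 0$ is not equal to $p$, so $R_q = R'_q$.

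For the inductive step, assume the statement for $r$ and consider the $(r+1)$-th parallel step. Write $S_{p,t} \rightrightarrows^{r}_{\Gamma} R_0$ and $S_{p,t'} \rightrightarrows^{r}_{\Gamma'} R_0'$, so that $R = \texttt{PARALLEL}_{\Gamma, R_0}$ and $R' = \texttt{PARALLEL}_{\Gamma', R_0'}$. Fix a point $q$ with $\Delta_{p,q} > r+1$. I would argue in two cases. First, by the induction hypothesis applied to $q$ itself (since $\Delta_{p,q} > r+1 > r$), we know $R_0$ and $R_0'$ already agree at $q$. If $q$ is occupied in both, then since parallel steps only add tiles, $R_q = R_0(q) = R_0'(q) = R'_q$, and we are done. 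Otherwise $q$ is empty in both $R_0$ and $R_0'$, and whether a tile attaches at $q$ in the next parallel step is determined entirely by the tiles present at the face-adjacent neighbors $q'$ of $q$. For any such neighbor, $\Delta_{p,q'} \geq \Delta_{p,q} - 1 > r$ by the reverse triangle inequality applied coordinate-wise, so the induction hypothesis gives $R_0(q') = R_0'(q')$. Hence both systems see identical local configurations around $q$, and since the tile set $T$ and temperature $\tau$ are the same and the systems are deterministic, the same tile (or no tile) attaches at $q$, giving $R_q = R'_q$.

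The key technical observation, which is really where the content lies, is that the neighborhood on which tile attachment at $q$ depends lies within Chebyshev distance $1$ of $q$, so the ``influence front'' of the single initial disagreement at $p$ can advance by at most one unit per parallel step. There is no real obstacle beyond being careful about the inductive formulation: one must include $q$ itself (not just its neighbors) in the inductive appeal to handle the possibility that $q$ was already tiled before step $r+1$, and one must state the induction hypothesis with strict inequality $\Delta_{p,\cdot} > r$ so that the step at $q'$ with $\Delta_{p,q'} \geq \Delta_{p,q}-1 > r$ goes through cleanly. The determinism of $\Gamma$ and $\Gamma'$ is what lets us conclude that the parallel-step outcome at $q$ is a function of the local tiling alone, with no branching to worry about.
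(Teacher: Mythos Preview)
Your proof is correct and follows essentially the same inductive argument as the paper: induct on the number of parallel steps, and in the inductive step use that every face-adjacent neighbor of $q$ is still at Chebyshev distance greater than $r$ from $p$, so the local configurations agree and determinism forces the same attachment. Your treatment is in fact slightly more careful than the paper's in explicitly separating the case where $q$ is already occupied from the case where it is empty before the $(r{+}1)$-st step.
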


The proof for Lemma \ref{lemma:diffProp} can be found in Appendix \ref{proof:diffProp}.

\begin{theorem}\label{thm:communicationLower}
Any solution to the $\Delta$-communication problem has run time at least $\frac{1}{2} \Delta$.
\end{theorem}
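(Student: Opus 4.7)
The plan is to derive a contradiction from the assumption that some TAC $\Im=(T,U,V,\tau)$ solves the $\Delta$-communication problem in parallel time $r<\tfrac12\Delta$. Let $U=(R,B)$ with wildcard positions $B(1),B(2)$ satisfying $\Delta_{B(1),B(2)}=\Delta$, and let $V$ specify a single output coordinate $C(1)$.

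First I would invoke the triangle inequality for the Chebyshev metric $\Delta_{\cdot,\cdot}$ on the three points $B(1),B(2),C(1)$. Since $\Delta \;=\; \Delta_{B(1),B(2)} \;\leq\; \Delta_{B(1),C(1)} + \Delta_{B(2),C(1)}$, at least one of the two summands is at least $\tfrac12\Delta$; without loss of generality assume $\Delta_{B(1),C(1)}\geq \tfrac12\Delta$.

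Next I would exhibit a pair of inputs that the AND function distinguishes but whose seeds differ at only the position $B(1)$, in order to hit the hypothesis of Lemma \ref{lemma:diffProp}. The natural choice is $b=(0,1)$ and $b'=(1,1)$, with $b_1\wedge b_2=0$ and $b'_1\wedge b'_2=1$. The two seed assemblies $U_b$ and $U_{b'}$ are identical except that position $B(1)$ carries $t_0$ in one and $t_1$ in the other. Since $\Im$ computes AND, the terminal assemblies of $\Gamma_{\Im,b}$ and $\Gamma_{\Im,b'}$ must place tiles with labels ``0'' and ``1'' respectively at $C(1)$, and by the parallel run-time assumption both systems reach their terminal states in at most $r$ parallel steps.

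Now I would apply Lemma \ref{lemma:diffProp} with $p=B(1)$, $q=C(1)$, and the step count $r$. Under the assumed bound $r<\tfrac12\Delta\leq\Delta_{B(1),C(1)}$, the lemma guarantees that after $r$ parallel steps the two evolved assemblies contain the same tile (or are both empty) at $C(1)$. Since these $r$-step assemblies are in fact the terminal assemblies, their labels at $C(1)$ must coincide, contradicting the requirement that they be ``0'' and ``1''. Hence $r\geq \tfrac12\Delta$, proving the theorem. I do not foresee a serious obstacle: the Chebyshev triangle inequality is standard, and Lemma \ref{lemma:diffProp} does the substantive work. The only small care-point is ensuring that the output cell is actually occupied in both terminal assemblies so that ``same tile at $q$'' genuinely forces equal labels, which follows directly from the definition of an output template being represented by an assembly.
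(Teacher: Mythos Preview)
Your proposal is correct and matches the paper's own proof essentially line for line: both pick the input bit farther from the output position (using $\Delta_{B(1),C(1)}+\Delta_{B(2),C(1)}\geq\Delta$, which is exactly your Chebyshev triangle inequality), compare the seeds $U_{0,1}$ and $U_{1,1}$ that differ only at $B(1)$, and invoke Lemma~\ref{lemma:diffProp} to force the tiles at the output coordinate to agree after fewer than $\tfrac12\Delta$ parallel steps. The only cosmetic difference is that the paper phrases the distance bound as $\Delta_V:=\max(\Delta_{B(1),V(1)},\Delta_{B(2),V(1)})\geq\tfrac12\Delta$ rather than citing the triangle inequality explicitly.
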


The proof for Theorem \ref{thm:communicationLower} may be found in Appendix \ref{proof:communicationLower}.

\subsection{$\Omega( \sqrt[d]{n})$ Lower Bounds for Addition and Multiplication.}
We now show how to reduce instances of the communication problem to the arithmetic problems of addition and multiplication in 2D and 3D to obtain lower bounds of $\Omega( \sqrt{n} )$ and $\Omega( \sqrt[3]{n} )$ respectively.

We first show the following Lemma which lower bounds the distance of the farthest pair of points in a set of $n$ points.  We believe this Lemma is likely well known, or is at least the corollary of an established result.  We include it's proof for completeness.

\begin{lemma} \label{lemma:pointCrowding}
For positive integers $n$ and $d$, consider $A \subset \mathbb{Z}^d$ and $B \subset \mathbb{Z}^d$ such that $A \bigcap B = \emptyset$ and $|A|=|B|=n$.  There must exist points $p \in A$ and $q \in B$ such that $\Delta_{p , q} \geq \lceil\frac{1}{2}\lceil\sqrt[d]{2n}\rceil \rceil - 1$.

\end{lemma}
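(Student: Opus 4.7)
Let $k^\star = \max_{p \in A,\, q \in B} \Delta_{p,q}$; the goal is to prove $k^\star \geq \lceil \tfrac12\lceil\sqrt[d]{2n}\rceil\rceil - 1$. My plan is to trap all of $A \cup B$ inside a small axis-aligned $L^\infty$ box whose side is controlled by $k^\star$, count its lattice points, and then solve for $k^\star$.

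The trapping step is the heart of the argument. Fix a coordinate direction $i \in \{1,\dots,d\}$, and let $M_i$ and $m_i$ be the maximum and minimum of the $i$-th coordinate over the $2n$ points of $A \cup B$. I split into cases based on which of $A$ or $B$ realizes $M_i$ and $m_i$. If one is realized in $A$ and the other in $B$, then by the definition of $k^\star$ we immediately have $M_i - m_i \leq k^\star$. If both are realized in the same set, say $A$, then I pick an arbitrary $q \in B$ with $i$-th coordinate $q_i$; since any point of $A$ is within $L^\infty$-distance $k^\star$ of $q$, both $M_i - q_i \leq k^\star$ and $q_i - m_i \leq k^\star$, so $M_i - m_i \leq 2k^\star$ by the triangle inequality. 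The symmetric case $B,B$ is identical. Therefore every coordinate projection of $A \cup B$ has range at most $2k^\star$, so $A \cup B$ is contained in an axis-aligned box with at most $(2k^\star + 1)^d$ integer points in each side direction, hence at most $(2k^\star+1)^d$ lattice points total.

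Since $A \cap B = \emptyset$, we have $|A \cup B| = 2n$, yielding $2n \leq (2k^\star + 1)^d$, i.e.\ $\sqrt[d]{2n} \leq 2k^\star + 1$. Because $2k^\star + 1$ is an integer, this strengthens to $\lceil\sqrt[d]{2n}\rceil \leq 2k^\star + 1$. Setting $N = \lceil\sqrt[d]{2n}\rceil$, I then rearrange to $k^\star \geq (N-1)/2$ and observe that $k^\star$ is a non-negative integer, so $k^\star \geq \lceil (N-1)/2 \rceil$. A short case split on the parity of $N$ shows $\lceil (N-1)/2 \rceil = \lceil N/2 \rceil - 1$, giving the claimed bound $k^\star \geq \lceil \tfrac12 \lceil\sqrt[d]{2n}\rceil\rceil - 1$.

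The only real subtlety is the little ceiling-arithmetic juggle at the end, where I must verify that the lemma's slightly awkward-looking expression $\lceil \tfrac12 \lceil\sqrt[d]{2n}\rceil\rceil - 1$ matches what the counting bound gives for both parities of $N$; the geometric content (forcing $A \cup B$ into a cube of $L^\infty$-diameter $\leq 2k^\star$) is straightforward and the lattice-point count is immediate.
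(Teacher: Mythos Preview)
Your proof is correct and follows essentially the same bounding-box pigeonhole argument as the paper: both establish that the $2n$ points of $A\cup B$ force some coordinate range to be at least $\lceil\sqrt[d]{2n}\rceil-1$, and both handle the same-set case by routing through an arbitrary point of the other set (you do it per coordinate to bound the box by $2k^\star$, the paper does it once on the longest axis). One small correction: your claimed identity $\lceil(N-1)/2\rceil = \lceil N/2\rceil - 1$ fails when $N$ is even (e.g.\ $N=4$ gives $2\neq 1$), but since the correct relation is $\lceil(N-1)/2\rceil \geq \lceil N/2\rceil - 1$ in all cases, your final bound still follows.
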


The proof of Lemma \ref{lemma:pointCrowding} can be found in Appendix \ref{proof:pointCrowding}.

\begin{theorem} \label{thm:addition}
Any $n$-bit adder TAC that has a dimension $d$ input template for $d=1$, $d=2$, or $d=3$, has a worst case run time of $\Omega( \sqrt[d]{n} )$.
\end{theorem}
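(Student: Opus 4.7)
The plan is to reduce the $\Delta$-communication problem to addition and apply Theorem~\ref{thm:communicationLower}. Given a $d$-dimensional $n$-bit adder TAC $\Im=(T,U,V,\tau)$ with input template $U=(R,B)$ where $B\colon\mathbb{N}_{2n}\to\mathbb{Z}^d$, the reduction will (a) locate two input bit positions that are $\Omega(\sqrt[d]{n})$ apart, (b) hardcode every remaining input bit so that a predetermined output bit of the adder carries the AND of the two free bits, and (c) repackage the resulting system as a communication TAC whose parallel run time equals that of $\Im$ on the hardcoded input.

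For (a) I would split the input positions according to which addend they encode, forming $P_A=\{B(1),\ldots,B(n)\}$ (bit positions of $a$) and $P_B=\{B(n{+}1),\ldots,B(2n)\}$ (bit positions of $b$); these are disjoint $n$-subsets of $\mathbb{Z}^d$. Lemma~\ref{lemma:pointCrowding} then supplies indices $i,j$ with $p:=B(i)\in P_A$, $q:=B(n{+}j)\in P_B$, and $\Delta_{p,q}=\Omega(\sqrt[d]{n})$. The free bits will be $a_i$ (at $p$) and $b_j$ (at $q$). For (b), assume WLOG $i\leq j$ (the $i>j$ case is symmetric via swapping the roles of $a$ and $b$, using commutativity). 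If $i=j$, set every other bit of $a$ and $b$ to $0$, so that $a+b=(a_i+b_j)\,2^{i-1}$ and bit $i+1$ of the sum equals $a_i\wedge b_j$. If $i<j$, set the other bits of $a$ to $0$ and set the bits of $b$ at positions $i,i{+}1,\ldots,j{-}1$ to $1$ with the remaining bits of $b$ at $0$; a short carry-chain calculation then shows that bit $j+1$ of $a+b$ is $1$ precisely when $a_i=b_j=1$, since the carry originating at bit $i$ rides the all-ones block and pushes through $b_j$ exactly in that case. Either way, the AND lands at a single position $c^{\ast}\in V$ determined by $(i,j)$ and the adder's output template.

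For (c) I would build $\Im'=(T',U',V',\tau)$ as follows. To respect the input-template constraint that the seed contain no label-``0''/``1'' tile, enlarge the tile set to $T'=T\cup\{\hat{t}_0,\hat{t}_1\}$ where $\hat{t}_0,\hat{t}_1$ are glue-identical copies of $t_0,t_1$ with empty labels; this does not alter assembly dynamics since attachment depends only on glues. Let $R'$ be $R$ augmented with $\hat{t}_{b(k)}$ at every position $B(k)$ for $k\notin\{i,n{+}j\}$ using the hardcoded values from step (b), set $B'(1)=p$, $B'(2)=q$, and $V'(1)=c^{\ast}$. Then $\Im'$ solves the $\Delta_{p,q}$-communication problem and its parallel run time equals that of $\Im$ on the corresponding hardcoded input, so Theorem~\ref{thm:communicationLower} yields the bound $\tfrac{1}{2}\Delta_{p,q}=\Omega(\sqrt[d]{n})$ on the adder's worst case run time. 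The main obstacle is step (b): because the adversary picks the input template, the hardcoding must funnel the AND into a fixed output position for every pair $(i,j)$ that Lemma~\ref{lemma:pointCrowding} might return, and the carry-chain trick is what handles this uniformly. The label-free clone tiles in step (c) are a minor technicality needed only to comply with the formal input-template definition.
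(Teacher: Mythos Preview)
Your proposal is correct and follows essentially the same route as the paper: use Lemma~\ref{lemma:pointCrowding} to find two far-apart input positions, hardcode the remaining bits so that a carry chain makes one output bit equal to the AND of the two free bits, and invoke Theorem~\ref{thm:communicationLower}. Your hardcoding (ones in $b$ on positions $i,\ldots,j-1$ with $a$ otherwise zero) is the mirror image of the paper's (which puts the ones in $a$ and assumes $j\le i$), and you are more explicit than the paper about the $i=j$ case and the WLOG symmetry.

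One small technical point on step~(c): your clone tiles $\hat t_0,\hat t_1$ with identical glues would make the enlarged system non-deterministic (wherever $t_0$ can attach, so can $\hat t_0$), so $\Im'$ is not literally a TAC under the paper's definitions. The paper's own proof sidesteps this by simply placing $t_0,t_1$ into the frame $F^2$, which technically violates the input-template constraint you noticed. Either way the core argument is unaffected: you can bypass the packaging entirely by applying Lemma~\ref{lemma:diffProp} directly to $\Im$ on the four hardcoded inputs, exactly as in the proof of Theorem~\ref{thm:communicationLower}, which yields the $\tfrac12\Delta_{p,q}$ bound without needing $\Im'$ to be a formally valid communication TAC.
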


The proof of Theorem \ref{thm:addition} can be found in Appendix \ref{proof:addition}.
%
%
%
%
%
%


As the bound of dimension $d$ on the input template of a TAC alone lower bounds the run time of the TAC, we get the following corollary.

We now provide a lower bound for multiplication.

\begin{theorem} \label{thm:multiplication}
Any $n$-bit multiplier TAC that has a dimension $d$ input template for $d=1$, $d=2$, or $d=3$, has a worst case run time of $\Omega(\sqrt[d]{n})$.
\end{theorem}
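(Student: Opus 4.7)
The plan is to mirror the reduction used in Theorem \ref{thm:addition}, but exploit the arithmetic structure of multiplication instead of addition. The key observation is: if two $n$-bit numbers $a$ and $b$ each have at most one nonzero bit, say $a = a_i \cdot 2^i$ and $b = b_j \cdot 2^j$, then their product is $a\cdot b = (a_i \wedge b_j)\cdot 2^{i+j}$. So under such a restricted input, the bit in position $i+j$ of the product is precisely the AND of the two free input bits, turning any multiplier TAC into a solver for the $\Delta$-communication problem.

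Concretely, given an $n$-bit multiplier TAC $\Im = (T, U, V, \tau)$ with $d$-dimensional input template $U = (R, B)$ on $2n$ wildcard positions, first partition those positions into $P_a$ (the $n$ positions encoding the bits of $a$) and $P_b$ (the $n$ positions encoding the bits of $b$). Apply Lemma \ref{lemma:pointCrowding} to $P_a$ and $P_b$ to obtain $p \in P_a$ and $q \in P_b$ with $\Delta_{p,q} = \Omega(\sqrt[d]{n})$; say $p$ corresponds to bit $a_i$ and $q$ to bit $b_j$. Build a $\Delta_{p,q}$-communication input template $U' = (R', B')$ by taking $R'$ to be $R$ augmented with copies of the $0$-labeled tile $t_0$ placed at every wildcard position of $U$ except $p$ and $q$, and setting $B'(1) = p$, $B'(2) = q$. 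Let $C'(1) = C(i+j)$, the output-template location at which $\Im$ writes the $(i+j)$-th bit of the product.

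For any bits $\beta_1, \beta_2 \in \{0,1\}$, plugging $(\beta_1, \beta_2)$ into $U'$ produces exactly the seed $U_b$ of $\Im$ on the input where $a_i = \beta_1$, $b_j = \beta_2$, and every other input bit is $0$. By the arithmetic observation, the terminal assembly of $\Im$ on this input has label $\beta_1 \wedge \beta_2$ at position $C(i+j)$, so the resulting tile system is a valid $\Delta_{p,q}$-communication TAC. By Theorem \ref{thm:communicationLower} it requires at least $\tfrac{1}{2}\Delta_{p,q} = \Omega(\sqrt[d]{n})$ parallel steps, and since this system is simply $\Im$ restricted to a particular input, the worst-case run time of $\Im$ must be at least $\Omega(\sqrt[d]{n})$.

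The main obstacle is the bookkeeping needed to certify that $U'$ is a well-formed input template in the sense of Section 2.3: one must check that $R'$ is $\tau$-stable (inherited from the fact that $U_b$ is $\tau$-stable for every $b$, and $R'$ is a subassembly of such a $U_b$), that the two remaining wildcard positions $p, q$ are not occupied by any tile of $R'$, and that the output index $i+j \in \{0, \ldots, 2n-2\}$ lies within the range of $\Im$'s output template so that $C(i+j)$ is defined. None of these checks is difficult, but each must be verified explicitly because the TAC formalism is designed precisely to prevent ``smuggling'' the answer into the seed; a sloppy reduction could inadvertently violate the input-template restrictions and collapse the argument.
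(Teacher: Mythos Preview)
Your proposal is correct and follows essentially the same approach as the paper's own proof: apply Lemma~\ref{lemma:pointCrowding} to locate far-apart wildcard positions $W(i)$ and $W(n+j)$, hardcode all other input bits to~$0$, observe that bit $i+j$ of the product is then exactly $a_i \wedge b_j$, and invoke Theorem~\ref{thm:communicationLower} on the resulting $\Delta$-communication TAC. Your additional remarks on template well-formedness (stability of $R'$, range of $C(i+j)$) are sound and in fact spell out details the paper leaves implicit by referring back to the addition reduction.
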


The proof of Theorem \ref{thm:multiplication} can be found in Appendix \ref{proof:multiplication}.

As with addition, the lower bound implied by the limited dimension of the input template alone yields the general lower bound for $d$ dimensional multiplication TACS.

\subsection{Proof of Lemma \ref{lemma:diffProp} } \label{proof:diffProp}
We show this by induction on $r$.  As a base case of $r=0$, we have that $R=S_{p,t}$ and $R'=S_{q,t'}$, and therefore $R$ and $R'$ are identical at any point outside of point $p$ by the definition of $S_{p,t}$ and $S_{p,t'}$.

Inductively, assume that for some integer $k$ we have that for all points $w$ such that $k < \Delta_{p,w} \triangleq  \max(|p_x - w_x|,$ $|p_y-w_y|,$ $|p_z-w_z|)$, we have that $R^k_w = R^{'k}_w$, where $S_{p,t} \rightrightarrows^k_\Gamma R^k$, and $S_{p,t'} \rightrightarrows^k_{\Gamma'} R^{'k}$.  Now consider some point $q$ such that $k+1 < \Delta_{p,q} \triangleq \max{|p_x - q_x|,|p_y-q_y|,|p_z-q_z|}$, along with assemblies $R^{k+1}$ and $R^{' k+1}$ where $S_{p,t} \rightrightarrows^{k+1} R^{k+1}$, and $S_{p,t'} \rightrightarrows^{k+1} R^{'k+1}$.  Consider the direct neighbors (6 of them in 3D) of point $q$.  For each neighbor point $c$, we know that $\Delta_{p,c} > k$.  Therefore, by inductive hypothesis, $R^k_c = R^{'k}_c$ where $S_{p,t} \rightrightarrows^{k}_\Gamma R^{k}$, and $S_{p,t'} \rightrightarrows^{k}_{\Gamma'} R^{'k}$.  Therefore, as attachment of a tile at a position is only dependent on the tiles in neighboring positions of the point, we know that tile $R^{k+1}_q$ may attach to both $R^{k}$ and $R^{'k}$ at position $q$, implying that $R^{k+1}_q = R^{' k+1}_q$ as $\Gamma$ and $\Gamma'$ are deterministic.

\subsection{Proof of Theorem \ref{thm:communicationLower}} \label{proof:communicationLower}
Consider a TAC $\Im = (T,U=(R,B(i)),V(i),\tau)$ that computes the $\Delta$-communication problem.  First, note that $B$ has domain of 1 and 2, and $V$ has domain of just 1 (the input is 2 bits, the output is 1 bit).  We now consider the value $\Delta_V$ defined to be the largest distance between the output bit position of $V$ from either of the two input bit positions in $B$:  Let $\Delta_V \triangleq  \max( \Delta_{B(1), V(1)} , \Delta_{B(2), V(1)})$.  Without loss of generality, assume $\Delta_V = \Delta_{B(1),V(1)}$.  Note that $\Delta_V \geq \frac{1}{2} \Delta$.

Now consider the computation of $f(0,1) = 0$ versus the computation of $f(1,1) = 1$ via our TAC $\Im$.  Let $A^0$ denote the terminal assembly of system $\Gamma_0 = (T, U_{0,1}, \tau)$ and let $A^1$ denote the terminal assembly of system $\Gamma_1 = (T, U_{1,1}, \tau)$. As $\Im$ computes $f$, we know that $A^0_{V(1)} \neq A^1_{V(1)}$.  Further, from Lemma~\ref{lemma:diffProp}, we know that for any $r < \Delta_V$, we have that $W^0_{V(1)} = W^1_{V(1)}$ for any $W^0$ and $W^1$ such that $U_{0,1} \rightrightarrows^{r}_{\Gamma_0} W^0$ and $U_{1,1} \rightrightarrows^{r}_{\Gamma_1} W^1$.  Let $d_\Im$ denote the run time of $\Im$.  Then we know that $U_{0,1} \rightrightarrows^{d_\Im}_{\Gamma_0} A^0$, and $U_{1,1} \rightrightarrows^{d_\Im}_{\Gamma_1} A^1$ by the definition of run time.  If $d_\Im < \Delta_V$, then Lemma~\ref{lemma:diffProp} implies that that $A^0_{V(1)} = A^1_{V(1)}$, which contradicts the fact that $\Im$ compute $f$.  Therefore, the run time $d_\Im$ is at least $\Delta_V \geq \frac{1}{2} \Delta$.

\subsection{Proof of Lemma \ref{lemma:pointCrowding}} \label{proof:pointCrowding}
To see this, consider a bounding box of all $2n$ points.  If all $d$ dimensions of the bounding box were of length strictly less that $\sqrt[d]{2n}$, then the box could not contain all $2n$ points.  Therefore, at least one dimension is of length at least $\lceil\sqrt[d]{2n}\rceil$, implying that there are two points of distance at least $\lceil \sqrt[d]{2n} \rceil-1$ along that particular axis.  If these two points are in $A$ and $B$ respectively, then the claim follows.  If not, say both are from set $A$, then there must be a point in $B$ that is at least $\lceil \frac{1}{2} \lceil\sqrt[d]{2n}\rceil \rceil - 1$ from one of these two points in $A$, implying the claim.

\subsection{Proof of Theorem \ref{thm:addition}} \label{proof:addition}
To show the lower bound, we will reduce the $\Delta$-communication problem for some $\Delta = \Omega(\sqrt[d]{n})$ to the $n$-bit adder problem with a $d$-dimension template. Consider some $n$-bit adder TAC $\Im = (T,U=(F,W),V,\tau)$ such that $U$ is a $d$-dimension template.  The $2n$ sequence of wildcard positions $W$ of this TAC must be contained in $d$-dimensional space by the definition of a $d$-dimension template, and therefore by Lemma~\ref{lemma:pointCrowding} there must exist points $W(i)$ for $i \leq n$, and $W(n+j)$ for $j \leq n$, such that $\Delta_{W(i), W(n+j)} \geq \lceil\frac{1}{2}\lceil\sqrt[d]{2n}\rceil \rceil - 1 = \Omega(\sqrt[d]{n})$.  Now consider two $n$-bit inputs $a=a_n \ldots a_1$ and $b=b_n \ldots b_1$ to the adder TAC $\Im$ such that: $a_k = 0$ for any $k>i$ and any $k < j$, and $a_k = 1$ for any $k$ such that $j\leq k < i$.  Further, let $b_k = 0$ for all $k \neq j$.  The remaining bits $a_i$ and $a_j$ are unassigned variables of value either $0$ or $1$.  Note that the $i+1$ bit of $a + b$ is $1$ if and only if $a_i$ and $b_j$ are both value $1$.  This setup constitutes our reduction of the $\Delta$-communication problem to the addition problem as the adder TAC template with the specified bits hardcoded in constitutes a template for the $\Delta$-communication problem that produces the AND of the input bit pair.  We now specify explicitly how to generate a communication TAC from a given adder TAC.

For given $n$-bit adder TAC $\Im = (T,U=(F,W),V,\tau)$ with dimension $d$ input template, we derive a $\Delta$-communication TAC $\rho = (T, U^2 = (F^2, W^2), V^2,\tau)$ as follows.  First, let $W^2(1) = W(i)$, and $W^2(2) = W(n + j)$.  Note that as $\Delta_{W(i), W(n+j)} =\Omega(\sqrt[d]{n})$, $W^2$ satisfies the requirements for a $\Delta$-communication input template for some $\Delta=\Omega(\sqrt[d]{n})$.  Derive the frame of the template $F^2$ from $F$ by adding tiles to $F$ as follows:  For any positive integer $k > i$, or $k < j$, or $k>n$ but not $k=n+j$, add a translation of $t_0$ (with label ``0") translated to position $W(k)$.  Additionally, for any $k$ such that $j \leq k < i$, add a translation of $t_1$ (with label ``1") at translation $W(k)$.

Now consider the $\Delta$-communication TAC $\rho = (T, U^2=(F^2, W^2), V^2, \tau)$ for some $\Delta = \Omega(\sqrt[d]{n})$.  As assembly $U^2_{a_i, b_j} = U_{a_1\ldots a_n , b_1 \ldots b_n}$, we know that the worst case run time of $\rho$ is at most that of the worst case run time of $\Im$.  Therefore, by Theorem~\ref{thm:communicationLower}, we have that $\Im$ has a run time of at least $\Omega(\sqrt[d]{n})$.

\subsection{Proof of Theorem \ref{thm:multiplication}} \label{proof:multiplication}
Consider some $n$-bit multiplier TAC $\Im = (T, U=(F,W), V, \tau)$ with $d$-dimension input template.  By Lemma~\ref{lemma:pointCrowding}, some $W(i)$ and $W(n+j)$ must have distance at least $\Delta \geq \lceil\frac{1}{2}\lceil\sqrt[d]{2n}\rceil \rceil - 1$.  Now consider input strings $a=a_n \ldots a_1$ and $b=b_n \ldots b_1$ to $\Im$ such that $a_i$ and $b_j$ are of variable value, and all other $a_k$ and $b_k$ have value 0.  For such input strings, the $i+j$ bit of the product $ab$ has value 1 if and only if $a_i = b_j = 1$.  Thus, we can convert the $n$-bit multiplier system $\Im$ into a $\Delta$-communication TAC with the same worst case run time in the same fashion as for Theorem~\ref{thm:addition}, yielding a $\Omega(\sqrt[d]{n})$ lower bound for the worst case run time of $\Im$.

\section{Average Case $O(\log{n})$ Time Addition.} \label{app:logn}
\subsection{Construction} \label{sec:avgcase_construction}

\begin{figure}[htp]
	\centering
	\includegraphics[scale=1.2]{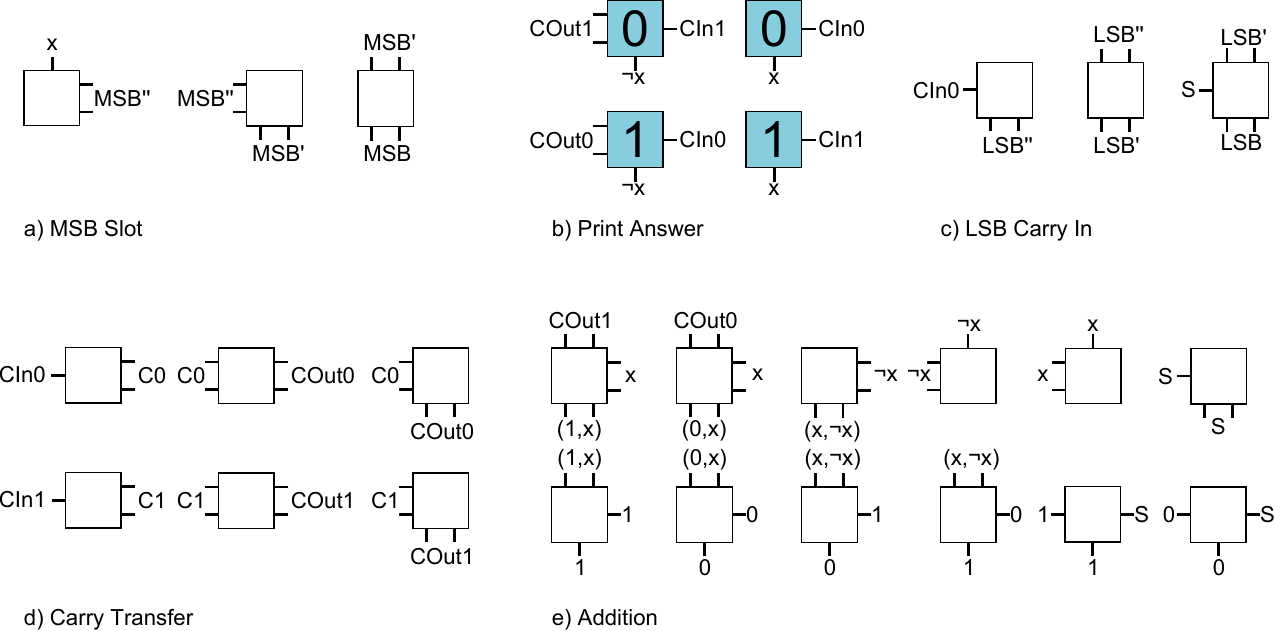}
	\caption{This is the complete set of tiles necessary to implement average case $O(\log{n})$ and worst case $O(n)$ addition.}
	\label{fig:avg_tile_set}
\end{figure}

We summarize the mechanism of addition presented here in a short example. The complete tile set may be found in Figure \ref{fig:avg_tile_set}.
\paragraph{Input Template.} The input template, or seed, for the construction of an adder with an $O(\log{n})$ average case is shown in Figure \ref{fig:LogLineSeed}.  This input template is composed of $n$ blocks, each containing three tiles.  Within a block, the easternmost tile is the $S$ labeled tile followed by two tiles representing $A_k$ and $B_k$, the $k$th bits of $A$ and $B$ respectively.  Of these $n$ blocks, the easternmost and westernmost blocks of the template assembly are unique.  Instead of an $S$ tile, the block furthest east has an $LSB$-labeled tile which accompanies the tiles representing the least significant bits of $A$ and $B$, $A_0$ and $B_0$. The westernmost block of the template assembly contains a block labeled $MSB$ instead of the $S$ block and accompanies the most significant bits of $A$ and $B$, $A_{n-1}$ and $B_{n-1}$.

\begin{figure}[htp]
	\centering
	\includegraphics[scale=1.4]{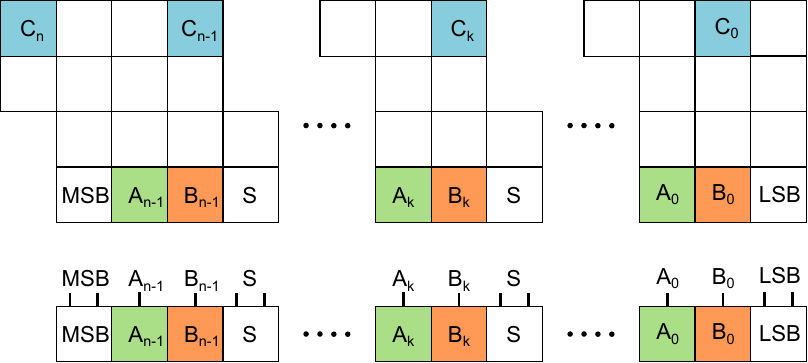}
	\caption{Top: Output template displaying addition result $C$ for $O(\log{n})$ average case addition construction. Bottom: Input template composed of $n$ blocks of three tiles each, representing $n$-bit addends $A$ and $B$. }
	\label{fig:LogLineSeed}
\end{figure}

\paragraph{Computing Carry Out Bits.}
For clarity, we demonstrate the mechanism of this adder TAC through an example by selecting two 4-bit binary numbers $A$ and $B$ such that the addends $A_i$ and $B_i$ encompass every possible addend combination. The input template for such an addition is shown in Figure \ref{fig:loglinesteps}a where orange tiles represent bits of $A$ and green tiles represent bits of $B$. Each block begins the computation in parallel at each $S$ tile.
\begin{figure}[htp]
	\centering
	\includegraphics[scale=0.90]{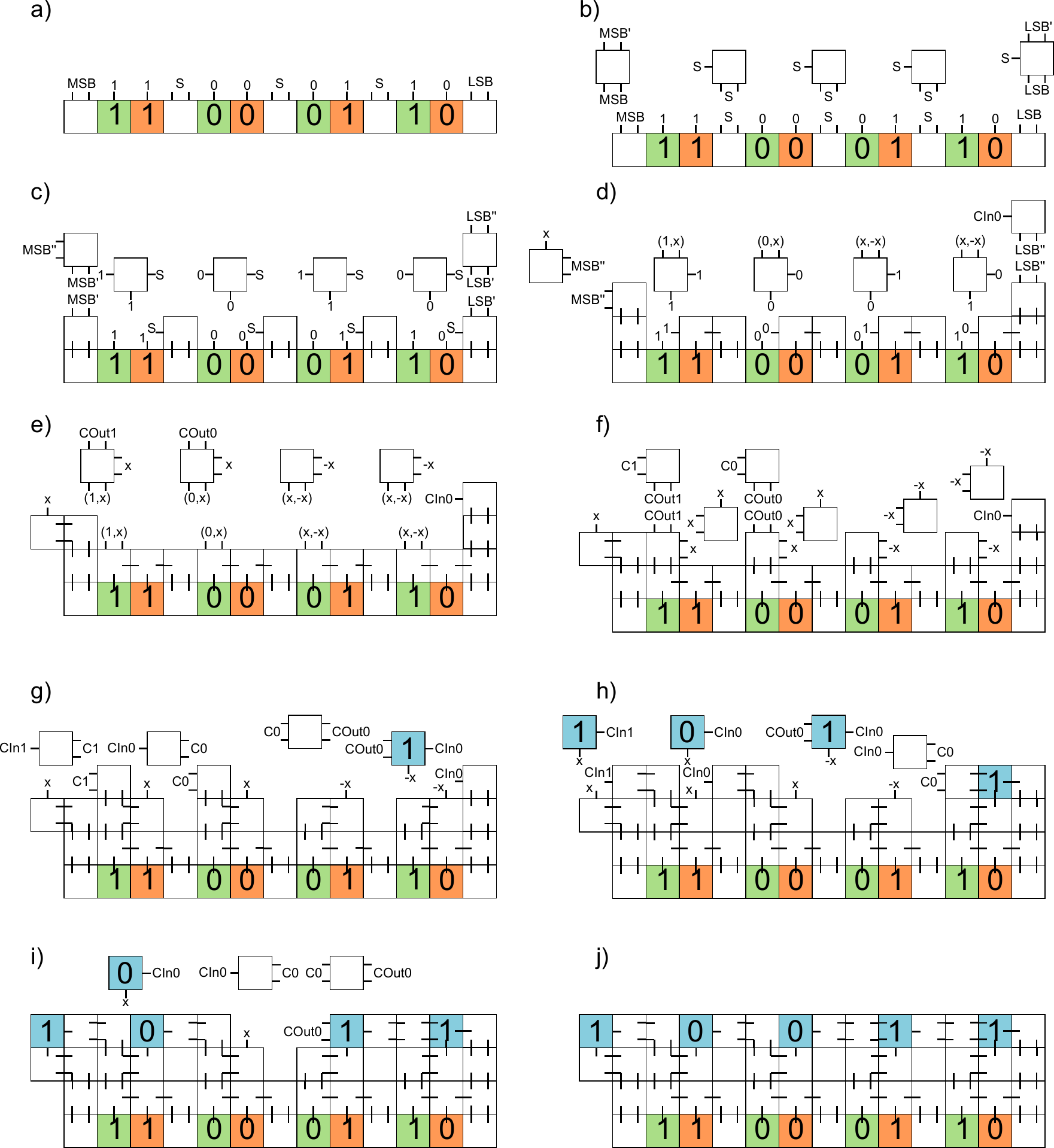}
	\caption{Example tile assembly system to compute the sum of 1001 and 1010 using the adder TAC presented in Section \ref{sec:avgcase}. a) The template with the two $4$-bit binary numbers to be summed. b) The first step is the parallel binding of tiles to the $S$, $LSB$, and $MSB$ tiles. c)Tiles bind cooperatively to west-face $S$ glues and glues representing bits of $B$. The purpose of this step is to propagate bit $B_i$ closer to $A_i$ so that in d) a tile may bind cooperatively, processing information from both $A_i$ and $B_i$. e) Note that addend-pairs consisting of either both $1$s or both $0$s have a tile with a north face glue consisting of either (1,x) or (0,x) bound to the $A_i$ tile. This glue represents a carry out of either $1$ or $0$ from the addend-pair. In (e-g) the carry outs are propagated westward via tile additions for those addend-pairs where the carry out can be determined. Otherwise, spacer tiles bind. h) Tiles representing bits of $C$ (the sum of $A$ and $B$) begin to bind where a carry in is known. i-j) As carry bits propagate through sequences of (0,1) or (1,0) addend pairs, the final sum is computed. }
	\label{fig:loglinesteps}
\end{figure}
 After six parallel steps (Figure \ref{fig:loglinesteps}b-g), all carry out bits, represented by glues $C0$ and $C1$, are determined for addend-pairs where both $A_i$ and $B_i$ are either both $0$ or both $1$. For addend pairs $A_i$ and $B_i$ where one addend is $0$ and one addend is $1$, the carry out bit cannot be deduced until a carry out bit has been produced by the previous addend pair, $A_{i-1}$ and $B_{i-1}$. By step seven, a carry bit has been presented to all addend pairs that are flanked on the east by an addend pair comprised of either both $0$s or both $1$s, or that are flanked on the east by the $LSB$ start tile, since the carry in to this site is always $0$ (Figure \ref{fig:loglinesteps}h). For those addend pairs flanked on the east by a contiguous sequence of size $j$ pairs consisting of one $1$ and one $0$, $2j$ parallel attachment steps must occur before a carry bit is presented to the pair.

\paragraph{Computing the Sum.}
Once a carry out bit has been computed and carried into an addend pair $A_i$ and $B_i$, two parallel tile addition steps are required to compute the sum of the addend pair (Figure \ref{fig:loglinesteps}g-j) .

\begin{figure}[htp]
	\centering
	\includegraphics[scale=1.0]{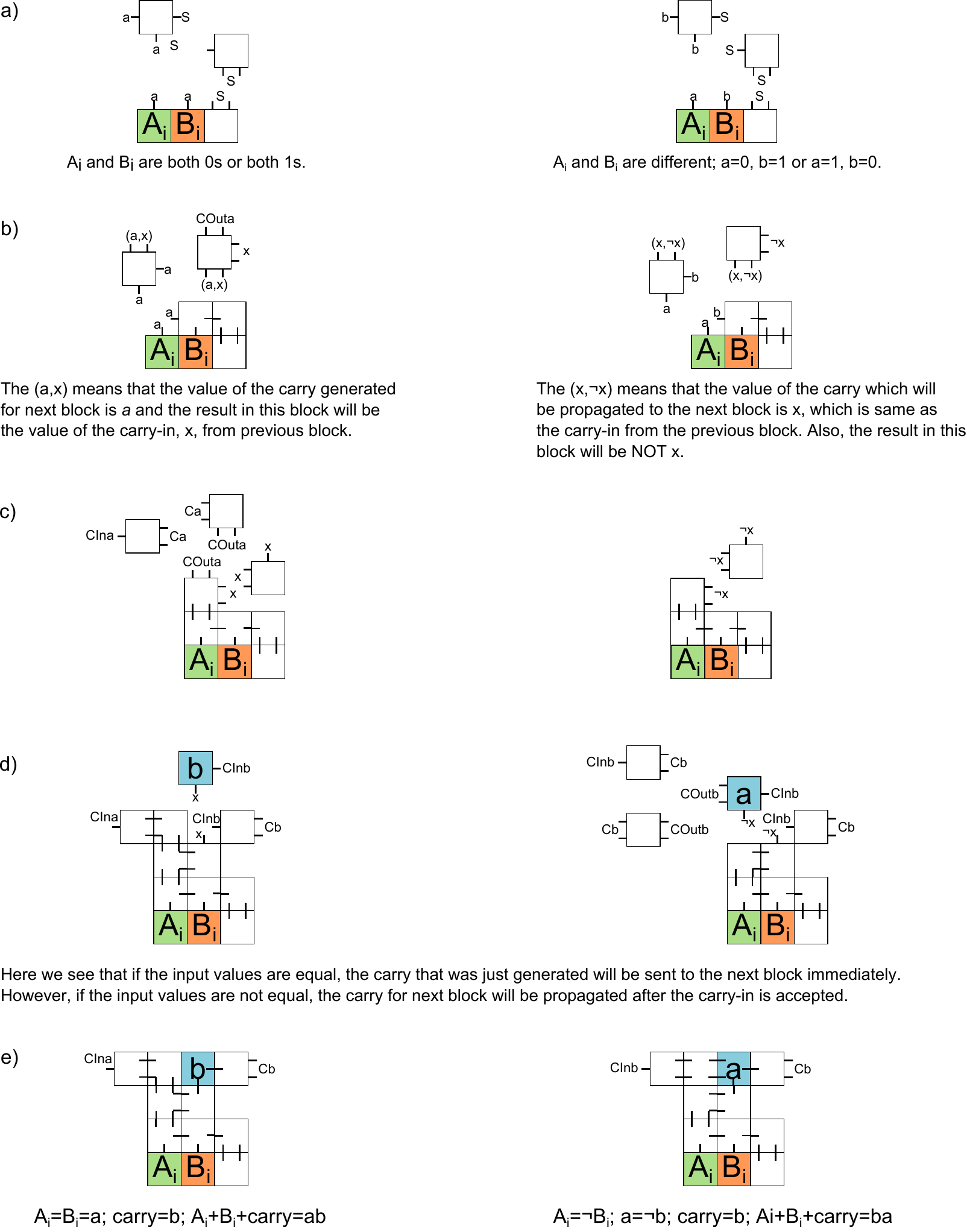}
	\label{fig:LogAddLin000}
    \caption{The set of figures on the left side show how a carry out may be generated before the \emph{addend-pair} has recieved a carry in. The set of figures on the right side show how a carry out can be dependent upon a carry in having been recieved by the \emph{addend-pair}.}
\end{figure}
\subsection{Time Complexity.} \label{sec:avgcase_time}
\subsubsection{Parallel Time}
\paragraph{$O(n)$ - worst case.}
We first show that this construction has an $O(n)$ worst case run-time under the timing model presented in Section \ref{sec:runtime} \emph{Run-time}. Consider a binary sequence $T$ of length $2n$ representing two $n$-bit binary numbers $A$ and $B$. $A_{0}$ and $B_{0}$ represent the least significant bits of $A$ and $B$, respectively, and $A_{n}$ and $B_{n}$ represent the most significant bits of $A$ and $B$, respectively.  The formatting of $T$ is such that $T_{i}=B_{i/2}$ if $i$ is even, and $T_{i}=A_{(i-1)/2}$ if $i$ is odd. Sequence $T$ is shown in Figure \ref{fig:bitpair}a.

\begin{figure}[htp]
	\centering
	\includegraphics[scale =0.8]{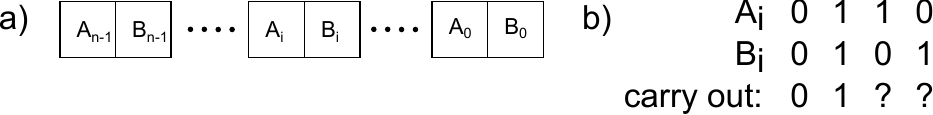}
	\caption{a) Binary sequence $T$ populated by n-bit binary numbers $A$ and $B$. b) Four possible $(A_{i}, B_{i})$ addend-pair combinations.}
	\label{fig:bitpair}
\end{figure}

$T$ contains $n$ addend-pairs, $(A_{i}, B_{i})$, which are ordered pairs consisting of the $i$th bit of $A$ and the $i$th bit of $B$. The four possible values for each addend-pair are shown in Figure \ref{fig:bitpair}b, along with the carry bits they produce upon addition.  In $T$, there exist sequences of various sizes up to $k$ addend-pairs such that every addend-pair in the interval from $(A_{i}, B_{i})$ to $(A_{i+j-1}, B_{i+j-1})$ of a size $j$ addend-pair sequence is $(1,0)$ or $(0,1)$. In the adder TAC outlined in Section \ref{sec:avgcase} and Appendix \ref{sec:avgcase_construction}, the value of the carry bit produced upon addition of $A_{i}+B_{i}$ for every $(0,0)$ and $(1,1)$ addend-pair is known and made available to the next addend-pair $(A_{i+1}, B_{i+1})$ after seven parallel tile addition steps, including the carry bit into $(A_{0}, B_{0})$. Therefore, after a constant number of parallel tile addition steps, the first addend-pair $(A_{i}, B_{i})$ in any $j$ addend-pair-length sequence of $(1,0)$ or $(0,1)$ addend-pairs will be presented with the carry bit from the previous addend-pair $(A_{i-1}, B_{i-1})$. After $2j$ subsequent parallel tile addition steps, the carry bit from the last addend-pair $(A_{i+j-1}, B_{i+j-1})$ of the $j$-size sequence of consecutive $(0,1)$ and $(1,0)$ addend-pairs is presented to $(A_{i+j}, B_{i+j})$. Once an addend-pair has recieved a carry-in bit, the final sum is computed in one tile addition step. If $k$ is the longest contiguous sequence of $(0,1)$ and $(1,0)$ addend-pairs, then $2k+8$ parallel tile addition steps are required to compute the sum $C$ of $A$ and $B$ (Figure \ref{fig:upperbound_example}). Therefore, the time complexity is $O(k)$.
\begin{figure}[htp]
	\centering
	\includegraphics[scale = 1.0]{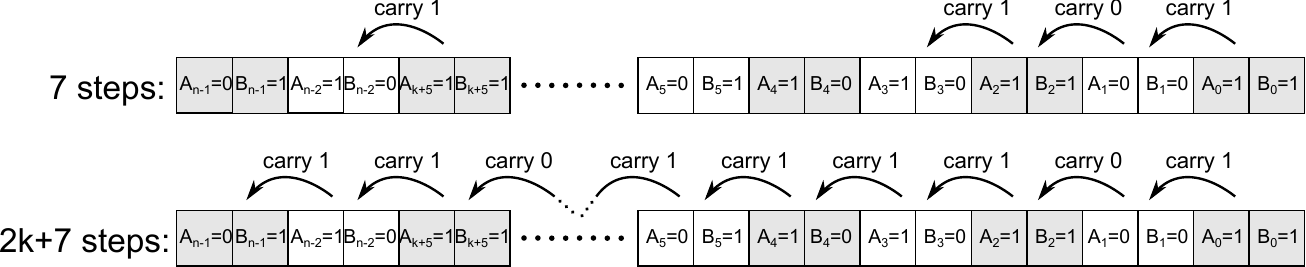}
	\caption{}
	\label{fig:upperbound_example}
\end{figure}
Since the growth is bounded upwards by the longest contiguous sequence $k$ of $(0,1)$ and $(1,0)$ addend-pairs, then the worst-case scenario occurs when $k=n$. Thus, the worst-case run time is $O(n)$.

\paragraph{$O(\log{n})$ - average case.}
We now show that the average case run-time is $O(\log{n})$ under the timing model presented in Section \ref{sec:runtime}. In two $n$-bit randomly generated binary numbers, $A$ and $B$, the probability of one of the $(1,0)$ $(0,1)$ addend-pair cases occurring at $(A_{i},B_{i})$ is $1/2$. The sequence of $(A_{i},B_{i})$ bit pairs can thus be thought of as a Bernoulli process in which the likelihood of occurrence of a $(0,1)$ or $(1,0)$ addend-pair is equal to the occurrence of a $(1,1)$ or $(0,0)$ addend-pair. As shown above, the runtime is bounded by $k$, the longest contiguous sequence of $(0,1)$ or $(1,0)$ addend-pairs, which might be thought of as the longest sequence of heads in $n$ independent fair coin tosses.  Using Lemma \ref{lem:longest_substring}, the expected longest run of heads in $n$ coin tosses is $O(\log{n})$~\cite{SCHILL1990}.  Therefore, the average case time complexity of the tile addition algorithm described above is $O(\log{n})$.

\subsubsection{Continuous Time}\label{sec:logadditioncontinuous}
To analyze the continuous run time of our TAC we first introduce a Lemma \ref{lemma:maxExpSums}.  This Lemma is essentially Lemma 5.2 of~\cite{woods2012EAS} but slightly generalized and modified to be a statement about exponentially distributed random variables rather than a statement specific to the self-assembly model considered in that paper.  To derive this Lemma, we make use of a well known Chernoff bound for the sum of exponentially distributed random variables stated in Lemma~\ref{lemma:chernoff}.

\begin{lemma}\label{lemma:chernoff}
Let $X = \sum_{i=1}^{n} X_i$ be a random variable denoting the sum of $n$ independent exponentially distributed random variables each with respective rate $\lambda_i$. Let $\lambda = \min\{\lambda_i | 1\leq i \leq n\}$.  Then $Pr[ X > (1+\delta)n/\lambda ] < (\frac{\delta+1}{e^\delta})^n$.
\end{lemma}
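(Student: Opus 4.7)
The plan is to reduce the non-identically-distributed problem to the i.i.d.\ case at the common minimum rate $\lambda$ via stochastic dominance, and then apply the standard Chernoff-style argument using the moment generating function of the exponential distribution.

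First I would observe that for each $i$, since $\lambda_i \geq \lambda$, the tail estimate $\Pr[X_i > s] = e^{-\lambda_i s} \leq e^{-\lambda s}$ shows that $X_i$ is stochastically dominated by a random variable $Y_i \sim \mathrm{Exp}(\lambda)$. Taking the $Y_i$ mutually independent, a standard coupling gives $X \leq Y := \sum_{i=1}^n Y_i$ almost surely, hence $\Pr[X > a] \leq \Pr[Y > a]$ for every threshold $a$. Since $Y$ is a sum of $n$ i.i.d.\ $\mathrm{Exp}(\lambda)$ variables (i.e.\ $\mathrm{Gamma}(n,\lambda)$-distributed), it suffices to bound its tail at $a = (1+\delta)n/\lambda$.

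Next I would apply Markov's inequality to $e^{tY}$ for a parameter $t \in (0, \lambda)$. Using the exponential MGF $\mathbb{E}[e^{tY_i}] = \lambda/(\lambda - t)$ together with independence,
\[
\Pr\!\left[Y > (1+\delta)n/\lambda\right] \;\leq\; \left(\frac{\lambda}{\lambda-t}\right)^{\!n} e^{-t(1+\delta)n/\lambda}.
\]
Reparameterizing with $s = t/\lambda \in (0,1)$ simplifies this to $(1-s)^{-n} e^{-s(1+\delta)n}$. Minimizing the exponent $-n\log(1-s) - s(1+\delta)n$ over $s$ yields the optimizer $s^{\star} = \delta/(1+\delta)$, so that $1 - s^{\star} = 1/(1+\delta)$ and the bound collapses to $\bigl((1+\delta)/e^{\delta}\bigr)^n$, matching the claim.

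There is no real conceptual obstacle here; this is a textbook Chernoff/Cram\'er bound once the stochastic dominance reduction is in hand. The only minor subtlety is the strict inequality in the stated bound rather than $\leq$: this follows because Markov's inequality is strict for the continuous random variable $Y$ (which assigns positive probability to $[0,a)$), so the final inequality may be taken to be strict, or equivalently one can absorb an arbitrarily small slack without affecting the asymptotic form.
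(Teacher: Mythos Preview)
Your argument is correct: the stochastic-dominance reduction to the i.i.d.\ $\mathrm{Exp}(\lambda)$ case followed by the standard Markov/MGF optimization at $s^\star=\delta/(1+\delta)$ gives exactly $((1+\delta)/e^{\delta})^n$, and the strictness remark is fine. The paper does not actually prove this lemma---it simply cites it as a well-known Chernoff bound---so there is no in-paper argument to compare against; your derivation is a standard and complete justification. (As a minor aside, the dominance step can be replaced by the direct MGF bound $\lambda_i/(\lambda_i-t)\le \lambda/(\lambda-t)$ for $0<t<\lambda$, but your route is equally valid.)
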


\begin{lemma}\label{lemma:maxExpSums}
Let $t_1 ,\ldots , t_m$ denote $m$ random variables where each $t_i = \sum_{j=1}^{k_i}t_{i,j}$ for some positive integer $k_i$, and the variables $t_{i,j}$ are independent exponentially distributed random variable each with respective rate $\lambda_{i,j}$.  Let $k= \max(k_1 , \ldots k_m)$, $\lambda = \min\{\lambda_{i,j} | 1\leq i\leq m, 1\leq j \leq k_i\}$, $\lambda' = \max\{\lambda_{i,j} | 1\leq i\leq m, 1\leq j \leq k_i\}$, and $T=\max(t_1 ,\ldots t_m)$.  Then $E[T]=O(\frac{k + \log m}{\lambda})$ and $E[T]=\Omega(\frac{k+\log m}{\lambda'})$.
\end{lemma}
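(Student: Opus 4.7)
The plan is to prove the two bounds separately.

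For the upper bound $E[T] = O((k+\log m)/\lambda)$, I first stochastically dominate each $t_i$ by a single auxiliary variable $Y = \sum_{j=1}^{k} E_j$, where the $E_j$ are i.i.d.\ $\mathrm{Exp}(\lambda)$. A term-by-term coupling justifies this: since each $\lambda_{i,j} \geq \lambda$, a rate-$\lambda$ exponential stochastically dominates a rate-$\lambda_{i,j}$ exponential, and appending $k - k_i$ extra nonnegative summands can only increase the total. Hence $\Pr[t_i > x] \leq \Pr[Y > x]$ for every $x$, and a union bound gives $\Pr[T > x] \leq \min(1, m\Pr[Y > x])$. I then evaluate $E[T] = \int_0^\infty \Pr[T > x]\,dx$ by splitting at a threshold $x_0 = \Theta((k + \log m)/\lambda)$: the initial segment contributes at most $x_0$, and for the tail I substitute $x = (1+\delta)k/\lambda$ and apply Lemma~\ref{lemma:chernoff}, using the inequality $(\tfrac{\delta+1}{e^\delta})^{k} \leq e^{-\delta k/2}$ which holds once $\delta$ exceeds an absolute constant. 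Calibrating $x_0$ so that the corresponding lower limit $\delta_0$ satisfies $\delta_0 k/2 \geq \log m$ causes the geometric-type tail integral to collapse to $O(1/\lambda)$, giving $E[T] = O((k + \log m)/\lambda)$.

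For the lower bound $E[T] = \Omega((k+\log m)/\lambda')$, I combine two weaker bounds and take their maximum. First, let $i^*$ be any index with $k_{i^*} = k$; then $T \geq t_{i^*}$, and by linearity $E[t_{i^*}] = \sum_{j=1}^{k} 1/\lambda_{i^*,j} \geq k/\lambda'$. Second, $T \geq \max_i t_{i,1}$, and since each $\lambda_{i,1} \leq \lambda'$ every $t_{i,1}$ stochastically dominates an $\mathrm{Exp}(\lambda')$ variable; because the $t_{i,1}$ are mutually independent, I can couple them to i.i.d.\ $Z_i \sim \mathrm{Exp}(\lambda')$ with $t_{i,1} \geq Z_i$, obtaining $E[T] \geq E[\max_i Z_i] = H_m/\lambda' = \Omega(\log m / \lambda')$. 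Since $k + \log m \leq 2\max(k, \log m)$, these two combine to give $E[T] = \Omega((k + \log m)/\lambda')$.

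The main technical obstacle is calibrating the threshold $x_0$ in the upper bound so that the Chernoff tail past it, amplified by the factor $m$ from the union bound, integrates to only $O(1/\lambda)$ rather than something that swamps $x_0$ itself. This hinges on the simple but easy-to-botch inequality $\delta - \log(\delta+1) \geq \delta/2$ for $\delta$ above an absolute constant, which underlies the exponential decay needed in the tail. The stochastic-domination steps in both directions are conceptually routine but warrant careful statement, because the $t_i$ have unequal lengths $k_i$ and the summands have unequal rates spanning the full interval $[\lambda, \lambda']$.
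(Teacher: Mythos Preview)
Your proposal is correct and follows essentially the same route as the paper: for the upper bound, Chernoff on a sum of exponentials, a union bound over the $m$ chains, the simplification $(\tfrac{\delta+1}{e^\delta})^k \leq e^{-k\delta/2}$ for large $\delta$, and a split of the tail integral at a threshold of order $(k+\log m)/\lambda$; for the lower bound, the same two ingredients (one full chain of length $k$, and the max of the first terms) combined via $\max$. Your explicit stochastic-domination step padding each $t_i$ to a sum of $k$ i.i.d.\ $\mathrm{Exp}(\lambda)$ variables is a small but genuine improvement in rigor over the paper, which applies Lemma~\ref{lemma:chernoff} to $t_i$ (a sum of only $k_i$ terms) yet writes the resulting bound with exponent $k$ without justification.
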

\begin{proof}
First we show that $E[T]=O(\frac{k + \log m}{\lambda})$.  For each $t_i$ we know that $E[t_i] = \sum_{j=1}^{k_i}\lambda_{i,j} \leq k/\lambda$.  Applying Lemma~\ref{lemma:chernoff} we get that:

$$
Pr[t_i > k/\lambda (1+\delta)] < (\frac{\delta+1}{e^\delta})^k.
$$

Applying the union bound we get the following for all $\delta \geq 3$:
$$
Pr[T > \frac{(1+\delta)k}{\lambda}] \leq m(\frac{\delta+1}{e^\delta})^k \leq m e^{-k\delta/2}, \text{ for all }\delta > 3.
$$

Let $\delta_m = \frac{2 \ln m}{k}$.  By plugging in $\delta_m + x$ for $\delta$ in the above inequality we get that:

$$ Pr[T > \frac{(1 +\delta_m + x)k}{\lambda}] \leq e^{-kx/2}.$$

Therefore we know that:

$$ E[T] \leq \frac{k+2\ln m}{\lambda} + \int_{x=0}^{\infty} e^{-kx/2} = O(\frac{k +\log m}{\lambda}).$$

To show that $E[T]=\Omega(\frac{k+\log m}{\lambda'})$, first note that $T \geq \max_{1\leq i\leq m} \{t_{i,1}\}$ and that $E[\max_{1\leq i\leq m}\{t_{i,1}\}] = \Omega(\frac{\log m}{\lambda'})$, implying that $E[T] = \Omega(\frac{\log m}{\lambda'})$.  Next, observe that $T \geq t_i$ for each $i$.  Since $E[t_i]$ is at least $k/\lambda'$ for at least one choice of $i$, we therefore get that $E[T] = \Omega(k/\lambda')$.

\end{proof}

The next Lemma helps us upper bound continuous run times by stating that if the assembly process is broken up into separate phases, denoted by a sequence of subassembly \emph{waypoints} that must be reached, we may simply analyze the expected time to complete each phase and sum the total for an upper bound on the total run time.  The following Lemma follows easily from the fact that a given open tile position of an assembly is filled at a rate at least as high as any smaller subassembly.

\begin{lemma}\label{lemma:waypoints}
Let $\Gamma=(T,S,\tau)$ be a deterministic tile assembly system and $A_0\ldots A_r$ be elements of $PROD_\Gamma$ such that $A_0 = S$, $A_r$ is the unique terminal assembly of $\Gamma$, and $A_0 \subseteq A_2 \subseteq \ldots A_r$.  Let $t_i$ be a random variable denoting the time for $\Gamma_{MC}$ to transition from $A_{i-1}$ to the first $A'_{i}$ such that $A_{i} \subseteq A'_i$ for $i$ from 1 to $r$.  Then $\varsigma_\Gamma \leq \sum_{i=1}^r E[t_i]$.
\end{lemma}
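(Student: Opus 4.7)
The plan is to decompose the expected terminal time into the waypoint-to-waypoint pieces actually realized by $\Gamma_{MC}$, then bound each piece via a monotone coupling. Let $T_0 = 0$ and, for $i \geq 1$, let $T_i$ denote the first time at which the chain, started from $S = A_0$, reaches an assembly $B_i$ satisfying $A_i \subseteq B_i$. Since $A_r$ is the unique terminal assembly, $\varsigma_\Gamma = E[T_r]$, and telescoping gives $E[T_r] = \sum_{i=1}^r E[T_i - T_{i-1}]$, so it suffices to prove $E[T_i - T_{i-1}] \leq E[t_i]$ for each $i$.

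For that step I would invoke the strong Markov property at the stopping time $T_{i-1}$: conditional on the state $B$ reached at that moment (which necessarily satisfies $A_{i-1} \subseteq B$), the remainder $T_i - T_{i-1}$ is distributed as the hitting time of a superset of $A_i$ when the chain is restarted at $B$. Writing $g(B)$ for the expectation of this hitting time and integrating over the law of $B_{i-1}$, the task reduces to proving the monotonicity inequality $g(B) \leq g(A_{i-1}) = E[t_i]$ for every producible $B$ with $A_{i-1} \subseteq B$.

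The main technical step is a coupling that establishes this monotonicity. I would attach to every position $p$ appearing in $A_r \setminus A_0$ a single Poisson point process of rate $1/|T|$, independently across $p$, and use this common family to drive two coupled copies of the chain, $X$ started at $A_{i-1}$ and $Y$ started at $B$: in each copy, position $p$ is filled at the first Poisson point of its process occurring after $p$ becomes attachable. The memoryless property, together with the splitting and superposition properties of Poisson processes, ensures that each marginal is a faithful realization of $\Gamma_{MC}$, since at any state $A$ the minimum over currently attachable positions of the next Poisson point is exponential with rate $N(A)/|T|$ and is uniform over those positions, matching the jump law of the original chain. I would then argue $X_t \subseteq Y_t$ for all $t \geq 0$ by induction on placement times: the base case is $X_0 = A_{i-1} \subseteq B = Y_0$, and if the inclusion holds up to the moment $p$ is placed in $X$, every cooperative neighbor required to make $p$ attachable lies already in $Y$, so $p$ became attachable in $Y$ no later than in $X$ and its first post-attachability Poisson point fires in $Y$ at least as early. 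Hence the first time $Y$ contains $A_i$ is pointwise dominated by the first time $X$ does, giving $g(B) \leq g(A_{i-1})$. Summing over $i$ then completes the bound. The principal obstacle is confirming that the shared Poisson-clock construction genuinely realizes $\Gamma_{MC}$ on each side; once that marginal check is dispatched, the monotonicity of aTAM attachability under assembly inclusion makes the sample-path comparison essentially mechanical.
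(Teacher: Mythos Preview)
Your proposal is correct and in fact considerably more detailed than the paper's own treatment: the paper does not prove this lemma beyond the single sentence that it ``follows easily from the fact that a given open tile position of an assembly is filled at a rate at least as high as any smaller subassembly.'' Your telescoping plus strong Markov reduction, followed by the shared Poisson-clock coupling to establish the monotonicity $g(B)\le g(A_{i-1})$ for $A_{i-1}\subseteq B$, is exactly the natural way to make that one-line remark rigorous; the monotonicity of attachability under assembly inclusion in the deterministic aTAM is precisely the ingredient the paper is invoking, and your sample-path comparison $X_t\subseteq Y_t$ is the standard mechanism for turning it into a stochastic domination of hitting times. So there is no genuine difference in approach---you have simply supplied the argument the paper omits.
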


We now bound the run time of our adder TAC by breaking the assembly process up according to waypoint assemblies $A_0=S,A_1,$ and $A_2$.  We then bound the expected transition time from $A_0$ to $A_1$ and $A_1$ to $A_2$ to get a bound from Lemma~\ref{lemma:waypoints}.

Let producible assembly $A_1$ be the seed line with the additional 4 or 5 tiles (dependant on the type of bit pair) placed atop each bit pair as shown in Figure~\ref{fig:loglinesteps} (g).  As each collection of 4 or 5 tiles can assemble independently, the time $t_1$ to transition from the seed to a superset of $A_1$ is bounded by the maximum of $n$ sums of at most 5 exponentially distributed random variables with rate $1/|T|$, where $T$ is the tileset for the adder TAC.  By Lemma~\ref{lemma:maxExpSums}, $E[t_1]= O(\log n)$ time for any input pair of $n$-bit numbers.

Now consider the time $t_2$ to transition from $A_1$ to the unique terminal assembly of the system, i.e, the completion of the third row of the assembly. For this phase we are interested in the time taken for the last finishing chain of tile placements for each maximal contiguous sequence of $(0,1) - (1,0)$ addend pairs.  If the largest of these sequences is length $k$, then $t_2$ is bounded by the maximum of at most $n$ sums of at most $3k$ exponentially distributed random variables with rate $1/|T|$.  Lemma~\ref{lemma:maxExpSums} therefore implies $E[t_2] = O(k)$.  As $k \leq n$ we get a worst case time of $E[t_2]=O(n)$.  Further, by Lemma~\ref{lem:longest_substring}, we have that the average value of $k$ is $O(\log n)$, yielding an average case time of $E[t_2]=O(\log n)$.  Applying Lemma~\ref{lemma:waypoints}, we get an upper bound for the total runtime of our TAC of $E[t_1]+E[t_2]$ which is $O(n)$ in the worst case and $O(\log n)$ in the average case.

Our $O(\log n)$ average case is the best that can be hoped for in the continuous time model as any adder TAC must place at least $n+1$ tiles to guarantee correctness.  From Lemma~\ref{lemma:maxExpSums}, this immediately gives an $\Omega(\log n)$ lower bound in all cases:

\begin{theorem}
The continuous run time for any $n$-bit adder TAC is $\Omega(\log n)$ in all cases.
\end{theorem}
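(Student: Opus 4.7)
The plan is to reduce the theorem to Lemma~\ref{lemma:maxExpSums} by exhibiting, for every input $b$, a family of $n+1$ independent exponentially distributed random variables each of rate $1/|T|$ whose maximum lower-bounds the continuous run time. The key observation is that any $n$-bit adder TAC $\Im = (T, U, V, \tau)$ has an output template $V$ of size $n+1$, since the sum of two $n$-bit numbers requires $n+1$ bits to represent. Consequently, for every input $b$ the unique terminal assembly of $\Gamma_{\Im, b}$ must place a specific labeled tile $t_i^\ast$ at each output position $V(i)$. Let $\tau_i$ denote the random time at which $V(i)$ first receives its tile; the continuous run time is at least $\max_i \tau_i$.

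To extract the $n+1$ independent exponentials I would use a standard Poisson-process coupling of the CTMC $\Gamma_{MC}$: for each pair consisting of a tile type $t \in T$ and an integer position $p$ adjacent to the producible region, introduce an independent Poisson process $P_{t,p}$ of rate $1/|T|$, and simulate the CTMC by accepting each arrival of $P_{t,p}$ as the transition ``attach $t$ at $p$'' whenever this is a valid tile attachment in the current state, discarding it otherwise. Standard thinning and memorylessness properties guarantee that this yields the correct CTMC law, because at every state $A$ the total rate of accepted arrivals equals $A$'s total exit rate and the probability that the next accepted arrival is any specific transition matches the prescribed ratio. Now let $X_i$ denote the time of the \emph{first} arrival of the specific process $P_{t_i^\ast, V(i)}$. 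Since the attachment that fills $V(i)$ must occur at some arrival of $P_{t_i^\ast, V(i)}$, we have $\tau_i \geq X_i$; and because the processes $P_{t_i^\ast, V(i)}$ are distinct across $i$, the variables $X_1, \ldots, X_{n+1}$ are i.i.d.\ $\operatorname{Exp}(1/|T|)$.

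The proof is completed by invoking Lemma~\ref{lemma:maxExpSums} with $m = n+1$, $k_i = 1$ for all $i$, and $\lambda' = 1/|T|$. Since $|T|$ is a constant, the lemma gives $E[\max_i X_i] = \Omega((1 + \log(n+1))/\lambda') = \Omega(|T|\log n) = \Omega(\log n)$. Because $\varsigma_{\Gamma_{\Im,b}} \geq E[\max_i \tau_i] \geq E[\max_i X_i]$ holds for every input $b$, the claimed $\Omega(\log n)$ bound applies in all cases, matching the ``in all cases'' wording of the theorem.

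The main obstacle I anticipate is the formal justification of the coupling together with the stochastic domination $\tau_i \geq X_i$. Both are standard but slightly fiddly: the correctness of the coupling rests on the thinning property of Poisson processes combined with exponential memorylessness, and the domination $\tau_i \geq X_i$ follows from the observation that the attachment filling $V(i)$ is one of the arrivals of $P_{t_i^\ast, V(i)}$ and therefore cannot precede its first arrival. Once these technical points are in place the application of Lemma~\ref{lemma:maxExpSums} is a one-line computation, which is consistent with the paper's assertion that the bound follows ``immediately.''
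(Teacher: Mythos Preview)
Your proposal is correct and follows essentially the same route as the paper: both argue that the $n+1$ output positions must each be filled by a tile during assembly, and then invoke the lower-bound half of Lemma~\ref{lemma:maxExpSums} with $m=n+1$ and $k_i=1$. The paper's own justification is a single sentence (``any adder TAC must place at least $n+1$ tiles to guarantee correctness''), whereas you supply the Poisson-coupling argument that actually produces $n+1$ \emph{independent} exponentials dominating the fill times; this is precisely the missing rigor behind the paper's one-liner, so your write-up is a faithful expansion rather than a different approach.

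Two small remarks. First, your phrase ``since $|T|$ is a constant'' is not guaranteed for an arbitrary $n$-bit adder TAC; fortunately the bound you derive is $\Omega(|T|\log n)$, and $|T|\geq 2$ suffices for $\Omega(\log n)$, so the conclusion survives without that assumption. Second, for the processes $P_{t_i^\ast,V(i)}$ to be distinct you implicitly use that the output positions $V(1),\ldots,V(n+1)$ are pairwise distinct and lie outside the seed $U_b$; both facts hold for any correct adder TAC (no output bit of $a+b$ is a fixed function of a single input bit, and the frame $R$ contains no $0/1$-labeled tiles), but it is worth stating this explicitly.
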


\subsection{Correctness.} \label{sec:avgcase_correct}

\begin{figure}[htp]
	\centering
	\includegraphics[scale = 0.85]{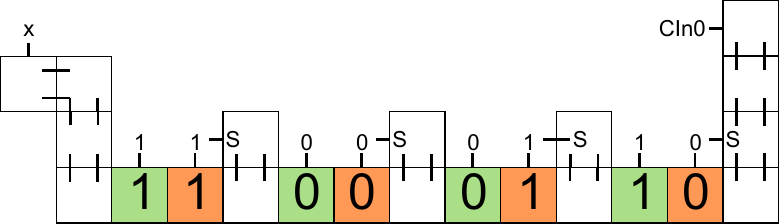}
	\caption{The scaffolding needed for the computation.}
	\label{fig:avg_case_scaffolding}
\end{figure}

The analysis for the average case $O(\log{n})$ worst case $O(n)$ adder TAC will begin from the point of the construction after which all scaffolding is in place (Figure \ref{fig:avg_case_scaffolding}).

The first step for any addend-pair in the seed is to perform its addition.  This addition step will output either $(1, x)$, $(0, x)$, or $(x, \lnot x)$ according to the following formulas:
$$1 + 1 = (1, x)$$
$$0 + 0 = (0, x)$$
$$1 + 0 = (x, \lnot x)$$
$$0 + 1 = (x, \lnot x)$$
The first value in the output represents the carry-out of the addend-pair and the second value represents the value of the addition.  In both cases, $x$ refers to the unknown carry-in which will come from the previous addend-pair's carry out.

Given any addend-pair, we can then determine whether or not we have enough information to generate a carry by looking at the first position of the generated output pair in the first step.  The two cases that can generate a carry, $(1, x)$ and $(0, x)$, do just that and immediately propagate a $1$ or $0$ respectively. The other case, $(x, \lnot x)$, simply waits for a carry to come in.  No addend-pair can calculate its value until it receives a carry from the previous addend-pair.  Once an addend-pair receives a carry-in, it can replace any $x$ or $\lnot x$ with the proper value and it can ``print" the correct value, i.e. the solution at that particular position.  Also, if $k_0$ was $x$ it can now propagate its carry to the next addend-pair.

\section{$O(\sqrt{n})$ Time Addition.} \label{app:sqrtn}
\subsection{Construction.} \label{sec:worst_construction}

\begin{figure}[htp]
	\centering
	\includegraphics[scale=0.8]{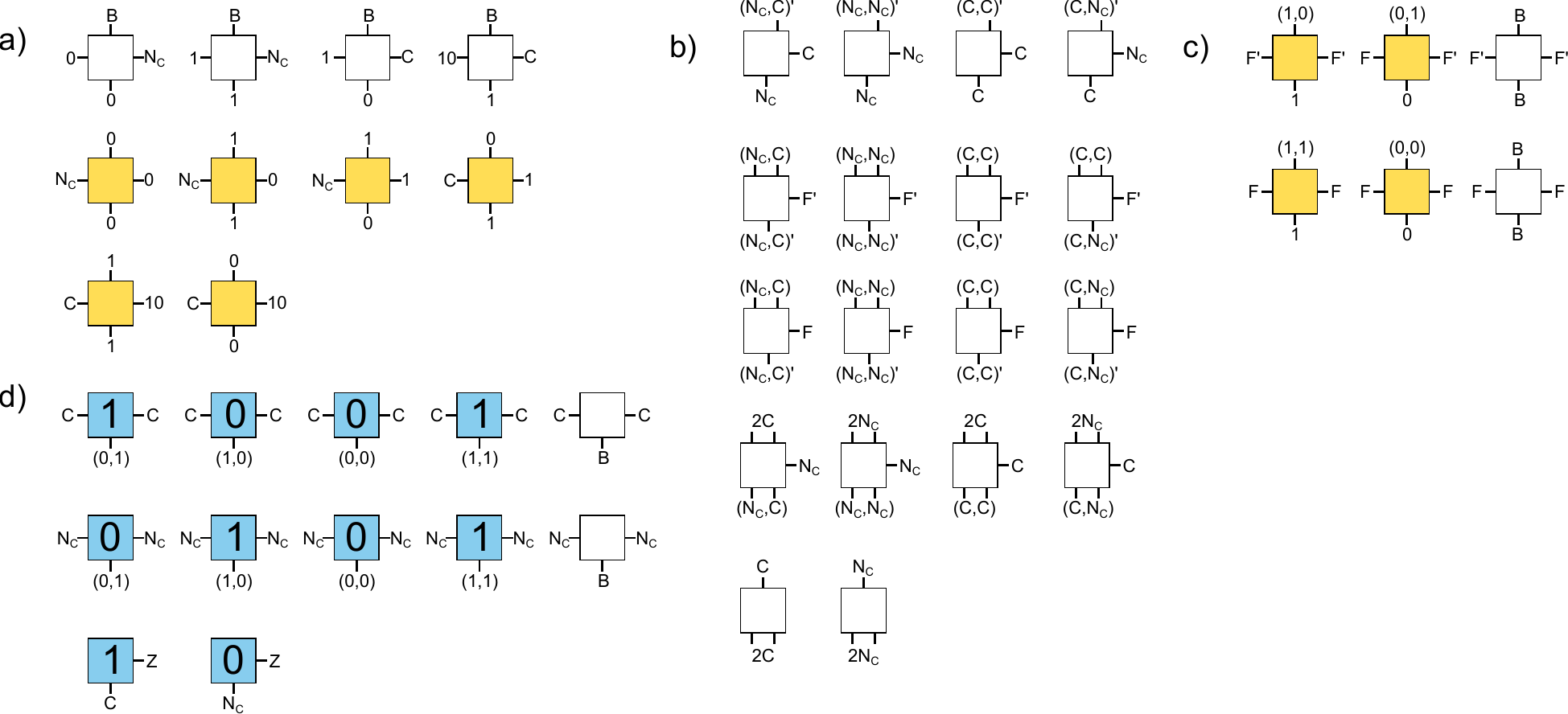}
	\caption{The complete tile set. a) Tiles involved in bit addition. b) Carry propagation tiles. c) Tiles involved in incrementation. d) Tiles that print the answer.}
	\label{fig:worsttileset}
\end{figure}

\paragraph{Input/Output Template.}
\begin{figure}[htp]
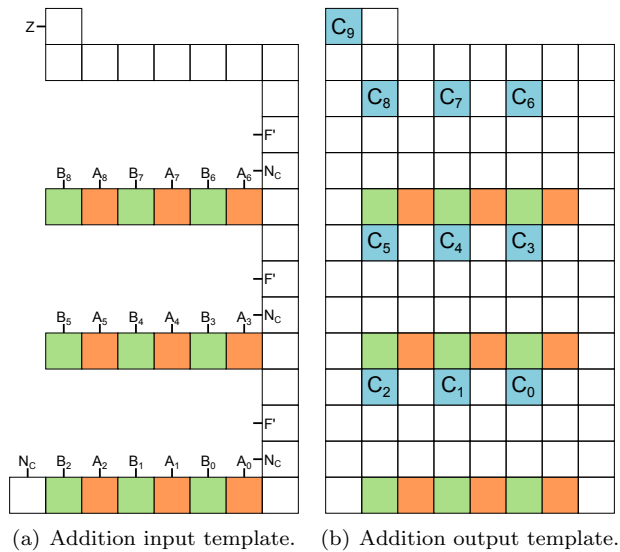

	\centering
	\subfigure[Addition input template.]{
		\includegraphics[scale=0.85]{images/addition_input_template}
		\label{fig:addition_input_template}
	}
	\subfigure[Addition output template.]{
		\includegraphics[scale=0.85]{images/addition_output_template}
		\label{fig:addition_output_template}
	}
\caption{These are example I/O templates for the worst case $O(\sqrt{n})$ time addition introduced in Section \ref{sec:worstcase}.}
\end{figure}

Figure \ref{fig:addition_input_template} and Figure \ref{fig:addition_output_template} are examples of I/O templates for a $9$-bit adder TAC. The inputs to the addition problem in this instance are two 9-bit binary numbers $A$ and $B$ with the least significant bit of $A$ and $B$ represented by $A_{0}$ and $B_{0}$, respectively.  The north facing glues in the form of $A_i$ or $B_i$ in the input template must either be a $1$ or a $0$ depending on the value of the bit in $A_i$ or $B_i$. The placement for these tiles is shown in Figure \ref{fig:addition_input_template} while a specific example of a possible input template is shown in Figure \ref{fig:worstcase_addition}a. The sum of $A+B$, $C$, is a ten bit binary number where $C_{0}$ represents the least significant bit. The placement for the tiles representing the result of the addition is shown in Figure \ref{fig:addition_output_template} while a specific example of an output is shown in Figure \ref{fig:worstcase_carryprop}j.

To construct an $n$-bit adder in the case that $n$ is a perfect square, split the two $n$-bit numbers into $\sqrt{n}$ sections each with $\sqrt{n}$ bits.  Place the bits for each of these two numbers as per the previous paragraph, except with $\sqrt{n}$ bits per row, making sure to alternate between $A$ and $B$ bits.  There will be the same amount of space between each row as seen in the example template \ref{fig:addition_input_template}. All $Z$, $N_C$, and $F'$, must be placed in the same relative locations. The solution, $C$, will be in the output template s.t. $C_i$ will be three tile positions above $B_i$ and a total of size $n + 1$.

Below, we use the adder tile set to add two nine-bit numbers: $A=100110101$ and $B=110101100$ to demonstrate the three stages in which the adder tile system performs addition.

\paragraph{Step One: Addition.}
With the inclusion of the seed assembly (Figure \ref{fig:worstcase_addition}a) to the tile set (Figure \ref{fig:worsttileset}), the first subset of tiles able to bind are the addition tiles shown in Figure \ref{fig:worsttileset}a. These tiles sum each pair of bits from $A$ and $B$ (for example, $A_{0}+B_{0}$) (Figure \ref{fig:worstcase_addition}b). Tiles shown in yellow are actively involved in adding and output the sum of each bit pair on the north face glue label. Yellow tiles also output a carry to the next more significant bit pair, if one is needed, as a west face glue label.  Spacer tiles (white) output a $B$ glue on the north face and serve only to propagate carry information from one set of $A$ and $B$ bits to the next. Each row computes this addition step independently and outputs a carry or no-carry west face glue on the westernmost tile of each row (Figure \ref{fig:worstcase_addition}c). In a later step, this carry or no-carry information will be propagated northwards from the southernmost row in order to determine the sum. Note that immediately after the first addition tile is added to a row of the seed assembly, a second layer may form by the attachment of tiles from the increment tile set (Figure \ref{fig:worsttileset})c. While these two layers may form nearly concurrently, we separate them in this example for clarity and instead address the formation of the second layer of tiles in \emph{Step Two: Increment} below.

\begin{figure}[htp]
	\centering
	\includegraphics[scale=0.85]{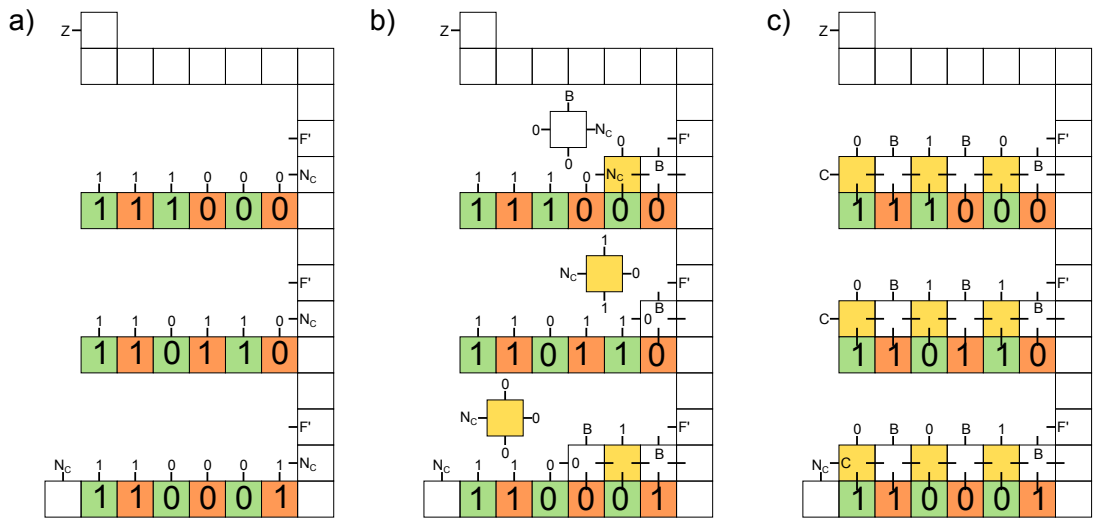}
	\caption{Step 1: Addition.}
	\label{fig:worstcase_addition}
\end{figure}

\paragraph{Step Two: Increment.}
As the addition tiles bind to the seed, tiles from the incrementation tile set (Figure \ref{fig:worsttileset}c) may also begin to cooperatively attach. For clarity, we show their attachment following the completion of the addition layer. The purpose of the incrementation tiles is to determine the sum for each $A$ and $B$ bit pair in the event of a no-carry from the row below and in the event of a carry from the row below (Figure \ref{fig:worstcase_increment}).The two possibilities for each bit pair are presented  as north facing glues on yellow increment tiles. These north face glues are of the form $(x,y)$ where $x$ represents the value of the sum in the event of no-carry from the row below while $y$ represents the value of the sum in the event of a carry from the row below. White incrementation tiles are used as spacers, with the sole purpose of passing along carry or no-carry information via their east/west face glues $F'$, which represents a no-carry, and $F$, which represents a carry.

\begin{figure}[htp]
	\centering
	\includegraphics[scale=0.85]{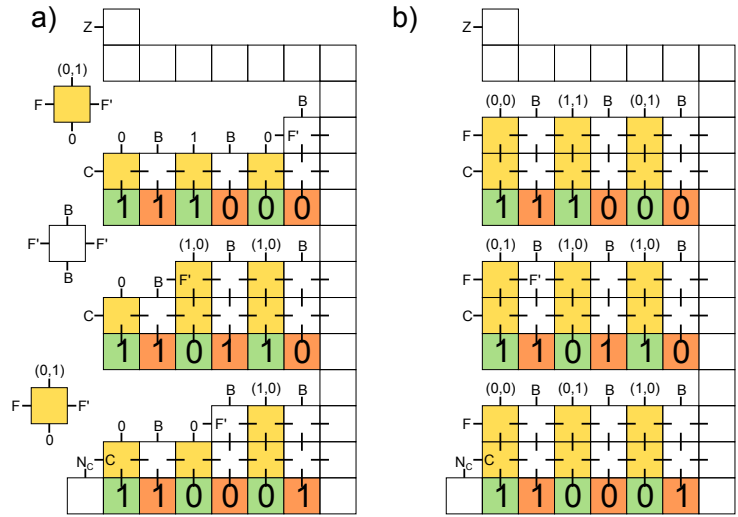}
	\caption{Step 2: Increment.}
	\label{fig:worstcase_increment}
\end{figure}

\paragraph{Step Three: Carry Propagation and Output.}
The final step of the addition mechanism presented here propagates carry or no-carry information northwards from the southernmost row of the assembly using tiles from the tile set in Figure \ref{fig:worsttileset}b and then outputs the answer using the tile set in Figure \ref{fig:worsttileset}d. Following completion of the incrementation layers, tiles may begin to grow up the west side of the assembly as shown in Figure \ref{fig:worstcase_carryprop}a. When the tiles grow to a height such that the empty space above the increment row is presented with a carry or no-carry as in Figure \ref{fig:worstcase_carryprop}b, the output tiles may begin to attach from west to east to print the answer (Figure \ref{fig:worstcase_carryprop}c). As the carry propagation column grows northwards and presents carry or no carry information to each empty space above each increment layer, the sum may be printed for each row Figures \ref{fig:worstcase_carryprop}d-e. When the carry propagation column reaches the top of the assembly, the most significant bit of the sum may be determined and the calculation is complete (Figure \ref{fig:worstcase_carryprop}f).

\begin{figure}[htp]
	\centering
	\includegraphics[scale=0.85]{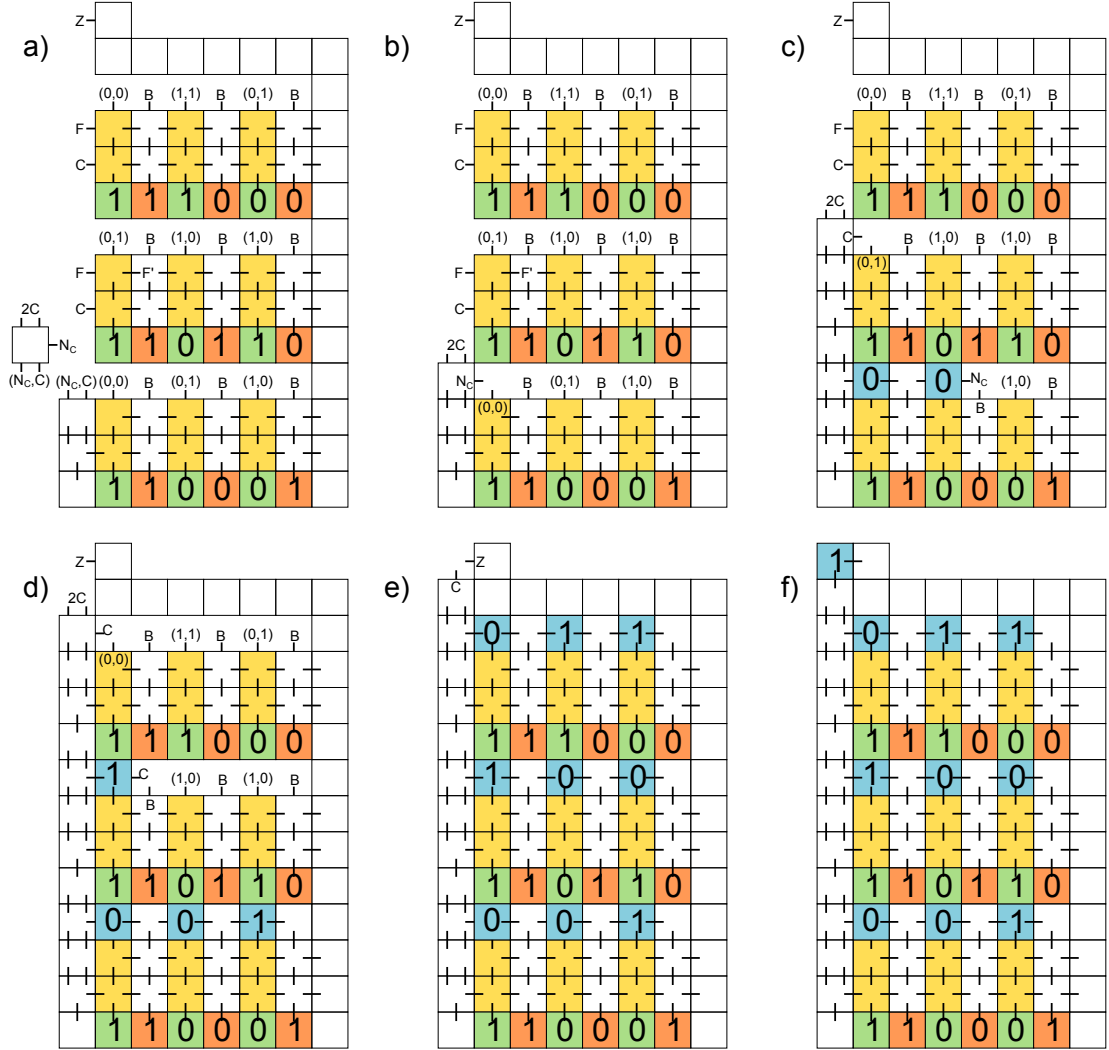}
	\caption{Step 3: Carry Propagation and Output.}
	\label{fig:worstcase_carryprop}
\end{figure}
\subsection{Time Complexity.} \label{sec:worst_time}

\subsubsection{Parallel Time}
Using the runtime model presented in Section \ref{sec:runtime} \emph{Run-time} we will show that the addition algorithm presented in this section has a worst case runtime of $O(\sqrt{n})$.  In order to ease the analysis we will assume that each logical step of the algorithm happens in a synchronized fashion even though parts of the algorithm are running concurrently.

The first step of the algorithm is the addition of two $O(\sqrt{n})$ numbers co-located on the same row.  This first step occurs by way of a linear growth starting from the leftmost bit all the way to the rightmost bit of the row.  The growth of a line one tile at a time has a runtime on the order of the length of the line.  In the case of our algorithm, the row is of size $O(\sqrt{n})$ and so the runtime for each row is $O(\sqrt{n})$. The addition of each of the $\sqrt{n}$ rows happens independently, in parallel, leading to a $O(\sqrt{n})$ runtime for all rows.
Next, we increment each solution in each row of the addition step, keeping both the new and old values.  As with the first step, each row can be completed independently in parallel by way of a linear growth across the row leading to a total runtime of $O(\sqrt{n})$ for this step.
After we increment our current working solutions we must both generate and propagate the carries for each row. In our algorithm, this simply involves growing a line across the leftmost wall of the rows.  The size of the wall is bounded by $O(\sqrt{n})$ and so this step takes $O(\sqrt{n})$ time.
Finally, in order to output the result bits into their proper places we simply grow a line atop the line created by the increment step.  This step has the same runtime properties as the addition and increment steps. Therefore, the output step has a runtime of $O(\sqrt{n})$ to output all rows.

There are four steps each taking $O(\sqrt{n})$ time leading to a total runtime of $O(\sqrt{n})$ for this algorithm.  This upper bound meets the lower bound presented in Corollary \ref{cor:addition} and the algorithm is therefore optimal.

\paragraph{Choice of $\sqrt{n}$ rows of $\sqrt{n}$ size.}  The choice for dividing the bits up into a $\sqrt{n} \times \sqrt{n}$ grid is straightforward.  Imagine that instead of using $O(\sqrt{n})$ bits per row, a much smaller growing function such as $O(\log{n})$ bits per row is used.  Then, each row would finish in $O(\log{n})$ time.  After each row finishes, we would have to propagate the carries.  The length of the west wall would no longer be bound by the slow growing function $O(\sqrt{n})$ but would now be bound by the much faster growing function $O(\frac{n}{\log{n}})$.  Therefore, there is a distinct trade off between the time necessary to add each row and the time necessary to propagate the carry with this scheme.  The runtime of this algorithm can be viewed as the $max(row_{size}, westwall_{size})$.  The best way to minimize this function is to divide the rows such that we have the same number of rows as columns, i.e. the smallest partition into the smallest sets.  The best way to partition the bits is therefore into $\sqrt{n}$ rows of $\sqrt{n}$ bits.

\subsubsection{Continuous Time}\label{sec:worstcasecontinuous}
To bound the continuous run time of our TAC for any pair of length $n$-bit input strings we break the assembly process up into 4 phases.  Let the phases be defined by 5 producible assemblies $S=A_0, A_1, A_2, A_3, A_4 = F$, where $F$ is the final terminal assembly of our TAC.  Let $t_i$ denote a random variable representing the time taken to grow from $A_{i-1}$ to the first assembly $A'$ such that $A'$ is a superset of $A_{i}$.  Note that $E[t_1] + E[t_2] + E[t_3] + E[t_4]$ is an upper bound for the continuous time of our TAC according to Lemma~\ref{lemma:waypoints}.  We now specify the four phase assemblies and bound the time to transition between each phase to achieve a bound on the continuous run time of our TAC for any input.

Let $A_1$ be the seed assembly plus the attachment of one layer of tiles above each of the input bits (see Figure~\ref{fig:worstcase_addition} (c) for an example assembly).  Let $A_2$ be assembly $A_1$ with the second layer of tiles above the input bits placed (see Figure~\ref{fig:worstcase_increment} (b)).  Let $A_3$ be assembly $A_2$ with the added  vertical chain of tiles at the western most edge of the completed assembly, all the way to the top ending with the final blue output tile.  Finally, let $A_4$ be the final terminal assembly of the system, which is the assembly $A_3$ with the added third layer of tiles attached above the input bits.

Time $t_1$ is the maximum of time taken to complete the attachment of the 1st row of tiles above input bits for each of the $\sqrt{n}$ rows.  As each row can begin independently, and each row grows from right to left, with each tile position waiting for the previous tile position, each row completes in time dictated by a random variable which is the sum of $\sqrt{n}$ independent exponentially distributed random variables with rate $1/|T|$, where $T$ is the tileset of our adder TAC.  Therefore, according to Lemma~\ref{lemma:maxExpSums}, we have that $t_1 = \Theta(\sqrt{n} + \log\sqrt{n}) = \Theta(\sqrt{n})$.  The same analysis applies to get $t_2 = \Theta(\sqrt{n})$.  The value $t_3$ is simply the value of the sum of $4\sqrt{n} + 2$ exponentially distributed random variables and thus has expected value $\Theta(\sqrt{n})$.  Finally, $t_4 = \Theta(\sqrt{n})$ by the same analysis given for $t_1$ and $t_2$.  Therefore, by Lemma~\ref{lemma:waypoints}, our TAC has worst case $O(\sqrt{n})$ continuous run time.

\subsection{Correctness.} \label{sec:worst_correct}
The first two steps of the algorithm are addition followed by incrementation.  It is important to note that this incrementation step not only outputs both the original (addition result) and incremented value but also whether or not the original value contained a zero.  This is important because it will allow us to later use this information to decide whether or not a row will contain a carry.  Now, these two steps of the algorithm rely on nothing but the data in the current row and are completed in parallel across all rows.  Therefore, with the given tile set in Figure \ref{fig:worsttileset}, these two steps are straightforward to verify.

The next step is for every row to select a solution from the two that were generated as well as propagate its carry information.  We will, for the moment, assume that some row has not received the information of the previous carry and will concentrate on this row.  At this step we know if the current row generated a carry $C \in \{T, F\}$, if the rows sum contains a zero $Z \in \{T, F\}$, and two possible row values $A$ and $B$.  $A$ represents the sum if the row receives a carry-in of $0$ and $B$ the opposite.  In order to continue from this step an answer must be selected and a carry or no-carry must be propagated.  If $Z = T$, then we immediately know that we will propagate whatever $C$ may be. If $Z = F$ and $C = T$, we also know that we must propagate a carry.  The only situation in which we do not know whether we will propagate a carry is when $Z = F$ and $C = F$.  When we encounter this situation we propagate whatever the carry was in the previous row.  Also, in order to decide whether we will select $A$ or $B$ we need only the previous carry.  Therefore, assuming we have the correct previous carry we can correctly select both the proper value for this row as well as propagate the correct carry to the next row.

Finally, because we know the initial carry is correct (it is part of the seed), we know that the first row can select the correct result as well as propagate the correct carry.  Then, because we know that the first row's carry is correct, we know that the second row can select the correct result as well as propagate the carry.  This chain continues until it reaches the last row leading to selecting all the correct values as well as propagating all the correct carries.

The last step is to select the most significant bits value which is solely based on the last row's carry.  If the last row propagates a carry it is a $1$, otherwise it is a $0$.   Since we know that the last carry is correct we know that the last value is selected properly.


\section{$O(\log{n})$ Average Case, $O(\sqrt[2]{n})$ Worst Case Addition.} \label{app:combined}
\subsection{Construction} \label{sec:combined_construction}

\begin{figure}[htp]
	\centering
	\includegraphics[scale=1.0]{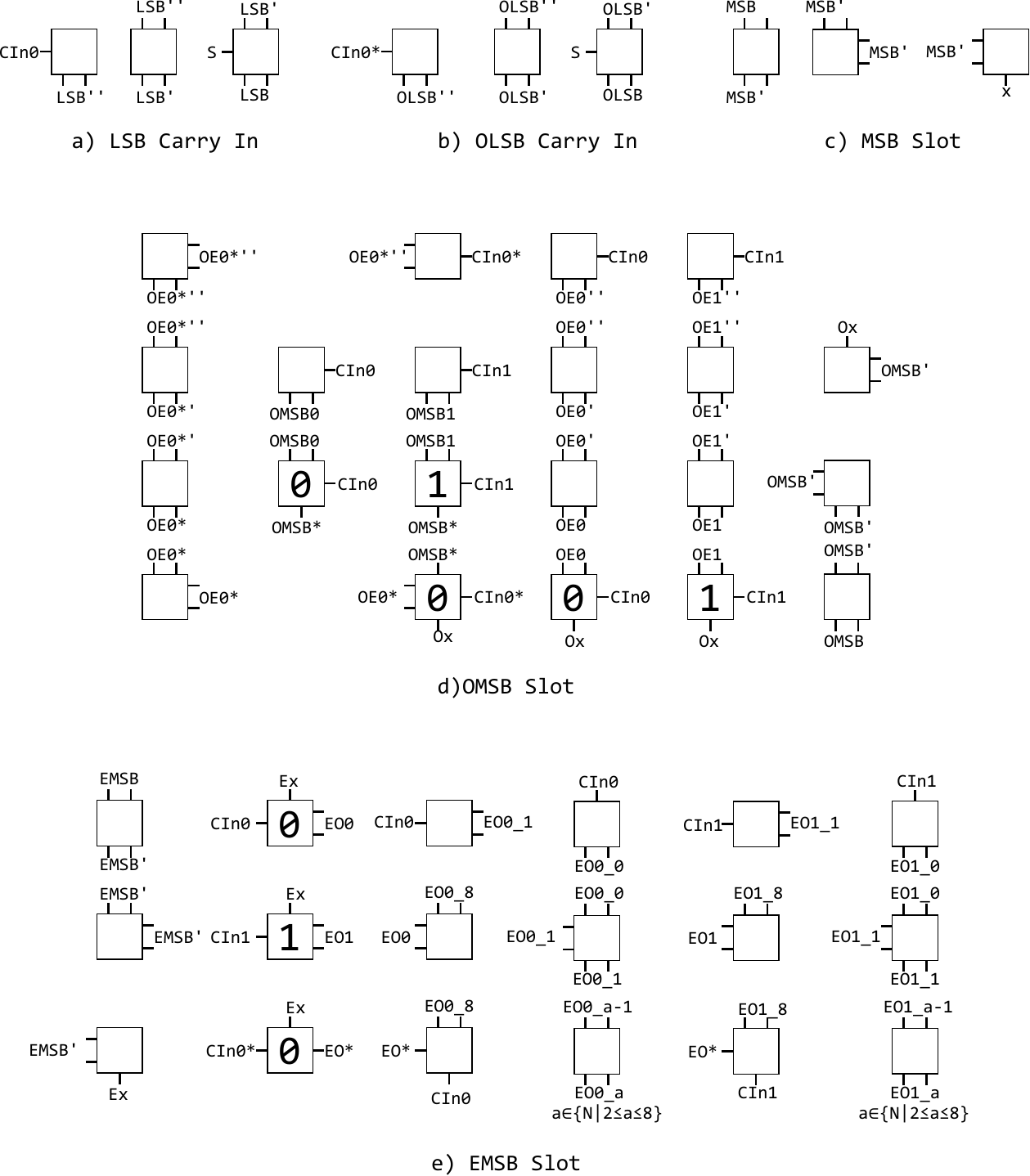}
	\label{fig:CombinedTileSet1}
	\caption{Partial set of tiles necessary to implement $O(\log{n})$ average case, $O(\sqrt{n})$ worst case combined addition (see also next figure).}
	
\end{figure}

\begin{figure}[htp]
	\centering
	\includegraphics[scale=1.0]{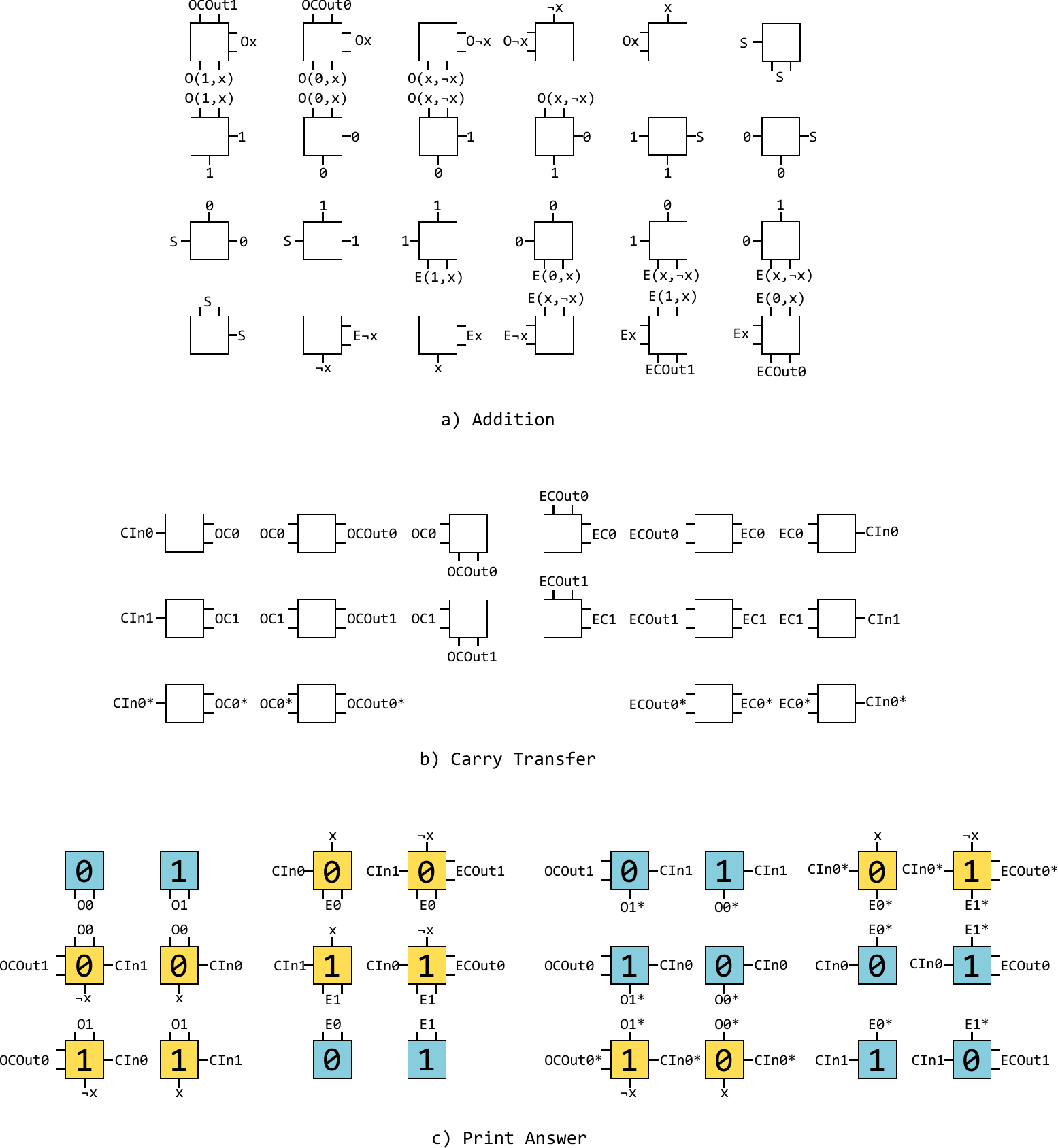}
	\label{fig:CombinedTileSet2}
	\caption{Tiles necessary to implement $O(\log{n})$ average case, $O(\sqrt{n})$ worst case combined addition (continued from previous figure).}
	
\end{figure}

	This construction combines the two-dimensional scaffold principle of the simple worst-case addition construction (Section \ref{sec:worstcase}) with the principle that certain addend-pairs can compute a carry out before they get a carry in, which was shown to reduce the average case run-time to $O(\log{n})$ in Section \ref{sec:avgcase_time}. Contrary to the addition construction in Section \ref{sec:worst_construction}, the directionality of adjacent rows is antiparallel in the construction described here. Every odd row beginning with row one, which is the southernmost row, has the least significant bit on the east and the most significant bit on the west. Every even row has bits in the opposite order, as shown in Figure \ref{fig:LogSquareInput}a. Each odd row, along with the even row above, should be considered as a single section in which the addition mechanism is nearly identical to the $O(\log{n})$ average case adder TAC.  Within each section, carry outs are propagated east to west on the odd row, up in constant time from the most significant bit on the odd row (OMSB) to the least significant bit on the even row (ELSB), and from west to east on the even row. Adjacent rows are anti-parallel so that the distance between the most significant bit (MSB) of each section is a constant distance from the least significant bit (LSB) of the section above.  This modification allows us to apply the $O(\log{n})$ average case addition between each pair of bits, and at the same time apply the carry propagation mechanism of the $O(\sqrt{n})$ worst case addition construction between the MSB of each even row.

\begin{figure}[htp]
	\centering
	\includegraphics[scale=.75]{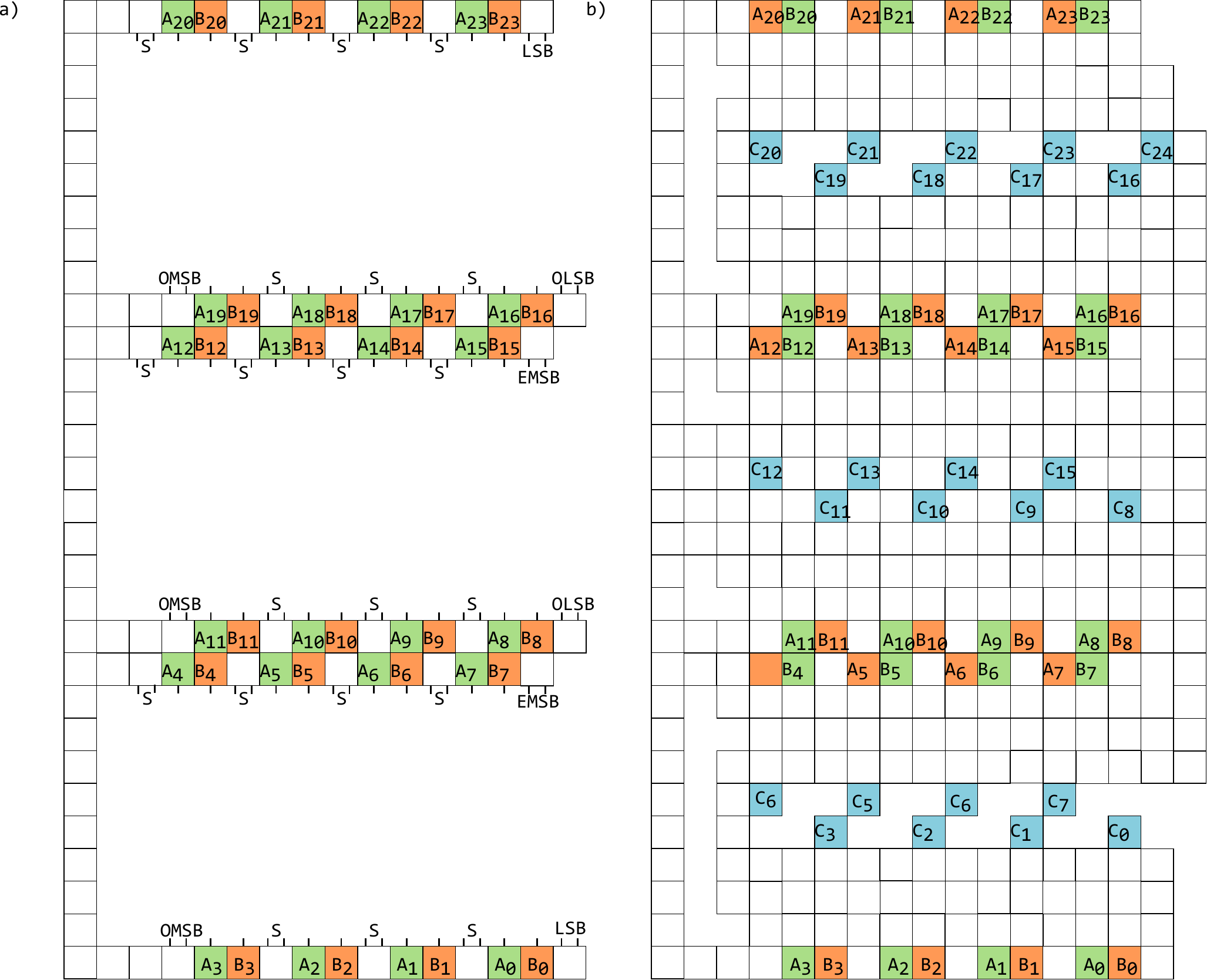}
	\caption{a) The input template for addition of two $n$-bit binary numbers $A$ and $B$. b)The output template for the addition of two $n$-bit binary numbers $A$ and $B$.}
	\label{fig:LogSquareInput}
\end{figure}

\paragraph{Carry Passing and Prediction of Results Within Sections.}
Within each section, or odd/even row pair, carry outs are propagated as stated in the above paragraph (Figure \ref{fig:LSExample1}a). In addition, each row will present ``predicted" results, that is, the result of the computation as if there were no carry from the MSB of the previous section. These results are shown as yellow tiles in Figure \ref{fig:LSExample1}b-c. Note the blue tiles in Figure \ref{fig:LSExample1}d. These blue tiles represent the final sum of the addend pairs. They may be computed without carry information propagated from a section below because the addend pairs are either 1)located in the southernmost section, where it is immediately known that the carry-in is $0$ or 2)are flanked by a less significant addend pair for which the carry out is immediately known. The north face glues on yellow tiles in Figure \ref{fig:LSExample1}d which contain a star * rely on a carry from a previous section. The mechanism of carry propagation within a section can be seen in Figure \ref{fig:LSExample2}a-b. Vertical columns grow up along the west side of each section to propagate the carry from the odd row to the even row.

\paragraph{Carry Propagation Between Sections.}
Figure \ref{fig:LSExample2}c shows a vertical column growing between the southernmost section and the section above. This column is propagating a carry from the most significant bit of the southernmost section $A$ to the least significant bit of the section $B$ to the north. This information continues to the most significant bit of section $B$ at which point a carry bit is computed to propagate to the section north of $B$ (Figure \ref{fig:LSExample2}d, Figure\ref{fig:LSExample3}a. The terminal tile assembly with the computed sum is presented in Figure \ref{fig:LSExample3}b.

\begin{figure}[htp]
	\centering
	\includegraphics[scale=0.6]{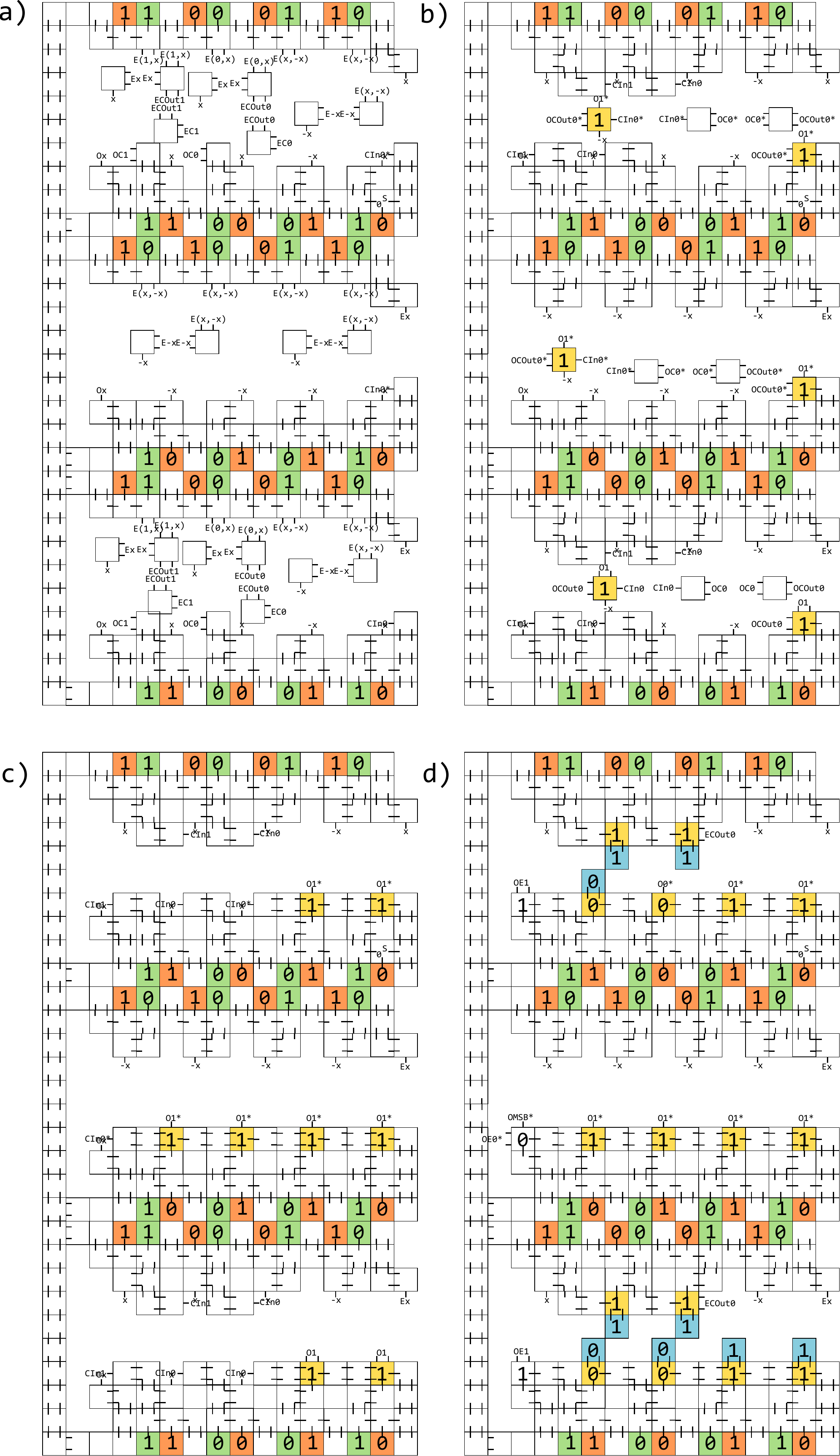}
	\caption{a) Carry out bits are computed for each addend pair where a carry out can be deduced immediately. b-c)``Predicted" results (yellow tiles) begin to attach. d) Blue tiles represent final result. }
	\label{fig:LSExample1}
\end{figure}

\begin{figure}[htp]
	\centering
	\includegraphics[scale=0.6]{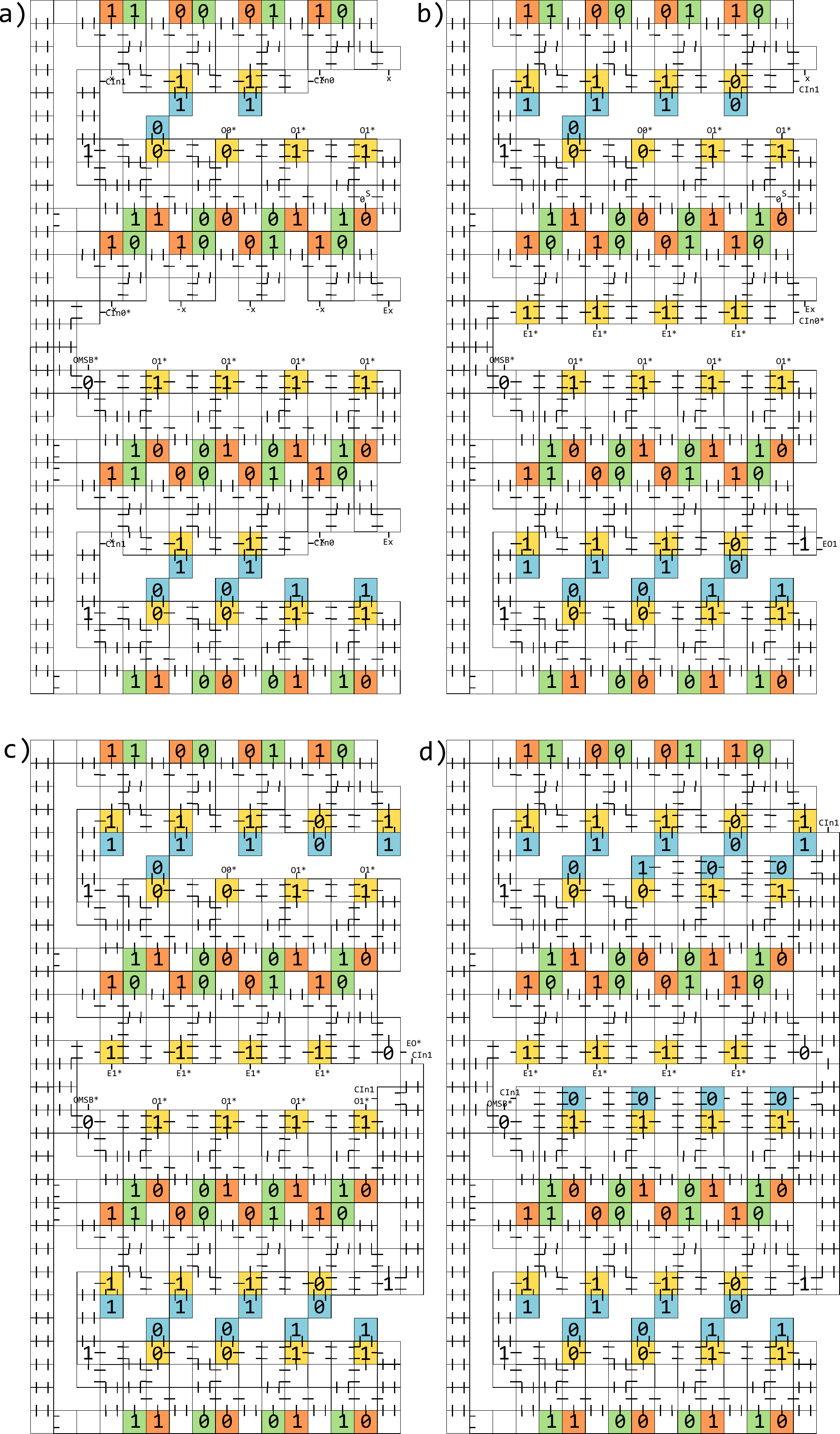}
	\caption{a-b) Carry bits are propagated within sections from the MSB of the odd row to the LSB of the even row using vertical columns. c-d)Carry bits are propagated between sections via vertical columns on the east side of the assembly. }
	\label{fig:LSExample2}
\end{figure}

\begin{figure}[htp]
	\centering
	\includegraphics[scale=0.6]{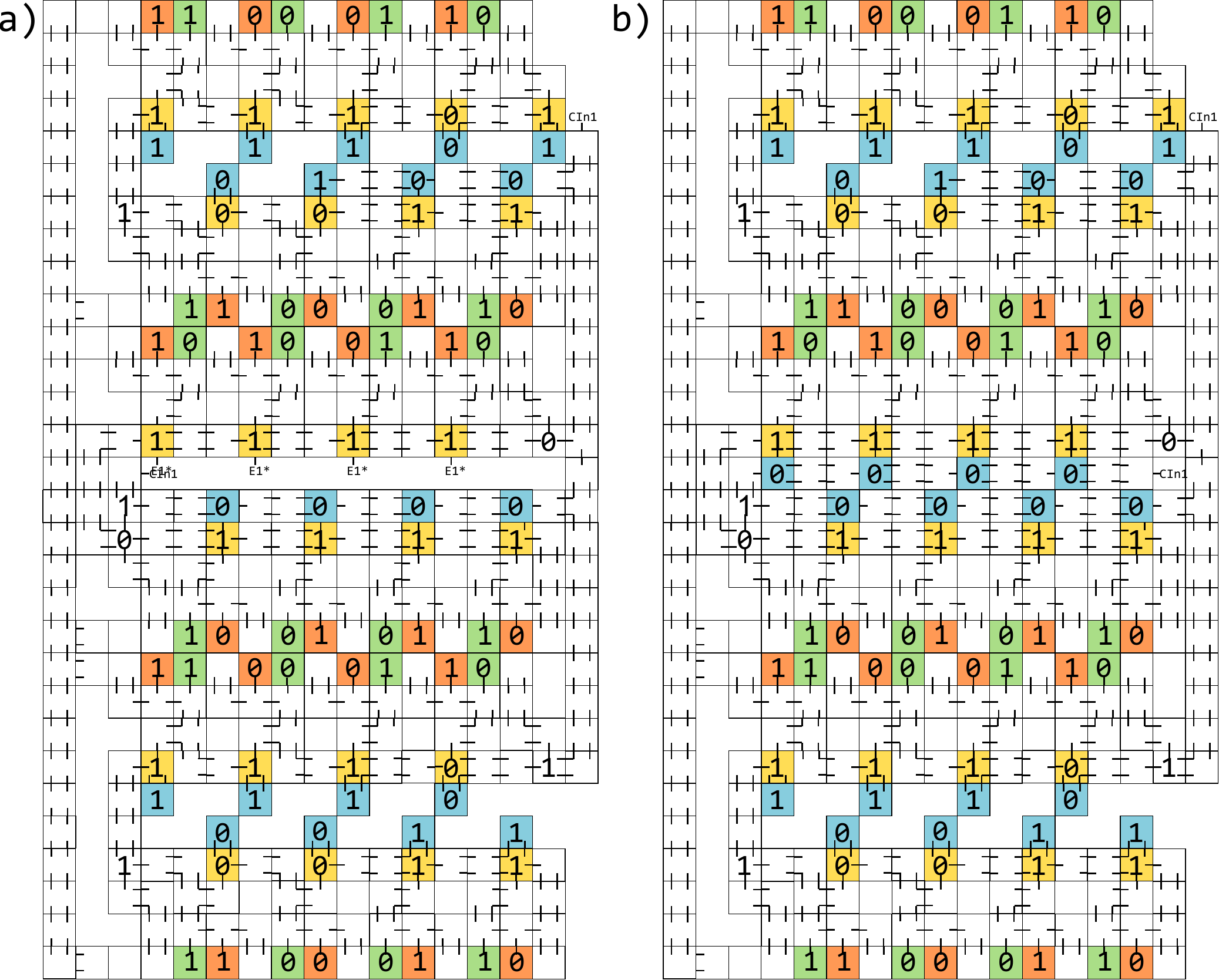}
	\caption{a) A carry bit from the first section propagates through the middle section in this assembly. b) The terminal assembly displaying the output $C$.}
	\label{fig:LSExample3}
\end{figure}
\subsection{Time Complexity.} \label{sec:combined_time}
\paragraph{$O(\sqrt{n})$ Worst Case and $O(log\ n)$ Average Case Run-Time}
The construction presented in this section represents a combination of the constructions presented in Sections \ref{sec:avgcase} and \ref{sec:worstcase}.  Thus, this runtime analysis combines elements from the time complexity proofs in \ref{sec:avgcase_time} and \ref{sec:worst_time}. In the construction described here, each section on the scaffold precomputes the sum as if there were a carry from the previous section and also as if there were no carry from the previous section. In some cases, whether a carry bit is passed in from the section below is irrelevant and parts of the final sum may be generated before this information is available. If the most significant addend-pair bits (MSB) of a given section can immediately generate a carry out for propagation to the next section, they do so. When a carry in arrives from the previous section, final values are selected from the precompution step, if necessary. In this analysis of time complexity, we first analyze the run time of the precomputataion step, and then consider the run time of propagating carry bits from section to section up the scaffold.

 Consider the binary sequence $T$ of length $2n$ composed of $n$-bit binary numbers $A$ and $B$ as defined in Section \ref{sec:avgcase_time}. For this construction, divide $T$ into $\frac{\sqrt{n}}{2}$ smaller sequences, or sections, $S_0, S_1, \cdots, S_{\frac{\sqrt{n}}{2}-1}$, with each section containing $2\sqrt{n}$ addend-pairs. These sections are arranged on a scaffold as depicted in Figure \ref{fig:upperbound_combined}. Note that the distance between the bottom half and the top half of a given section is constant and requires a constant number of steps to traverse. We first treat each section, $S_i$, as an independent addition problem without regard for a carry in from the previous section. Define $k$ as the longest contiguous sequence of $(0, 1)$ and $(0, 1)$ addend-pairs in $S_i$.  It follows from the proof in Section \ref{sec:avgcase_time} that the run-time for $S_i$ is bounded upwards by the length of $k$. The worst-case runtime for addition over the sequence $S_i$ would thus occur when $k$ = $2\sqrt{n}$. Therefore, the worst case run time for addition \emph{within} a section is $O(\sqrt{n})$. It also follows, as described in Section \ref{sec:avgcase_time}, that the expected length of $k$ is $O(\log{n})$. Therefore, the average run time for each addition within each section is $O(\log{n})$.
The precomputation within each of the $\frac{\sqrt{n}}{2}$ sections occurs independently in parallel, leading to a worst case $O(\sqrt{n})$, average case $O(\log{n})$ precomputation run-time for all sections.

\begin{figure}[htp]
	\centering
	\includegraphics[width=4in]{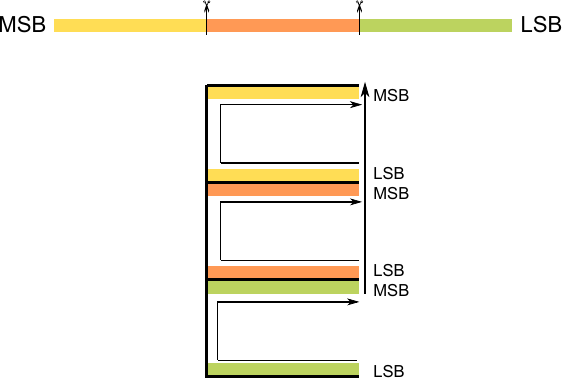}
	\caption{Top: $T$ is separated into sections. Bottom: Each section of $T$ is arranged on a scaffold. Arrows represent the general direction of carry bit propagation. MSB denotes the most significant bit of a section, and LSB denotes the least significant bit of a section.}
	\label{fig:upperbound_combined}
\end{figure}

After performing this precomputation within each section, we must propagate carries between each section. The distance between two neighboring sections is constant and may be traversed by a column of tiles in a constant number of steps. Therefore, a carry out bit, once generated, can be propagated from one section to the next in constant time. A carry out bit may be propagated to the next section, $S_{i+1}$, immediately if the most significant addend pair of section $S_i$ consists of $(0,0)$ or $(0,1)$. Otherwise, this most significant addend pair of $S_i$ must wait for a carry in before generating a carry out. Therefore, the composition of the most significant addend-pairs of the sections acts as a limiting factor to the speed with which carries may be propagated across all of the sections.  Consider the binary sequence $P$ which is comprised of the most significant addend pairs of each section and has a length of $\frac{\sqrt{n}}{2}$ addend-pairs. Let $k$ be the longest contiguous sequence of $(1, 0)$ and $(0, 1)$ addend-pairs in $P$. The propagation of carry bits through $P$ is bounded upwards by the length of $k$, with the worst-case being when $k = \frac{\sqrt{n}}{2}$ and an average case being when $k = O(\log{n})$. Thus, the propagation of carry bits between each section up to the most significant addend-pair of $T$ is bounded upwards by $O(\sqrt{n})$ and has an average run-time of $O(\log{n})$.
Therefore, this addition algorithm has an upper bound of $O(\sqrt{n})$ and an average run-time of $O(\log{n})$.

The parallel run time analysis presented above can be extended to the continuous run time model through application of Lemma~\ref{lemma:maxExpSums} and Lemma~\ref{lemma:waypoints} similar to what is done in Sections~\ref{sec:logadditioncontinuous} and \ref{sec:worstcasecontinuous}, yielding worst case $O(\sqrt{n})$ and average case $O(\log n)$ continuous run time bounds.
\subsection{Correctness.} \label{sec:combined_correct}
Every row with north facing glues performs its addition in the exact same way as described in Section \ref{sec:avgcase_construction} assuming an incoming carry of $0$.  As such we use the proof of correctness from Section \ref{sec:avgcase_correct} to show that this step is correct.  Also, every row with south facing glues performs this same addition except with an incoming carry from the addition of the lower north facing glue side.  Thus, the same proof also applies to this side.  Considering both of these additions as the first step, we can say that the first step correctly adds a section with a carry-in of $0$.

Since the first section has no sections before it, the addition of the first section has an input carry of $0$ allowing the copying of its value up to the final ``display" row.  Once an addend-pair pair knows its carry-in it can display its results.  This combined with the fact that we proved the addition of any section with an input carry of $0$ proves the correctness for the first section.

Once a section finishes the first step, there are three possible carry-outs from the MSB addend-pair in the section: $0$, $1$, and $0*$.  If the section's MSB addend-pair propagates a $0$ or $1$ then somewhere in the section some addend-pair generated its carry without depending on a carry from the previous section leading to a correct propagation.  If the section's MSB addend-pair propagates a $0*$, then no addend-pair in the section was able to generate a carry, meaning that no addend-pair in the section contained equal bits.  Therefore, the section will correctly propagate whatever carry it received to the next section. These two clauses cover all cases in terms of carry propagation from one section to next section.

Every section other than the first one performs its addition with an input carry of $0*$ which indicates an unknown carry with a value of $0$. In other words, it continues the calculation assuming a $0$ input carry but cannot copy any undetermined values, due to the unknown carry, into the ``display" row until it receives the real carry from the previous section.  This carry only gets propagated until it reaches an addend-pair with equal bits because at this point the addend-pair would have already generated its carry regardless of any incoming input carry and propagated it.  One can see that the beginning of any section, other than the first, is essentially calculated by the algorithm as if it was in the center of some series of addend-pairs with unequal bits.  When the correct carry propagates from the previous section to the current section, the correct values may then be copied into the ``display" row.  If a $0$ is carried in, then it is a simple copy up of the value into the ``display" row.  If a $1$ is carried in, then it is the inverse of what was previously calculated. Assuming some section receives the correct carry from the previous section, that section will ``display" the correct result. We have shown that the first section propagates a carry correctly, and therefore all subsequent sections propagate their correct carry out. We have also shown that if each section propagates the correct carry-out to the next section, then the addition of addend-pairs within each section is performed correctly, producing the correct sum of $A$ and $B$, $C$.

\section{Multiplication}\label{app:multiplcation}
\subsection{Construction.} \label{sec:multiplication_construction}
This TAC uses a  shift-and-add algorithm for multiplication. Let $A$ and $B$ be two $n$-bit binary numbers, and let $a_i$ and $b_i$ represent the $i$th bits of $A$ and $B$ respectively. Note that if $B=\sum_{i} b_i2^i$ for $b_i\in{0,1}$, then the product $AB=\sum_{i}A b_i2^i$. In this algorithm we perform the multiplication by multiplying A by each of the powers of two in $B$. Each of these partial products is added to the running sum as soon as the partial product is available.

To describe our construction we focus on the case of multiplying two 64-bit numbers.

\paragraph{Vector Label And Tile Types.}

To permit a high-level description of the multiplication TAC, we use a \emph{Vector Label} technique to designate the binding possibilities between tiles. Rather than showing the glue types and their placement on each edge of a tile, we label a tile with a vector which describes the rules by which that tile may bind to adjacent tiles. A tile with such a vector actually describes a group of tiles that contains a certain element in the vector that could attach to a certain assembly. The number of tile types described by a Vector Labeled tileset is at most $O(l^d)$, where $l$ is the number of label types and $d$ is the vector dimension. An example set of Vector Labeled tiles may be found in Figure \ref{fig:Vector_Label}.

\begin{figure}[htp]
	\centering
	\includegraphics[width=3in]{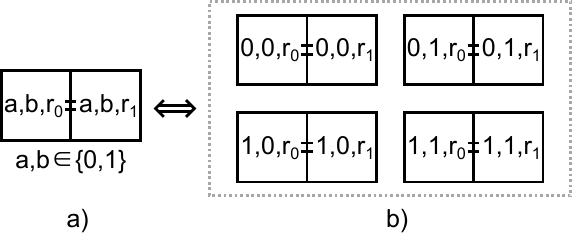}
	\caption{a) A pair of Vector Labeled tiles describes a binding rule. Dimensions $a$ and $b$ must be the same on both tiles for the tiles to bind, and tiles containing $r_1$ in the third dimension may bind only to tiles containing $r_0$ in the third dimension. b)Such a binding rule can describe 8 tile types with 4 binding possibilities.}
	\label{fig:Vector_Label}
\end{figure}

\paragraph{Input/Output Template.}

The input template encoding the multiplicand and multiplier, $A$ and $B$, is a rectangular assembly of size $2n$ tiles. The product of two $n$-bit numbers will have at most $2n$ bits. This extra $n$ bits of space comprises the west half of the assembly, as shown in Figure \ref{fig:Mult_input_template}, while the east half encodes $A$ and $B$ via vector labeled tiles. The first and second elements of the vector on each tile are a given bit of $A$ and $B$.
The final assembly for 64-bit inputs is shown in Figure \ref{fig:Mult_output_template}. The product $AB$ is encoded on the top surface of the final assembly, marked as $Output$ in Figure \ref{fig:Mult_output_template}.

Before going into details of the TAC implementation, we first give a brief overview of the different parts of the output structure (Fig. \ref{fig:Mult_output_template}). Given two $n$-bit inputs, $A$ and $B$, our multiplication algorithm involves a summation of $n$ numbers, $x_0+x_1+...+x_{n-1}$.  For any number $x_k$ within these $n$ numbers, $x_k=A b_k 2^k$. The output assembly (Fig. \ref{fig:Mult_output_template}) has four horizontal layers, one stacked on top of the next, each of which sums 16 of the 64 numbers to be summed for the 64-bit example. More generally, there are $\sqrt[3]{n}$ layers, $l_0, l_1,...,l_{\sqrt[3]{n}-1}$, and each layer $l_k$ is responsible for summing $n^{2/3}$ of the $n$ numbers, $x_{k\sqrt[3]{n}}+x_{k\sqrt[3]{n}+1}+...+x_{(k+1)\sqrt[3]{n}-1}$. On each layer, there are  $\sqrt[3]{n}$ columns running west to east, each of which sums $n^{1/3}$ numbers of the $n^{2/3}$ numbers which are summed within each layer.

\begin{figure}[htp]
	\centering
	\subfigure[Multiplication input template assembly.]{
		\label{fig:Mult_input_template}
		\includegraphics[width=4in]{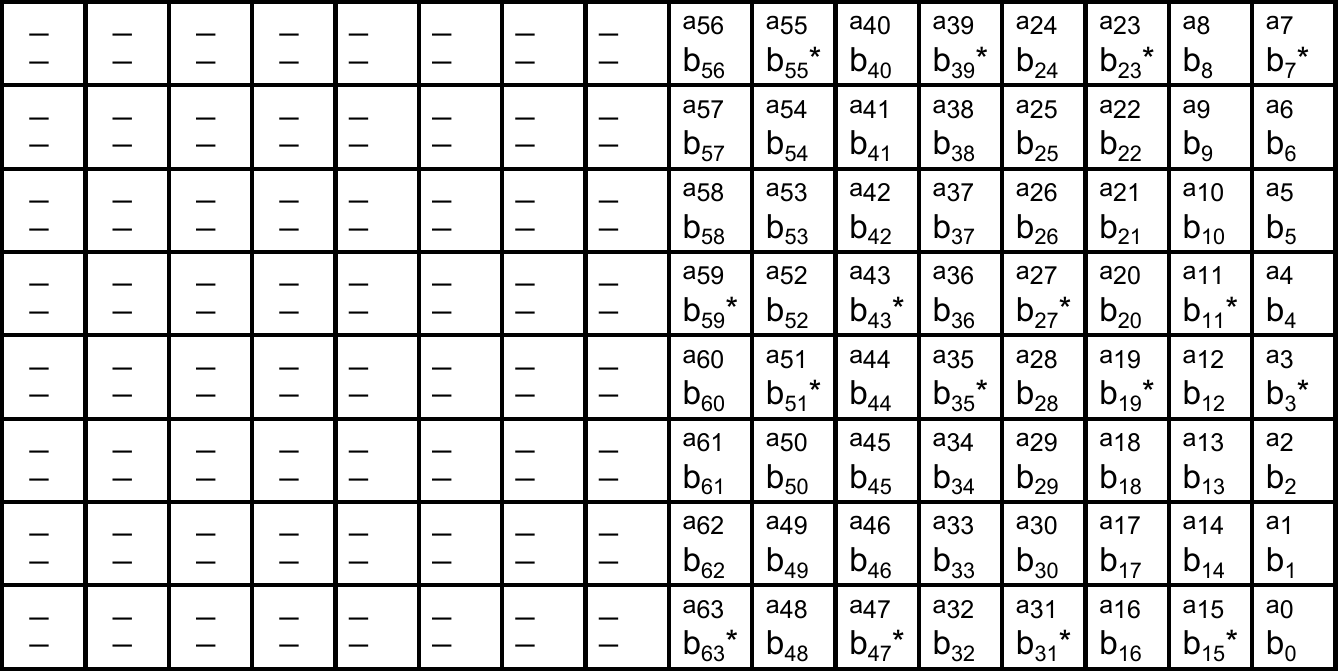}
	}
	~~~
	\subfigure[Multiplication output assembly.]{
		\label{fig:Mult_output_template}
		\includegraphics[width=1.6in]{images/Mult_Output}
	}
\caption{The input seed and the output assembly for 64-bit inputs, $A$ and $B$.}
\end{figure}

\paragraph{Part One: Deployment.}

An important step in the multiplication process is deploying the $n$ numbers to be summed, $x_0+x_1+...+x_{n-1}$. These numbers must be moved into different positions for addition to take place. First, the seed (Fig. \ref{fig:Mult_1}) is copied both up and south, as is depicted in Figure \ref{fig:Mult_2}. Figure \ref{fig:Rotate_and_copy} demonstrates that for any 1-dimensional input, it is possible to create a tile set to send the input information into two directions. This same principle can be used to copy the 2-dimensional input area both up and south into three dimensions as is shown in Figure \ref{fig:Mult_2}.

The copy of the seed to the south will then be copied to the south again and to the east, as is shown in Figure \ref{fig:Mult_3}. This copy to the east will head the formation of a column. Columns run west to east in each layer, and each layer contains $\sqrt[3]{n}$ columns. In this particular 64-bit input example, there will be four columns to each layer, each of length  $\sqrt[3]{n}$ blocks. The input information will continue to be propagated south and east until all columns for a given layer are formed. The input information that was copied up (Fig. \ref{fig:Mult_3}) will act as a seed for the next layer, which is formed in the same way that the first layer was formed.

As the input information is copied in different directions, different functions are applied. For the input information copied up, $n^{\frac{2}{3}}$ bits of zero are attached after the least significant bit of $A$, and the least significant $n^{\frac{2}{3}}$ bits of $B$ are removed. Let $p(x)=x*2$ and $q(x)=[x/2]$. Then in figure \ref{fig:Mult_3}, $A=p(A)^{16}$ means attach 16 zeros after A and $B=q(B)^{16}$ means remove the 16 least significant bits from B. We add and remove these bits by shifting numbers. Figure \ref{fig:shift_example_1} demonstrates that it is possible to create a tile set to shift any input.

The copy of the input information to the east marks the head of the first column. As this column is propagated from west to east, the least significant bits of $B$ are removed one by one, and bits of 0 are appended to the least significant end of $A$ one by one, as is seen in Figure \ref{fig:Mult_4}. At the end of the column, a total of $n^{1/3}$ 0 bits will have been appended to A and $n^{1/3}$ of the least significant bits of $B$ will have been removed.

The copy of the input to the south propagates further to the south and then east to form the next column (Figure \ref{fig:Mult_5}-\ref{fig:Mult_6}). During this southward propagation, the least significant $n^{1/3}$ bits of $B$ are removed, and $n^{1/3}$ bits of zero are appended to the end of $A$. This southward propagation continues, and each time $n^{1/3}$ additional bits of zero are appended to the least significant end of $A$ and $n^{1/3}$ least significant bits are removed from $B$, until the heads of each of the $\sqrt[3]{n}$ columns have formed on the east side. These columns are then free to grow eastward (Figure \ref{fig:Mult_8}).

The copy of the seed that is sent upward has the least significant $n^{2/3}$ bits removed from $B$ and $n^{2/3}$ bits of 0 appended to the least significant end of $A$. This is the beginning of the next layer. As the information is propagated upward, it may continue to form layers, which form in parallel (Figure \ref{fig:Mult_6}). Each time the information is propagated upwards to form the next layer, another $n^{2/3}$ bits of 0 are appended to the least significant end of $A$, and the least significant $n^{2/3}$ bits are removed from $B$.

\begin{figure}[htp]
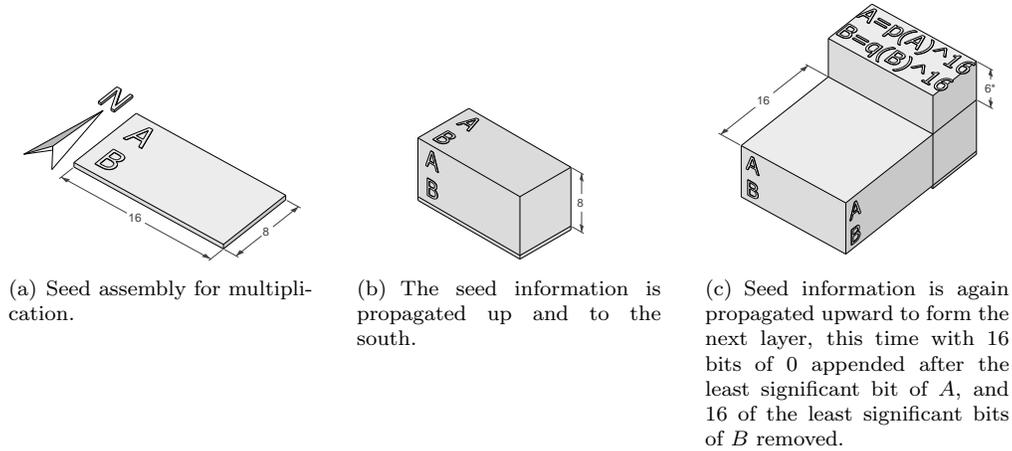

	\centering
	\subfigure[Seed assembly for multiplication.]{
	\label{fig:Mult_1}
	\includegraphics[width=1.5in]{images/Mult_1}
	}
	~~~
	\subfigure[The seed information is propagated up and to the south.]{
		\label{fig:Mult_2}
		\includegraphics[width=1.5in]{images/Mult_2}
	}
	~~~
	\subfigure[Seed information is again propagated upward to form the next layer, this time with 16 bits of 0 appended after the least significant bit of $A$, and 16 of the least significant bits of $B$ removed.]{
		\label{fig:Mult_3}
		\includegraphics[width=1.5in]{images/Mult_3}
	}

\caption{This series shows the propagation of information south for use in the formation of columns and up to form a new layer.}
\end{figure}

\begin{figure}[htp]
	\centering
	\subfigure[This partial assembly shows the input information being modified and propagated upward for the next layer, eastward for the formation of a new column, and southward to generate more columns.]{
		\label{fig:Mult_4}
		\includegraphics[width=1.4in]{}
	}
	~~~
	\subfigure[This partial assembly shows the beginning of the formation of the next layer. ]{
		\label{fig:Mult_5}
		\includegraphics[width=1.7in]{}
	}	
	~~~
	\subfigure[This partial assembly shows the continued formation of new layers, and the columns within those layers.]{
		\label{fig:Mult_6}
		\includegraphics[width=1.8in]{}
	}
\caption{This series shows the process of forming new columns and layers.}
\end{figure}

\paragraph{Part Two: Addition.}\label{sec:multiplication_construction_Addition}

While shifting bits between each layer, and between and within each column, the tile set simultaneously performs the additions. The addition algorithm here is based on Section \ref{sec:worstcase}. Recall that our multiplication algorithm involves a summation of $n$ numbers, $x_0+x_1+...+x_{n-1}$.  For any number $x_k$ within these $n$ numbers, $x_k=A b_k 2^k$.  An example of a two-number multiplication, which is required to attain $x_k$, is shown in Figure \ref{fig:Mult_example}. The five elements in the vector label are the current bit of A, the current bit of B, the predicted result without a carry in, the predicted result with a carry in, and the current bit result. The numbers are arranged in zigzag fashion.

After obtaining the result of each column, shown in Figure \ref{fig:Mult_8}, we copy the result of the first column, rotate this result to the south, and send it to its neighbor (Figure \ref{fig:Mult_9}). Note that we don't need A and B any more; we only copy the result. Using the reversed design of the rotate and copy box (Figure \ref{fig:Rotate_and_copy}), we merge the two inputs from two sides into one single output. Using the same principle shown in Section \ref{sec:worstcase} and Figure \ref{fig:Mult_example} we then sum all results from each column together, as shown in Figure \ref{fig:Mult_10} and Figure \ref{fig:Mult_11}.

\begin{figure}[htp]
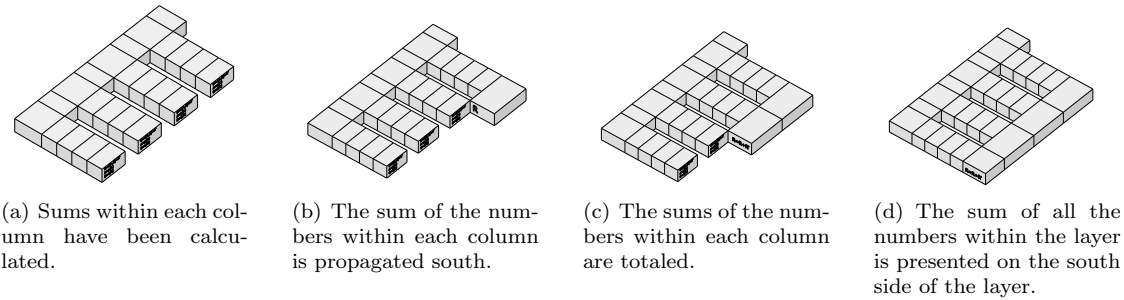

	\centering
	\subfigure[Sums within each column have been calculated.]{
		\label{fig:Mult_8}
		\includegraphics[width=1.2in]{images/Mult_8}
	}
	~~~
	\subfigure[The sum of the numbers within each column is propagated south.]{
		\label{fig:Mult_9}
		\includegraphics[width=1.2in]{images/Mult_9}
	}	
	~~~
	\subfigure[The sums of the numbers within each column are totaled.]{
		\label{fig:Mult_10}
		\includegraphics[width=1.2in]{images/Mult_10}
	}	
	~~~
	\subfigure[The sum of all the numbers within the layer is presented on the south side of the layer.]{
		\label{fig:Mult_11}
		\includegraphics[width=1.2in]{images/Mult_11}
	}
\caption{This series shows the process of summing all numbers within each layer.}
\end{figure}

The result of the summation in the first layer will be copied up to the second layer (Figure \ref{fig:Mult_14}), where it is added to the result from the second layer. This process is repeated until the last layer is reached and a final result is obtained, as shown in Figure \ref{fig:Mult_13} and Figure \ref{fig:Mult_14}.

\begin{figure}[htp]
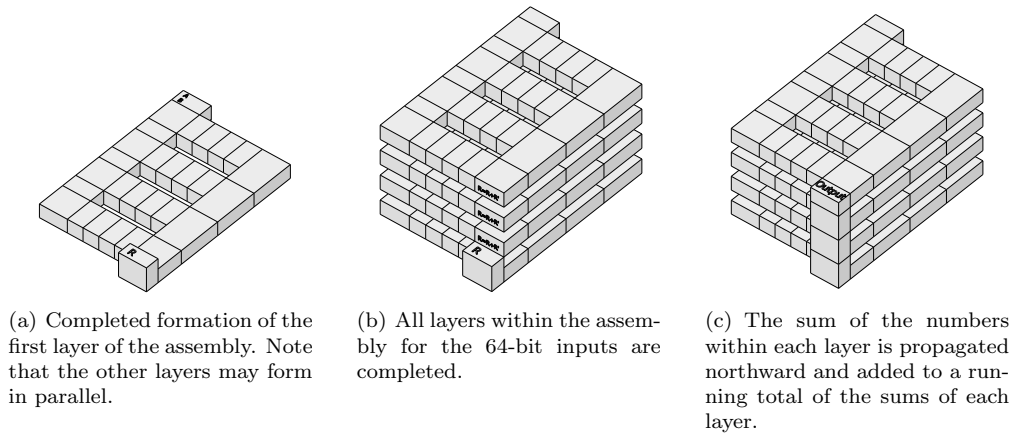

	\centering
	\subfigure[Completed formation of the first layer of the assembly. Note that the other layers may form in parallel.]{
		\label{fig:Mult_12}
		\includegraphics[width=1.5in]{images/Mult_12}
	}
	~~~
	\subfigure[All layers within the assembly for the 64-bit inputs are completed.]{
		\label{fig:Mult_13}
		\includegraphics[width=1.5in]{images/Mult_13}
	}
	~~~
	\subfigure[The sum of the numbers within each layer is propagated northward and added to a running total of the sums of each layer.]{
		\label{fig:Mult_14}
		\includegraphics[width=1.5in]{images/Mult_14}
	}
\caption{After summing the results of each layer, the multiplication result is displayed on the top surface of the assembly.}
\end{figure}

\subsection{Time Complexity.} \label{sec:multiplication_time}
\paragraph{$O(n^{\frac{5}{6}})$ Worst Case Run-Time.}
Our multiplication construction consists of copy (Figure \ref{fig:Rotate_and_copy}), shift (Figure \ref{fig:shift_example_1}), and addition (Figure \ref{fig:Mult_example}) operations.
The addition algorithm here is a 3D version of our $O(\sqrt{n})$ Worst Case Addition TAC (Section \ref{sec:worst_time}). In order to ease the analysis we will assume that each logical step of the algorithm happens in a step-by-step fashion even though parts of the algorithm are running concurrently.

\emph{Runtime For Copy Block.} The system copies the input number up and south using a copy block. The input for a copy block is an $O(\sqrt{n})$ by $O(\sqrt{n})$ rectangle. For each row running north to south a copy is generated independently, so we analyze the process in a longitudinal section and then expand it to the entire block. Figure \ref{fig:Rotate_and_copy} shows an example of a copy box copying one row up and to the left. The copy box starts the copy from the leftmost input. This tile is marked as ``r''. If a tile is marked as ``r'', the tile on its right will mark its north tile as ``r'' , otherwise, the tile will just copy itself to the north. After a tile receives an ``r'', the tile on its north will then send the label both up and to the left. After that, the tiles sending labels to the left will also copy the label from the bottom to the top, thus sending the labels up. Each tile under the ``r'' tiles is dependent on the tiles on its left and bottom, so the runtime for all tiles under the attached diagonal is $O(\sqrt{n})$. The tiles after the ``r'' tiles are dependent on the tiles on the right and bottom, so the total runtime of this rotate and copy box is $O(\sqrt{n})$.

\emph{Runtime For Shift Block.} The system shifts the input numbers in order to attach or remove bits using a Shift Block. An example of shifting one row of input to the left is shown in Figure \ref{fig:shift_example_1}.
In this example, bits are shifted to the left. So, from the left, the assembly starts to grow on each of the even tiles of the input at the same time, copies each of the even inputs to the second layer, then grows to the left and combines the input from each of the odd tiles in parallel. Then, the tiles on top of the odd input tiles contain both the information of the input tiles below, as well as the input tile to its right. This tile can then repeatedly be copied up and left, until it reaches the desired position.
The runtime of shifting the block will be the distance it moved plus 2. In our algorithm, the length of shifting the input for the next addition is no longer than the length of the input rectangle, so the runtime is at most $O(\sqrt{n})$.

\emph{Runtime For Addition Block.} Figure \ref{fig:Mult_example} shows an example of two-number multiplication, which is performed by stacking several Addition Blocks and Shift Blocks. Recall that in the $O(\log{n})$ Average Case, $\sqrt{n}$ Worst Case Addition (Section {\ref{sec:worstcase}), we achieved a runtime of $O(\sqrt{n})$  for a two number addition in 2D. We use the same principle to build the Addition Block in 3D. Instead of growing to the center of every two rows in 2D, the Addition Block gets its result on top of the input layer. For simplicity, we abandon the $O(\log{n})$ average case addition mechanism here. Therefore, the Addition Block will take $O(\sqrt{n})$ time to finish.

In the first logical step, the assembly repeatedly stacks up the rectangular Copy blocks and Shift blocks to form a vertical column, which is used to deploy the desired numbers to all of the $O(\sqrt[3]{n})$ layers. In this vertical column, each layer contains a Copy block, and between each layer there is a Shift block, so the runtime for this vertical column is $O(\sqrt{n})\times O(\sqrt[3]{n}) + O(\sqrt{n})\times O(\sqrt[3]{n})=O(n^{\frac{5}{6}})$.

In the second step, which occurs in each layer of the assembly, a north-south row starts to grow from the copy block in the vertical column, which is also the northwest corner of each layer, in order to deploy the desired numbers to all of the $O(\sqrt[3]{n})$ columns by horizontally stacking the copy blocks and shift blocks. This process takes $O(\sqrt[3]{n})$ copy blocks and $O(\sqrt[3]{n})$ shift blocks, so the runtime for this north-south row is $O(n^{\frac{5}{6}})$.

In the third logical step, starting from each copy box in the north-south running row in each layer, columns grow eastward in order to deploy numbers while simultaneously adding them. There will be $O(\sqrt[3]{n})$ additions and $O(\sqrt[3]{n})$ shift blocks, so the runtime for each east-west columns is $O(n^{\frac{5}{6}})$


Then, the last block of the first column in each layer will rotate its information to the south and propagate it to the end of the next column, merging the next result and adding them together, then repeating these moving and adding steps until the last result of the layer is obtained. Finally, the result in the bottom layer rotates up and the results from each layer are added together. The runtime of these moving and adding steps is $O(n^{\frac{5}{6}})$. Therefore, the total runtime of the multiplication process is $O(n^{\frac{5}{6}})$.

\begin{figure}[htp]
	\centering
	\includegraphics[width=4.5in]{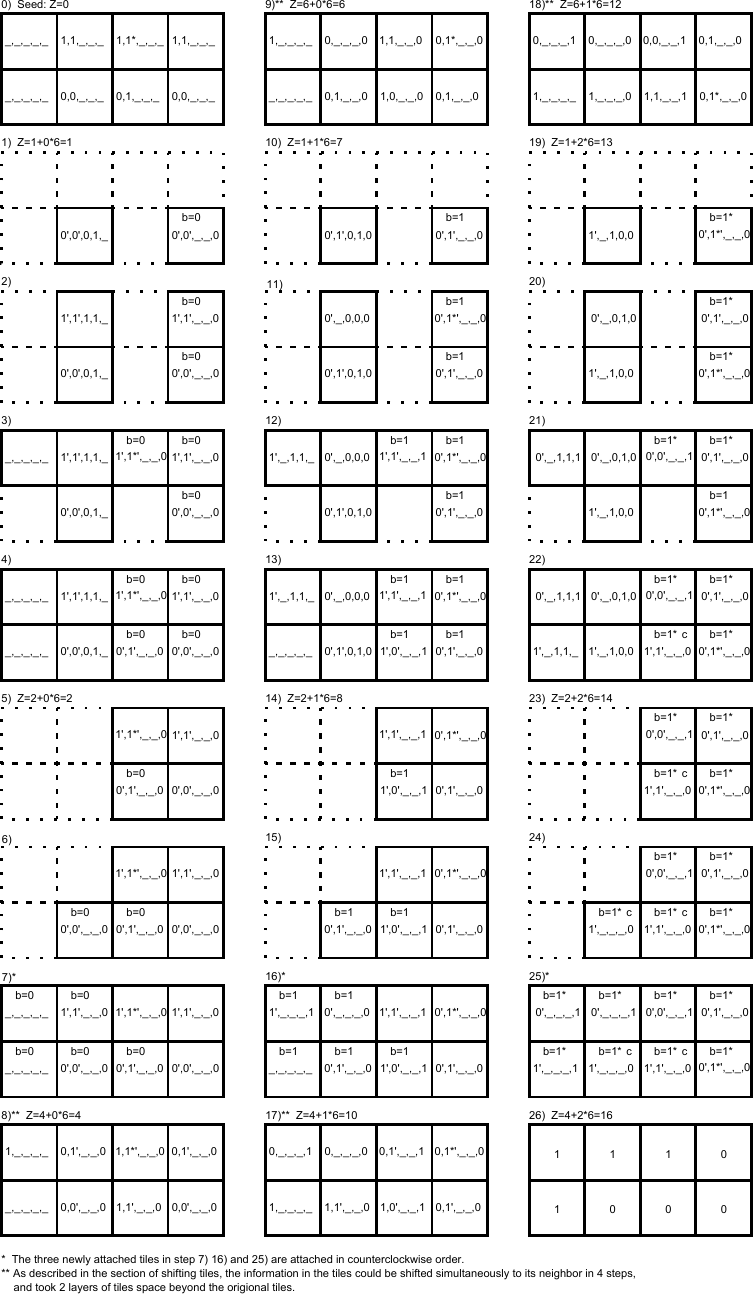}
	\caption{This sequence (Parts 1-26) demonstrates the product of $A$ and $B$, where $A=100110$ and $B=1011*10$. $Z$ denotes the number of layers shown in the figure. The effective bits of $B$ are the bits after and including the bit with the $*$ symbol. So the actual multiplication  is $100110\times110=11100100$. Parts 1-7 show the first addition block. Part 8 shows that A is been shifted one bit ahead after two layers, and Part 9 shows B has been shifted to another direction after another two layers.}
	\label{fig:Mult_example}
\end{figure}

\begin{figure}[htp]
	\centering
	\includegraphics[width=4in]{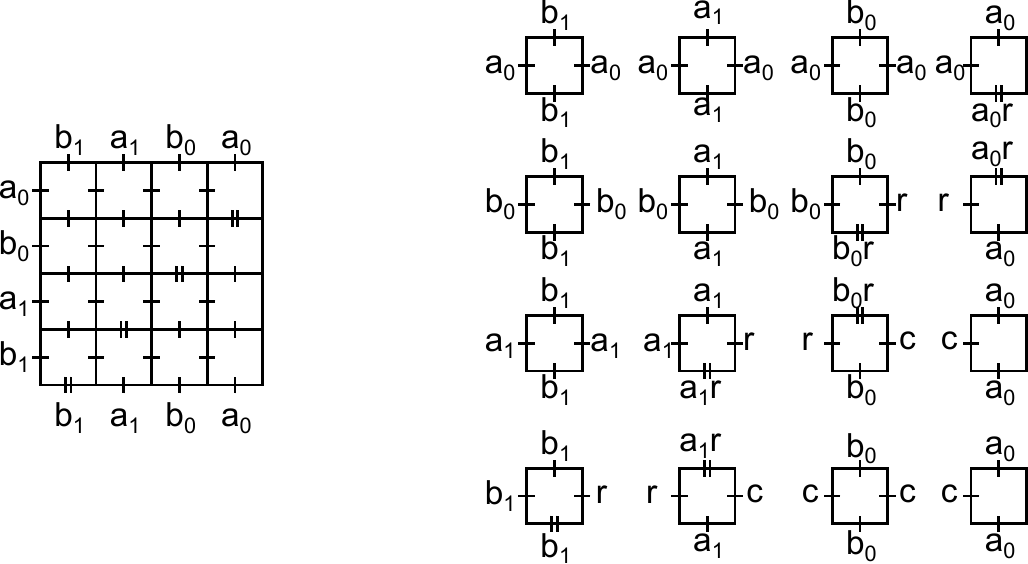}
	\caption{There exists a tile set to copy any input (bottom row) in two directions.}
	\label{fig:Rotate_and_copy}
	
\end{figure}

\begin{figure}[htp]
	\centering
	\subfigure[For any input (bottom row), there exists a tile set to shift the input one tile to the left in 4 parallel steps.]{
		\label{fig:shift_example_1}
		\includegraphics[width=2.4in]{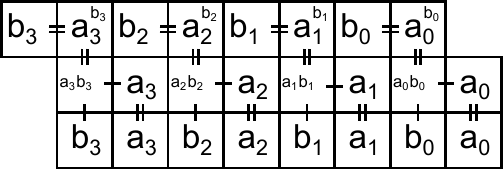}
	}
	~~~
	\subfigure[For any input (bottom row), there exists a tile set to shift the input n tiles to the left in n+2 parallel steps]{
		\label{fig:shift_example_2}
		\includegraphics[width=2.6in]{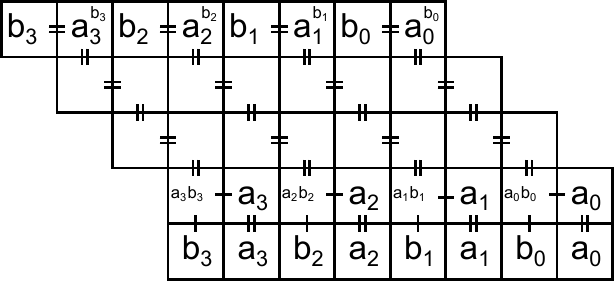}
	}
\caption{The above two figures show how an input can be bit-shifted.}
\end{figure}

\subsection{Correctness.} \label{sec:multiplication_correct}
Note that the multiplication of two n-bit numbers $A$ and $B$ is the summation of a string of numbers $C_1$ ... $C_n$, where $C_k=A*[B_k]*[2^k]$. The order in which the numbers are summed is as follows: numbers are summed starting on the first layer from from north to south, then, with the same order in each layer, from the bottom layer to the top layer. Within each layer, the west-to-east growing columns sum numbers beginning with the northmost column and ending with the southmost column. The head of each column contains $A*[2^i]$ and $b_i$, where $i$ is the number of columns times $\sqrt[3]{n}$, $1<=i<{n}$. Then, starting from the head in each column, shifting and adding are performed. There will be $\sqrt[3]{n}$ shifts performed for both $A$ and $B$. So, when the columns reach the last block, the number at the end of each column will be the $\sqrt[3]{n}$th number after the number in the head of each column, which means, at this point, all the numbers in the list $C$ have been deployed and there are no duplicates. Then, the summation of all blocks is displayed on the output block and is the product of the seed input.

\end{document}